\documentclass{svjour3}
\usepackage{cite}
\usepackage[utf8]{inputenc}
\usepackage{amssymb}
\usepackage{amsmath}
\usepackage{enumerate}
\usepackage{dsfont}
\usepackage{hyperref}
\usepackage{graphicx}
\usepackage{color}
\usepackage{tikz}
\usepackage{comment}
\usepackage{placeins}
\usepackage[]{algorithm2e}
\usetikzlibrary{calc}
\usetikzlibrary{shapes,snakes}

\newif\iflong
\longtrue



\newtheorem{Rule}{Rule}{\bfseries}{\rmfamily}
\newtheorem{Observation}{Observation}{\bfseries}{\rmfamily}
\newtheorem{Assumption}{Assumption}{\bfseries}{\rmfamily}

\newcommand{\Oh}{\mathcal{O}}
\newcommand{\badstuffhappens}{\ensuremath{\text{\rm NP} \subseteq \text{\rm coNP} / \text{\rm poly}}}
\DeclareMathOperator{\poly}{poly}


\pagestyle{plain}

\begin{document}

\title{On the Relation of Strong Triadic Closure and Cluster~Deletion\footnote{A preliminary version of this work appeared in \emph{Proceedings  of the 44th International Workshop on Graph-Theoretic Concepts in Computer Science (WG~'18)}  held in Cottbus, Germany in June 2018~\cite{GK18}. The long version contains full proofs of all statements, an improved running time for the kernelization, and a slightly extended dichotomy for $H$-free graphs.}}
\author{Niels Grüttemeier and Christian Komusiewicz}

\institute{Fachbereich für Mathematik und Informatik, Philipps-Universität Marburg, Germany\\
\email{\{niegru,komusiewicz\}@informatik.uni-marburg.de}}
\maketitle
\begin{abstract}
  We study the parameterized and classical complexity of two problems that are concerned with induced paths on three vertices, called~$P_3$s, in
  undirected graphs~$G=(V,E)$.  In \textsc{Strong Triadic Closure} we aim to label the
  edges in~$E$ as strong and weak such that at most~$k$ edges are weak and~$G$ contains no
  induced~$P_3$ with two strong edges. In \textsc{Cluster Deletion} we aim to destroy all
  induced~$P_3$s by a minimum number of edge deletions. We first show that \textsc{Strong
    Triadic Closure} admits a~$4k$-vertex kernel. Then, we study parameterization
  by~$\ell:=|E|-k$ and show that both problems are fixed-parameter tractable and 
  unlikely to admit a polynomial kernel with respect to~$\ell$. Finally, we give a
  dichotomy of the classical complexity of both problems on~$H$-free graphs for all~$H$ of
  order at most four.
\end{abstract}
\section{Introduction}
We study two related graph problems arising in social network
analysis and data clustering. Assume we are given a social network where
vertices represent agents and edges represent relationships between these agents,
and want to predict which of the relationships are important. In online social networks for
example, one could aim to distinguish between close friends  and spurious
relationships.  Sintos~and~Tsaparas~\cite{sintosL} proposed to use the notion of strong triadic
closure for this problem. This notion goes back to Granovetter's sociological work \cite{Granovetter73}. Informally, it is the assumption that if one agent has strong relationships
with two other agents, then these two other agents 
should have at least a weak relationship. The aim in the computational problem 
is then to label a maximum number of edges of the given
social network as strong while fulfilling this requirement. 
Formally, we are looking for an \emph{STC-labeling} defined as follows.

\begin{definition}  A \emph{labeling}
  $L=(S_L,W_L)$ of an undirected graph~$G=(V,E)$ is a partition of the edge set~$E$. The edges in~$S_L$ are
  called \emph{strong} and the edges in~$W_L$ are called \emph{weak}. A labeling
  $L=(S_L,W_L)$ is an \emph{STC-labeling} if there exists no pair of strong edges
  $\lbrace u, v \rbrace \in S_L$ and $\lbrace v, w \rbrace \in S_L$ such that
  $\lbrace u, w \rbrace \not \in E$.
\end{definition}
For any weak (strong) edge $\lbrace u, v \rbrace$ we refer to $u$ as a weak (strong)
neighbor of~$v$.  The computational problem described informally above is now the
following.
\begin{center}
	\begin{minipage}[c]{.9\linewidth}
          \textsc{Strong Triadic Closure (STC)}\\
          \textbf{Input}: An undirected graph~$G=(V, E)$ and an
          integer~$k \in \mathbb{N}$.\\
          \textbf{Question}: Is there an STC-labeling $L=(S_L,W_L)$ with $|W_L| \leq k$?
	\end{minipage}
\end{center}
We call an STC-labeling $L$ \emph{optimal} for a graph $G$, if the number
$|W_L|$ of weak edges is minimal. The
STC-labeling property can also be stated in terms of induced subgraphs: For every
induced~$P_3$, the path on three vertices, of~$G$ at most one edge is labeled strong.
Therefore, as observed previously~\cite{KNP17L}, \textsc{STC} is closely related to the problem of
destroying induced~$P_3$s by edge deletions. Since the graphs without an induced~$P_3$ are
exactly the disjoint union of cliques, this problem is usually formulated as follows.
\begin{center}
	\begin{minipage}[c]{.9\linewidth}
          \textsc{Cluster Deletion (CD)}\\
          \textbf{Input}: An undirected graph~$G=(V, E)$ and an
          integer~$k \in \mathbb{N}$.\\
          \textbf{Question}: Can we transform~$G$ into a \emph{cluster graph}, that is, a
          disjoint union of cliques, by at most~$k$ edge deletions?
	\end{minipage}
\end{center}
More precisely, any set~$D$ of at most~$k$ edge deletions that transform~$G$ into a
cluster graph, directly implies an STC-labeling~$(E\setminus D,D)$ with at most~$k$ weak
edges. There are, however, graphs~$G$ where the minimum number of weak edges in an
STC-labeling is strictly smaller than the number of edge deletions that are needed in
order to transform~$G$ into a cluster graph~\cite{KNP17L}. Due to the close relation
between the two problems, there are graph classes where any minimum-cardinality
solution for \textsc{Cluster Deletion} directly implies an optimal
STC-labeling~\cite{KNP17L}.

In this work, we study the parameterized complexity of \textsc{STC} and \textsc{Cluster
  Deletion} and the classical computational complexity of both problems in graph classes
that can be described by one forbidden induced subgraph of order at most four.

\paragraph{Known Results.} \textsc{STC} is NP-hard~\cite{sintosL}, even when restricted to
graphs with maximum degree four~\cite{KNP17L} or to~split graphs~\cite{konstantinidis_et_alL}.
In contrast, \textsc{STC} is solvable in polynomial time when the input graph is
bipartite~\cite{sintosL}, subcubic~\cite{KNP17L}, a proper interval
graph~\cite{konstantinidis_et_alL}, or a cograph, that is, a graph with no
induced~$P_4$~\cite{KNP17L}. \textsc{STC} can be solved in~$\Oh(1.28^k + nm)$ time and admits a polynomial kernel when parameterized by~$k$. These two results follow from a parameter-preserving reduction to \textsc{Vertex Cover}, which asks if it is possible to delete at most $k$ vertices of a given graph such that the remaining graph does not contain any edge. This parameter-preserving reduction \cite{sintosL} computes the so-called Gallai graph \cite{Sun91, Bang96} of the input graph and directly gives the above-mentioned running time bound. The existence of a kernel for parameter~$k$ is implied by two facts: First, \textsc{Vertex Cover} admits a polynomial kernel for the number $k$ of vertices to delete~\cite{Cyg+15,DF13}. Second, \textsc{Vertex Cover} is in NP and \textsc{STC} is NP-hard. Hence, the \textsc{Vertex Cover} instance of size~$\poly(k)$ which we obtain by first reducing from STC to \textsc{Vertex Cover} and then applying the kernelization can be transformed into an equivalent \textsc{STC} instance by a polynomial-time reduction. The STC instance then has size~$\poly(k)$.

\textsc{Cluster Deletion} is NP-hard~\cite{SST04}, even when restricted to graphs with maximum
degree four~\cite{KU12} or to $(2K_2,3K_1)$-free graphs~\cite{GHN13}, and solvable in polynomial time on cographs~\cite{GHN13} and 
in time~$O(1.42^k+m)$ on general graphs~\cite{BD11}.

\paragraph{Our Results.}We provide the first linear-vertex kernel for 
\textsc{STC}
parameterized by~$k$. 
More precisely,
we show that in $\Oh(nm)$ time we can reduce an arbitrary instance of
\textsc{STC} to an equivalent instance with at most~$4k$ vertices. 
\begin{table}[t]
  \centering
\caption{The parameterized complexity of \textsc{STC} and \textsc{Cluster Deletion} for parameters~$k$ and~$\ell:=|E|-k$.}

\begin{tabular}{l c c}
\hline

Parameter & \textsc{STC} & \textsc{Cluster Deletion}\\
\hline
$k$& ~~~$\Oh(1.28^k+nm)$-time algo~\cite{sintosL}~~~ &
~~~$\mathcal{O}(1.42^k+m)$-time algo~\cite{BD11}~~~\\

& $4k$-vertex kernel (Thm.~\ref{thm:4k})
& $4k$-vertex kernel~\cite{Guo09}
\\

\hline
$\ell$ & $\ell^{\mathcal{O}(\ell)}\cdot n$-time algo (Thm.~\ref{thm:l-stc})
& $\Oh(9^{\ell}\cdot \ell n )$-time algo (Thm.~\ref{thm:l-cd})\\
& No poly kernel (Thm.~\ref{thm:no-kernel-stc}) & No poly kernel~(Cor.~\ref{cor:no-kernel-cd})\\
\hline



\end{tabular}
\label{tab:pc-results}
\end{table}

\begin{table}[t]
  \caption{Complexity Dichotomy and correspondence of \textsc{STC} and \textsc{Cluster Deletion} on $H$-free graphs.}

  \centering
\setlength{\tabcolsep}{4mm}
\begin{tabular}{p{5cm} c c c}
\hline
 & \textsc{STC} & \textsc{CD}& correspondent\\
\hline
$H \in \{3K_1, K_4, 4K_1,C_4,2K_2,\text{claw},$ & NP-h & NP-h & NO \\
\qquad $ \;\; \text{co-claw},\text{co-diamond},\text{co-paw}\}$ & & \\
$H= \text{diamond}$&NP-h&NP-h& YES\\
$H\in \{K_3,P_3,K_2K_1,\text{paw},P_4\}$ & P & P & YES\\
\hline
\end{tabular}

\label{tab:h-free-results}
\end{table}

Second, we initiate the study of the parameterized complexity of \textsc{STC} and
\textsc{Cluster Deletion} with respect to the parameter~$\ell:=|E|-k$. Hence,
in~\textsc{STC} we are searching for an STC-labeling with at least~$\ell$ strong edges and
in \textsc{Cluster Deletion} we are searching for a cluster graph~$G'$ that is a subgraph
of~$G$ and that has at least~$\ell$ edges; we call these edges the \emph{cluster edges} of~$G'$. While we present fixed-parameter algorithms for both problems and the
parameter~$\ell$, we also show that, somewhat surprisingly, both problems do not admit a
polynomial kernel with respect to~$\ell$, unless~\badstuffhappens. Our
result is obtained by polynomial parameter transformations from \textsc{Clique}
parameterized by the size of a vertex cover of the input graph to~\textsc{Multicolored
  Clique} parameterized by the sum of the sizes of all except one color class to
\textsc{STC} and \textsc{Cluster Deletion} parameterized by~$\ell$. The
\textsc{Multicolored Clique} variant may be of independent interest as a suitable base
problem for polynomial parameter transformations. Table~\ref{tab:pc-results} gives
an overview of the parameterized complexity.

Finally, we extend the line of research studying the complexity of \textsc{Cluster
  Deletion}~\cite{GHN13} and \textsc{STC}~\cite{KNP17L} on $H$-free graphs where~$H$ is a
graph of order at most four. We present a complexity dichotomy between polynomial-time solvable
and NP-hard cases for all possibilities for~$H$. Moreover, we show for all such graphs~$H$
whether~\textsc{STC} and \textsc{Cluster Deletion} \emph{correspond} on~$H$-free graphs,
that is, whether every STC-labeling with at most $k$~weak edges implies a \textsc{Cluster
  Deletion} solution with at most~$k$ edge deletions. These results are shown in
Table~\ref{tab:h-free-results}.

\paragraph{Related Work.}
Independent from our work, Golovach et al.~\cite{Pinar} showed that \textsc{STC} parameterized by $\ell$ has no polynomial kernel unless \badstuffhappens, even when the input graph is a split graph. Moreover, they discuss the \textsc{Strong $F$-Closure} problem---a generalization of \textsc{STC}. For an arbitrary graph $F$, the \textsc{Strong $F$-Closure} problem asks for a labeling $L=(S_L,W_L)$ of a graph $G$ such that there is no induced subgraph~$F$ in $G$ that contains only strong edges under $L$ and where the number of strong edges is maximum under this property. Among other results, it is shown that~\textsc{Strong $F$-Closure} admits a polynomial kernel for the parameter~$k$~\cite{Pinar}.

Several further problems that are closely related to STC have been studied recently. Sintos and Tsaparas~\cite{sintosL} introduced \textsc{Multi-STC}, a generalization of
STC where the labeling is allowed to have~$c$ different strong colors and for each
induced~$P_3$ in~$G$, the edges of~$G$ must be labeled by different strong colors or one
of the edges must be labeled weak. This variant is harder than STC in the sense that for
all~$c\ge 3$, \textsc{Multi-STC} is NP-hard even if~$k=0$~\cite{BGKS19}. A further
approach for predicting strong relationships based on strong triadic closure
asks for an STC-labeling of~$G=(V,E)$ such that each community~$X_i\subseteq V$ of a set
of given communities~$\{X_1,\ldots,X_t\}$ is internally connected via strong links~\cite{RTG17}. This problem has  been shown to be a special case of a facility location problem~\cite{BTZ19}.

\section{Preliminaries}

\paragraph{Graph Theory.}
We consider undirected simple graphs $G=(V,E)$ where $n:= |V|$ and $m := |E|$. For any vertex
$v \in V$, the open neighborhood of $v$ is denoted by $N_G(v)$, the closed neighborhood is
denoted by $N_G[v]$. The set of vertices in $G$ which have a distance of exactly $2$ to $v$ is
denoted by $N^2_G (v)$. For any two vertex sets $V_1,V_2 \subseteq V$, we let
$E_G(V_1,V_2) := \lbrace \lbrace v_1, v_2 \rbrace \in E \mid v_1 \in V_1, v_2 \in V_2 \rbrace$
denote the set of edges between $V_1$ and $V_2$. For any vertex set $V' \subseteq V$, we
let $E_G(V') := E_G(V',V')$ be the set of edges between the vertices of $V'$. We may omit the
subscript~$G$ if the graph is clear from the context.
For any $V' \subseteq V$, $G[V'] := ( V', E(V') )$ denotes the \emph{subgraph of~$G$ induced
  by}~$V'$.  

A \emph{clique} in a graph $G$ is a set $K \subseteq V$ of vertices such
that~$G[K]$ is complete. A \emph{cut} $C= (V_1, V_2)$ is a partition of the vertex set into
two parts. The \emph{cut-set} $E_C := E_G(V_1,V_2)$ is the set of edges between $V_1$ and
$V_2$. A \textit{matching} $M \subseteq E$ is a set of pairwise disjoint edges. A matching in a graph~$G$ is
\emph{maximal} if adding any edge to $M$ does not give a matching and \emph{maximum}
if~$G$ has no larger matching.
A graph~$G$ is~\emph{$H$-free} if it does not contain an induced subgraph that is
isomorphic to the graph~$H$. The small graphs used in this work are shown in Fig.~\ref{fig:small-graphs}. For further background on graph classes and their definition via forbidden induced subgraphs refer to~\url{http://graphclasses.org} or to the monograph by Brandstädt et al.~\cite{BLS99}.

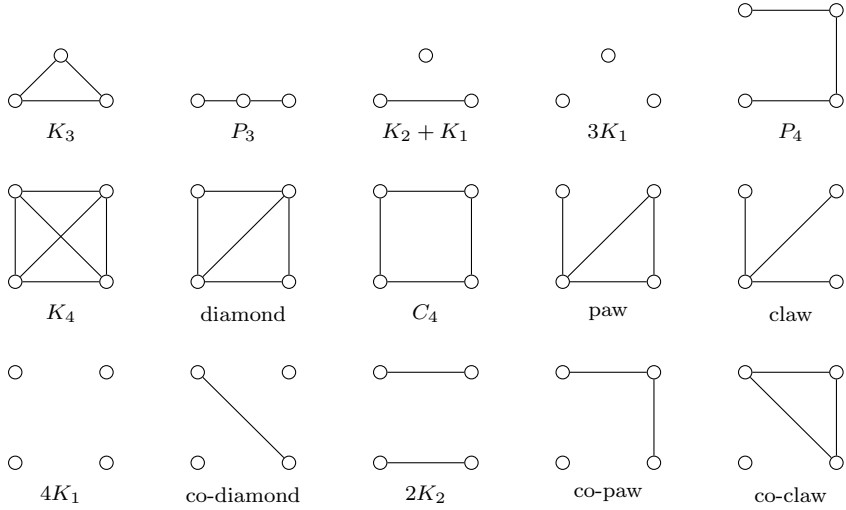
\begin{figure}[t]
  \begin{center}
    \begin{tikzpicture}[scale=0.6]
      \tikzstyle{knoten}=[circle,fill=white,draw=black,minimum size=5pt,inner sep=0pt]

      \begin{scope}

        \node[knoten] (1) at (0,0) {};
        \node[knoten] (2) at (2,0) {};
        \node[knoten] (3) at (1,1) {};
        \node at (1,-0.7) {$K_3$};
        
        \draw[-] (1) edge (2);
        \draw[-] (1) edge (3);
        \draw[-] (2) edge (3);
    \end{scope}
    \begin{scope}[xshift=4cm]

        \node[knoten] (1) at (0,0) {};
        \node[knoten] (2) at (1,0) {};
        \node[knoten] (3) at (2,0) {};
        \node at (1,-0.7) {$P_3$};
        
        \draw[-] (1) edge (2);
        \draw[-] (2) edge (3);
    \end{scope}

        \begin{scope}[xshift=8cm]

        \node[knoten] (1) at (0,0) {};
        \node[knoten] (2) at (1,1) {};
        \node[knoten] (3) at (2,0) {};
        \node at (1,-0.7) {$K_2+K_1$};
        
        \draw[-] (1) edge (3);
    \end{scope}

    \begin{scope}[xshift=12cm]

        \node[knoten] (1) at (0,0) {};
        \node[knoten] (2) at (1,1) {};
        \node[knoten] (3) at (2,0) {};
        \node at (1,-0.7) {$3K_1$};
        
    \end{scope}
    \begin{scope}[xshift=16cm]
        \node[knoten] (1) at (0,0) {};
        \node[knoten] (2) at (2,0) {};
        \node[knoten] (3) at (2,2) {};
        \node[knoten] (4) at (0,2) {};
        \node at (1,-0.7) {$P_4$};
        
        \draw[-] (1) edge (2);
        \draw[-] (2) edge (3);
        \draw[-] (3) edge (4);
    \end{scope}

    \begin{scope}[yshift=-4cm]
      \begin{scope}[xshift=4cm]
        \node[knoten] (1) at (0,0) {};
        \node[knoten] (2) at (2,0) {};
        \node[knoten] (3) at (2,2) {};
        \node[knoten] (4) at (0,2) {};
        \node at (1,-0.7) {diamond};
        
        \draw[-] (1) edge (2);
        \draw[-] (1) edge (3);
        \draw[-] (1) edge (4);
        \draw[-] (2) edge (3);
        \draw[-] (3) edge (4);
        
    \end{scope}

      \begin{scope}[xshift=8cm]

        \node[knoten] (1) at (0,0) {};
        \node[knoten] (2) at (2,0) {};
        \node[knoten] (3) at (2,2) {};
        \node[knoten] (4) at (0,2) {};
        \node at (1,-0.7) {$C_4$};
        
        \draw[-] (1) edge (2);
        \draw[-] (2) edge (3);
        \draw[-] (3) edge (4);
        \draw[-] (1) edge (4);
    \end{scope}

    \begin{scope}[xshift=12cm]
        \node[knoten] (1) at (0,0) {};
        \node[knoten] (2) at (2,0) {};
        \node[knoten] (3) at (2,2) {};
        \node[knoten] (4) at (0,2) {};
        \node at (1,-0.7) {paw};
        
        \draw[-] (1) edge (2);
        \draw[-] (1) edge (3);
        \draw[-] (2) edge (3);
        \draw[-] (1) edge (4);
        
    \end{scope}
    \begin{scope}[xshift=16cm]
        \node[knoten] (1) at (0,0) {};
        \node[knoten] (2) at (2,0) {};
        \node[knoten] (3) at (2,2) {};
        \node[knoten] (4) at (0,2) {};
        \node at (1,-0.7) {claw};
        
        \draw[-] (1) edge (2);
        \draw[-] (1) edge (3);
        \draw[-] (1) edge (4);
        
    \end{scope}

    \begin{scope}
        \node[knoten] (1) at (0,0) {};
        \node[knoten] (2) at (2,0) {};
        \node[knoten] (3) at (2,2) {};
        \node[knoten] (4) at (0,2) {};
        \node at (1,-0.7) {$K_4$};
        
        \draw[-] (1) edge (2);
        \draw[-] (1) edge (3);
        \draw[-] (1) edge (4);
        \draw[-] (3) edge (2);
        \draw[-] (2) edge (4);
        \draw[-] (3) edge (4);
        
    \end{scope}

  \end{scope}

  \begin{scope}[yshift=-8cm]

    \begin{scope}[xshift=4cm]
        \node[knoten] (1) at (0,0) {};
        \node[knoten] (2) at (2,0) {};
        \node[knoten] (3) at (2,2) {};
        \node[knoten] (4) at (0,2) {};
        \node at (1,-0.7) {co-diamond};
        
        \draw[-] (2) edge (4);
        
    \end{scope}

    \begin{scope}[xshift=8cm]
      \node[knoten] (1) at (0,0) {};
      \node[knoten] (2) at (2,0) {};
      \node[knoten] (3) at (2,2) {};
      \node[knoten] (4) at (0,2) {};
      \node at (1,-0.7) {$2K_2$};
      
      \draw[-] (1) edge (2);
      \draw[-] (3) edge (4);
    \end{scope}

    \begin{scope}[xshift=12cm]

        \node[knoten] (1) at (0,0) {};
        \node[knoten] (2) at (2,0) {};
        \node[knoten] (3) at (2,2) {};
        \node[knoten] (4) at (0,2) {};
        \node at (1,-0.7) {co-paw};
        
        \draw[-] (2) edge (3);
        \draw[-] (3) edge (4);
    \end{scope}
    \begin{scope}[xshift=16cm]
        \node[knoten] (1) at (0,0) {};
        \node[knoten] (2) at (2,0) {};
        \node[knoten] (3) at (2,2) {};
        \node[knoten] (4) at (0,2) {};
        \node at (1,-0.7) {co-claw};
        
        \draw[-] (2) edge (3);
        \draw[-] (2) edge (4);
        \draw[-] (3) edge (4);
    \end{scope}

    \begin{scope}
      \node[knoten] (1) at (0,0) {};
      \node[knoten] (2) at (2,0) {};
      \node[knoten] (3) at (2,2) {};
      \node[knoten] (4) at (0,2) {};
      \node at (1,-0.7) {$4K_1$};
      \end{scope}
   
  \end{scope}        
\end{tikzpicture}
\end{center}
\caption{The small graphs~$H$ considered in this work.}
\label{fig:small-graphs}
\end{figure}

\paragraph{Parameterized Algorithmics.}
In parameterized algorithmics~\cite{Cyg+15,DF13}, one analyzes the complexity of problems depending on the input size $n$ and a problem parameter~$k$. For a given problem, the aim is to obtain \emph{fixed-parameter tractable} (FPT) algorithms, which are algorithms with running time $f(k) \cdot \text{poly}(n)$ for some computable function~$f$. 

An important tool in the development of parameterized algorithms is \emph{problem kernelization}, which is a polynomial-time preprocessing by \emph{data reduction rules} yielding a \emph{problem kernel}. Herein, the goal is, given any problem instance~$I$ with parameter $k$, to produce an \emph{equivalent} instance~$I'$ with parameter $k'$ in polynomial time such that $k' \leq k$ and the size of $I'$ is bounded from above by some function $g$ only depending on $k$. The function $g$ is called the \emph{kernel size}. If $g$ is a polynomial, we say that the problem has a \emph{polynomial kernel}. The \emph{equivalence} of the instances is defined as follows: $(I,k)$ is a yes-instance if and only if $(I',k')$ is a yes instance. A reduction rule is \emph{safe} if it produces an equivalent instance. An instance is \emph{reduced} with respect to a set of data reduction rules if each of the data reduction rules has been exhaustively applied.

Some parameterized problems that are fixed-parameter tractable are unlikely to admit a polynomial kernel~\cite{Cyg+15,DF13}. Precisely, these problems do not admit a polynomial kernel unless \badstuffhappens. By using \emph{polynomial parameter transformations} we can transfer these kernel lower bounds to other problems~\cite{BodlaenderJK14, BodlaenderTY11}. A polynomial parameter transformation maps any instance $(I,k)$ of some parameterized problem $L$ in polynomial time to an equivalent instance $(I',k')$ of a parameterized problem $L'$ such that $k' \leq p(k)$ for some polynomial $p$.

\section{On Problem Kernelizations}
\label{sec:kernel}

We now discuss problem kernelizations for \textsc{STC} parameterized by~$k$ and $\ell$. First, we give a $4k$-vertex kernel and an $\mathcal{O}(\ell \cdot 2^{\ell})$-vertex kernel. Then, we show that there is no polynomial problem kernel for~$\ell$ unless \badstuffhappens.
An important concept for our kernelizations are \textit{weak cuts} which are defined as follows.

\begin{definition} \label{definition: Weak Cut}
Let~$L=(S_L, W_L)$ be an STC-labeling for a graph~$G=(V,E)$. A \emph{weak cut} for~$G$ under~$L$ is a cut~$C$ such that~$E_C \subseteq W_L$.
\end{definition}

\begin{proposition}
\label{proposition: Weak Cut}
Let~$L=(S_L, W_L)$ be an STC-labeling for a graph~$G=(V,E)$. If there is a weak cut $C=(V_1,V_2)$ with cut-set $E_C$, then there is an STC-labeling $L'=(S_{L'}, W_{L'})$ for $G'=(V, E \setminus E_C)$ such that $|S_{L'}| = |S_L|$.
\end{proposition}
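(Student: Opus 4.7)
The plan is to take $L'$ to be the obvious restriction of $L$: set $S_{L'}:=S_L$ and $W_{L'}:=W_L\setminus E_C$. Since $E_C\subseteq W_L$, this is a partition of the edge set of $G'=(V,E\setminus E_C)$, and the equality $|S_{L'}|=|S_L|$ holds by construction. So the only real content is to verify that $L'$ is still an STC-labeling of $G'$.

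The key observation is that if $\{u,v\}$ and $\{v,w\}$ are both strong under $L$, then their common endpoint $v$ prevents $u$ and $w$ from sitting on opposite sides of the cut $C=(V_1,V_2)$: whichever side contains $v$, one of the two edges would otherwise cross $C$ and thus lie in $E_C\subseteq W_L$, contradicting its being strong. Hence any two strong edges sharing an endpoint have their outer endpoints on the same side of~$C$.

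With this in hand, I would argue by contradiction. Suppose $L'$ is not an STC-labeling of~$G'$; then there are strong edges $\{u,v\},\{v,w\}\in S_{L'}=S_L$ with $\{u,w\}\notin E\setminus E_C$. Since $L$ is an STC-labeling of~$G$, we must have $\{u,w\}\in E$, so the missing edge was removed, i.e.\ $\{u,w\}\in E_C$. This places $u$ and $w$ on opposite sides of $C$, contradicting the observation above. Therefore no such pair exists and $L'$ satisfies the STC property on~$G'$.

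I don't anticipate a real obstacle: the whole statement is essentially the remark that deleting edges that are already labeled weak neither increases the set of induced $P_3$s that could be violated nor changes the set of strong edges, and the cut structure is used only to rule out the one pathological case in which a removed edge could turn a compliant pair of strong edges into a violating one.
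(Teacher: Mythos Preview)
Your proof is correct and follows essentially the same argument as the paper: define $L'=(S_L,W_L\setminus E_C)$, assume a violating pair of strong edges $\{u,v\},\{v,w\}$ in $G'$, deduce $\{u,w\}\in E_C$, and derive a contradiction from the fact that $v$ must lie on one side of the cut, forcing one of the two strong edges to lie in $E_C\subseteq W_L$. The only cosmetic difference is that you isolate the ``same side of the cut'' observation up front, while the paper folds it directly into the contradiction step.
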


\begin{proof}
  Define $L':=(S_L,W_L \setminus
  E_C)$.
  Obviously, $|S_{L'}| = |S_L|$ holds. It remains to show that $L'$ still satisfies the STC
  property. Assume there are edges $\lbrace u, v \rbrace, \lbrace v, w \rbrace \in S_{L}$ such that
  $\lbrace u,w \rbrace \not \in E \setminus E_C$. Since $L$ satisfies STC, we have
  $\lbrace u,w \rbrace \in E_C$. Without loss of generality we assume that $u \in V_1$
  and $w \in V_2$. The fact that there is still a path from $u$ to $v$ in
  $E \setminus E_C$ contradicts the property that $E_C$ is a cut-set. \qed
\end{proof}

\subsection{A $4k$-Vertex Kernel for Strong Triadic Closure} \label{4k Kernel Section}
We now show that \textsc{STC} parameterized by $k$ admits a kernel with at most $4 k$ vertices. In the kernelization, we make use of the concepts of \textit{critical cliques} and \textit{critical clique graphs} as introduced in \cite{Protti2009}. These concepts were also used for a kernelization for \textsc{Cluster Editing} which directly gives a $4k$-vertex kernel for \textsc{CD}, even though this is not claimed explicitly~\cite{Guo09}. 

\begin{definition}
A \emph{critical clique} in a graph $G=(V,E)$ is a clique $K$ where the vertices of $K$ all have the same neighbors in $V \setminus K$, and $K$ is maximal under this property.
\end{definition}

Obviously, for a given graph~$G=(V,E)$ there exists a partition~$\mathcal{K}$ of the vertex set~$V$ such that every~$K \in \mathcal{K}$ is a critical clique in~$G$. The critical clique graph is then defined as follows.

\begin{definition}
Given a graph $G=(V,E)$, let $\mathcal{K}$ be the collection of its critical cliques. The \emph{critical clique graph} $\mathcal{C}$ of~$G$ is the graph $(\mathcal{K},E_\mathcal{C})$ with
$\lbrace K_i, K_j \rbrace \in E_{\mathcal{C}} \Leftrightarrow \forall u \in K_i, v \in K_j: \lbrace u , v \rbrace \in E$ \text{.}
\end{definition}

For a critical clique $K$ we let $\mathcal{N}(K) := \bigcup_{K' \in N_\mathcal{C}(K)} K'$ denote the union of its neighbor cliques in the critical clique graph and $\mathcal{N}^2 (K) := \bigcup_{K' \in N_\mathcal{C}^2(K)} K'$ denote the union of the critical cliques at distance exactly two from~$K$.
The critical clique graph can be constructed in $\mathcal{O}(n+m)$ time~\cite{HsuL}. Note that the edges within a critical clique~$K$ are not part of any $P_3$. It is known that this kind of edges is labeled as \textit{strong} in every optimal solution for \text{STC} \cite{sintosL}.

In the following, we will distinguish between two types of critical cliques. We say that a critical clique~$K$ is \emph{open} if $\mathcal{N}(K)$ does not form a clique in $G$, and that~$K$~is \emph{closed} if $\mathcal{N}(K)$ forms a clique in $G$.
We will see that the number of vertices in open critical cliques is bounded for every yes-instance of STC. The main step of the kernelization is to delete large closed critical cliques. Before we give the concrete rule we provide two useful properties of closed critical cliques.
\begin{proposition} 
\label{proposition: Type 2 Cliques not connected}
If $K_1$ and $K_2$ are closed critical cliques, then $\lbrace K_1, K_2 \rbrace \not \in E_\mathcal{C}$.
\end{proposition}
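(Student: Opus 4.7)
The plan is to argue by contradiction, showing that if $K_1$ and $K_2$ are both closed and adjacent in $\mathcal{C}$, then $K_1 \cup K_2$ would form a single clique in $G$ whose vertices all have the same neighborhood outside $K_1 \cup K_2$. This would contradict the maximality in the definition of a critical clique (since either $K_1$ or $K_2$ could be enlarged by absorbing the other).

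First I would record the trivial observation that because $\{K_1,K_2\} \in E_\mathcal{C}$, every vertex of $K_1$ is adjacent to every vertex of $K_2$ in $G$, and hence $K_1 \cup K_2$ induces a clique in $G$. Moreover this edge in $\mathcal{C}$ means $K_2 \subseteq \mathcal{N}(K_1)$ and $K_1 \subseteq \mathcal{N}(K_2)$.

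The core step is to verify that the vertices of $K_1$ and those of $K_2$ share the same neighbors in $V \setminus (K_1 \cup K_2)$. Take any $u \in V \setminus (K_1 \cup K_2)$ with $u$ adjacent to some vertex of $K_1$. Then the critical clique $K'$ containing $u$ is adjacent to $K_1$ in $\mathcal{C}$, so $u \in \mathcal{N}(K_1)$; since $K_1$ is closed, $\mathcal{N}(K_1)$ is a clique in $G$ containing both $u$ and all of $K_2$, so $u$ is adjacent to every vertex of $K_2$. The symmetric argument, using that $K_2$ is closed, shows that every outside neighbor of $K_2$ is adjacent to every vertex of $K_1$. Consequently every vertex in $K_1\cup K_2$ has exactly the same neighbors in $V \setminus (K_1 \cup K_2)$.

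At this point $K_1 \cup K_2$ is a clique whose vertices share a common outside neighborhood, strictly extending $K_1$ (and $K_2$), which contradicts the maximality in the definition of a critical clique. I do not expect a real obstacle here; the only point that needs some care is translating the adjacency of a vertex $u \notin K_1 \cup K_2$ with a vertex of $K_i$ into membership of $u$ in $\mathcal{N}(K_i)$, which follows because all vertices of one critical clique have identical neighborhoods outside it, so $u$'s critical clique must be a neighbor of $K_i$ in $\mathcal{C}$.
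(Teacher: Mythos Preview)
Your proof is correct and follows essentially the same approach as the paper's: both argue by contradiction and exploit that closedness of $K_1$ forces any vertex in $\mathcal{N}(K_1)$ to be adjacent to all of $K_2\subseteq\mathcal{N}(K_1)$, ultimately contradicting the maximality of critical cliques. The only difference is organizational: the paper splits into the two cases $\mathcal{N}(K_1)\setminus K_2=\mathcal{N}(K_2)\setminus K_1$ and $\mathcal{N}(K_1)\setminus K_2\neq\mathcal{N}(K_2)\setminus K_1$, deriving a maximality contradiction in the first case and a closedness contradiction in the second, whereas you directly establish that the outside neighborhoods coincide and invoke only the maximality contradiction---a slightly more streamlined presentation of the same idea.
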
  

\begin{proof}
  Assume there is an edge $\lbrace K_1, K_2 \rbrace \in E_\mathcal{C}$. The inclusions
  $K_1 \subseteq \mathcal{N}(K_2)$ and $K_2 \subseteq \mathcal{N}(K_1)$ obviously hold.

  First, we consider the case
  \begin{align*}
    \mathcal{N}(K_1) \setminus K_2 = \mathcal{N}(K_2) \setminus K_1 \text{.}
  \end{align*}
  In this case, all vertices in $K_1 \cup K_2$ have the same neighbors in
  $V\setminus (K_1 \cup K_2)$, which is a contradiction to the maximality of $K_1$ and $K_2$
  since $K_1 \cup K_2$ forms a bigger critical clique.

  Second, we consider the case
  \begin{align*}
    \mathcal{N}(K_1) \setminus K_2 \neq \mathcal{N}(K_2) \setminus K_1 \text{.}
  \end{align*}
  Without loss of generality, assume that there exists a vertex
  $v \in \mathcal{N}(K_1) \setminus K_2$ with $v \not \in \mathcal{N}(K_2) \setminus
  K_1$.
  Then, for any $w \in K_2 $, the vertices $v$ and $w$ are contained in $\mathcal{N}(K_1)$ but
  not adjacent in $G$. This is a contradiction to the fact that $K_1$ is closed.~\qed
\end{proof}

The following proposition has been proven already for both types of critical cliques~\cite{konstantinidis_et_alL}. Since we only need it for closed critical cliques we provide here a simple proof for those.

\begin{proposition}
  \label{proposition: Neighbours of critical cliques}
Let $K$ be a closed critical clique, $v \in \mathcal{N}(K)$ and let $L=(S_L,W_L)$ be an optimal STC-labeling for $G$. Then $E(\lbrace v \rbrace, K) \subseteq S_L$ or $E(\lbrace v \rbrace, K) \subseteq W_L$.
\end{proposition}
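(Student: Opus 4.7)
The plan is to prove the proposition by a swapping argument: assume the labeling $L$ gives some edges from $v$ to $K$ the label strong and others weak, and then show that flipping all the weak edges to strong still yields a valid STC-labeling, contradicting optimality.

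Concretely, suppose for contradiction that there exist $u_1, u_2 \in K$ with $\{v,u_1\} \in S_L$ and $\{v, u_2\} \in W_L$. Define $L' := (S_L \cup \{\{v,u_2\}\}, W_L \setminus \{\{v,u_2\}\})$, which satisfies $|S_{L'}| = |S_L| + 1$. If $L'$ is an STC-labeling, this contradicts optimality of $L$. So the task reduces to verifying STC for $L'$. The only strong-strong pairs present in $L'$ but not in $L$ are those incident to $\{v,u_2\}$, so any new violation must be an induced $P_3$ on vertices $\{v, u_2, x\}$ in which $\{v, u_2\}$ and one other edge are both strong.

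I would split this into two cases according to which edge accompanies $\{v,u_2\}$ in the offending $P_3$. In the first case the other strong edge is $\{u_2, x\} \in S_L$ and we would need $\{v,x\} \notin E$. Here I would use the fact that $K$ is closed: $K$ is a clique, $\mathcal{N}(K)$ is a clique, and every vertex in $K$ is adjacent to every vertex in $\mathcal{N}(K)$ by definition of the critical clique graph; hence $K \cup \mathcal{N}(K)$ is a clique in $G$. Since $N(u_2) \subseteq K \cup \mathcal{N}(K)$ and $v \in \mathcal{N}(K)$, this forces $\{v,x\} \in E$, a contradiction. In the second case the other strong edge is $\{v,y\} \in S_L$ and we would need $\{u_2, y\} \notin E$. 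If $y \in K$ then $\{u_2,y\} \in E$ since $K$ is a clique, so $y \notin K$. But then, because $u_1$ and $u_2$ lie in the same critical clique and hence share all external neighbors, $\{u_1, y\} \notin E$ as well. This means $\{v,u_1\}, \{v,y\} \in S_L$ with $\{u_1,y\}\notin E$, contradicting that $L$ itself is an STC-labeling.

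The only mildly subtle point is the first case, where one must explicitly observe that closedness of $K$ upgrades from ``$\mathcal{N}(K)$ is a clique'' to ``$K\cup\mathcal{N}(K)$ is a clique'', so that every neighbor of a vertex in $K$ is automatically adjacent to every other vertex in $\mathcal{N}(K)$, in particular to $v$. Once this is in hand, both cases close immediately and the proposition follows.
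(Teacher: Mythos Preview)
Your proof is correct and follows essentially the same approach as the paper: both assume a mixed labeling on $E(\{v\},K)$, promote one weak edge to strong, and verify that no new strong $P_3$ appears, using that $K\cup\mathcal{N}(K)$ is a clique (closedness) for one case and the twin property of vertices in a critical clique for the other. The only cosmetic difference is the case split---you branch on which endpoint of the new strong edge is shared, while the paper branches on whether the adjacent edge lies in $E(K\cup\mathcal{N}(K))$ or in $E(\mathcal{N}(K),\mathcal{N}^2(K))$---but the underlying reasoning is identical.
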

\begin{proof}
  Assume~$L$ has a weak edge $\lbrace v, w_1 \rbrace \in E(\lbrace v \rbrace, K)$ and a strong
  edge $\lbrace v, w_2 \rbrace \in E(\lbrace v \rbrace, K)$. We define the following labeling
  $L^+ := ( S_L \cup \lbrace \lbrace v, w_1 \rbrace \rbrace, W_L \setminus \lbrace \lbrace v,
  w_1 \rbrace \rbrace)$
  for $G$. We prove that $L^+$ satisfies STC by showing that
  there is no strong $P_3$ containing $\lbrace v, w_1 \rbrace$. Let $e$ be an edge that shares
  exactly one endpoint with $\lbrace v, w_1 \rbrace$.
  
  \textbf{Case 1:} $e \in E(K \cup \mathcal{N}(K))$. Since $K$ is closed, $K \cup \mathcal{N}(K)$ forms a clique. Then~$\lbrace v, w_1 \rbrace$ and $e$ are edges between the vertices of a clique, so they do not form an induced $P_3$.
  
  \textbf{Case 2:} $e \in E(\mathcal{N}(K),\mathcal{N}^2(K))$. Since $w_1 \in K$ it holds that $e=\{v,u\}$ for some $u \in \mathcal{N}^2(K)$. Then $e$ is weak under
  $L^+$. Otherwise, $e$ and $\lbrace v, w_2 \rbrace$ form a strong $P_3$ under $L$ which
  contradicts the fact that $L$ is an STC-labeling. Hence,  $\lbrace v, w_1 \rbrace$ and $e$ do not
  form a strong $P_3$.

  The fact that $L^+$ is an STC-labeling for $G$ with $|W_L|-1$ weak edges contradicts the fact
  that $L$ is an optimal STC-labeling for $G$. \qed

\end{proof}

 Proposition~\ref{proposition: Neighbours of critical cliques} tells us, that for a closed critical cliques~$K$ the vertices in~$\mathcal{N}(K)$ are either strong neighbors or weak neighbors of every~$v \in K$. We will use this fact to conclude that, whenever a closed critical clique~$K$ is larger than the number of edges between its first and second neighborhood, $K \cup \mathcal{N}(K)$ forms a strong clique under an optimal STC-labeling. Moreover, we show that there is no strong edge connecting~$K \cup \mathcal{N}(K)$ with the rest of the graph. We formulate the reduction rule as follows.

\begin{Rule} \label{4k Kernel RedRule}
If~$G$ has a closed critical clique $K$ with $|K| > |E_G(\mathcal{N}(K),\mathcal{N}^2(K))|$, then remove $K$ and $\mathcal{N}(K)$ from $G$ and decrease $k$ by $|E_G(\mathcal{N}(K),\mathcal{N}^2(K))|$.
\end{Rule}

\begin{proposition}
Rule \ref{4k Kernel RedRule} is safe.
\end{proposition}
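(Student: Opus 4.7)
The plan is to show that $(G, k)$ is a yes-instance of \textsc{STC} if and only if $(G', k')$ is, where $G' := G[V \setminus (K \cup \mathcal{N}(K))]$, $E^* := E_G(\mathcal{N}(K), \mathcal{N}^2(K))$, $m^* := |E^*|$, and $k' := k - m^*$.

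For the easy direction ($\Leftarrow$), I would extend any STC-labeling $L'$ of $G'$ with $|W_{L'}| \leq k'$ to a labeling $L$ of $G$ by declaring all edges of $E_G(K \cup \mathcal{N}(K))$ strong and all edges of $E^*$ weak. Since $K$ is closed, $K \cup \mathcal{N}(K)$ induces a clique and therefore contains no induced~$P_3$; any induced~$P_3$ that meets both $K \cup \mathcal{N}(K)$ and its complement must traverse an edge of $E^*$ (as $K$ has no neighbors outside $K \cup \mathcal{N}(K)$), and such an edge is weak in $L$. Hence $L$ is a valid STC-labeling of $G$ with $|W_L| = |W_{L'}| + m^* \leq k$.

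For the reverse direction ($\Rightarrow$), I start from any STC-labeling $L$ of $G$ with $|W_L| \leq k$ and apply the same construction to obtain a labeling $L^*$ of $G$; the same $P_3$ case analysis then shows that $L^*$ is still STC. Once I prove $|W_{L^*}| \leq |W_L|$, the restriction $L^*|_{G'}$ is an STC-labeling of $G'$ with at most $|W_L| - m^* \leq k - m^*$ weak edges, as required. Rearranging the definition of $L^*$ yields
\[ |W_{L^*}| - |W_L| \;=\; m^* - |W_L \cap E^*| - |W_L \cap E_G(K \cup \mathcal{N}(K))|, \]
so it suffices to prove that $|W_L \cap E_G(K, \mathcal{N}(K))| + |W_L \cap E^*| \geq m^*$.

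To establish this inequality I let $V_{\mathrm{str}} := \{v \in \mathcal{N}(K) : \{v,x\} \in S_L \text{ for some } x \in \mathcal{N}^2(K)\}$. For any $v \in V_{\mathrm{str}}$ with witness $x$, and for any $u \in K$, the vertices $x, v, u$ induce a $P_3$ (using $\{u,x\} \notin E$, which holds because $K$ has no neighbors in $\mathcal{N}^2(K)$) on which $\{v,x\}$ is strong, so $\{v,u\} \in W_L$; Proposition~\ref{proposition: Neighbours of critical cliques} then forces all of $E(\{v\}, K)$ to be weak, giving $|W_L \cap E_G(K, \mathcal{N}(K))| \geq |V_{\mathrm{str}}| \cdot |K|$. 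Combined with the trivial bound $|S_L \cap E^*| \leq m^*$ and the rule's hypothesis $|K| > m^*$, this yields $|V_{\mathrm{str}}| \cdot |K| \geq |S_L \cap E^*|$ (both sides are zero when $V_{\mathrm{str}} = \emptyset$, and otherwise $|V_{\mathrm{str}}| \cdot |K| \geq |K| > m^* \geq |S_L \cap E^*|$), which rearranges into the desired inequality. The main obstacle is exactly this counting step: the hypothesis $|K| > m^*$ is the precise lever that lets the weak edges forced by a nonempty $V_{\mathrm{str}}$ pay for flipping every $E^*$ edge to weak at no net cost in $|W|$.
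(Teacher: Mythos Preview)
Your proof is correct. Both directions are sound; in particular, your counting argument in the reverse direction works: if $V_{\mathrm{str}} = \emptyset$ then all of $E^*$ is already weak, and if $V_{\mathrm{str}} \neq \emptyset$ then the $|V_{\mathrm{str}}|\cdot|K| \ge |K| > m^*$ weak edges you force between $K$ and $V_{\mathrm{str}}$ already dominate $|S_L \cap E^*|$.

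One small remark: your appeal to Proposition~\ref{proposition: Neighbours of critical cliques} is superfluous and, strictly speaking, inapplicable, since that proposition assumes $L$ is \emph{optimal} while you work with an arbitrary STC-labeling. You do not need it, however: the preceding sentence already establishes $\{v,u\}\in W_L$ for every $u\in K$ directly from the induced $P_3$ on $x,v,u$, so just drop the reference.

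The paper's argument differs in execution. It passes first to an \emph{optimal} STC-labeling $L$ and shows by a local exchange that the cut $(K\cup\mathcal{N}(K),\, V\setminus(K\cup\mathcal{N}(K)))$ is weak under $L$: if some $v\in\mathcal{N}(K)$ had a strong neighbor in $\mathcal{N}^2(K)$, relabeling $E(\{v\},K)$ as strong and $E(\{v\},\mathcal{N}^2(K))$ as weak would strictly decrease $|W_L|$, contradicting optimality. It then invokes the weak-cut proposition (Proposition~\ref{proposition: Weak Cut}) to transfer the labeling to $G'$. Your route avoids both the detour through optimality and Proposition~\ref{proposition: Weak Cut} by relabeling \emph{all} the relevant edges at once and verifying via a global count that this does not increase $|W|$; this is a bit more self-contained. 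The underlying insight---that a single strong edge across the cut forces $|K|>m^*$ weak edges into $E(K,\mathcal{N}(K))$---is the same in both proofs.
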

 
\begin{proof}
Let $K$ be a closed critical clique in $G$ with $|K| > |E_G(\mathcal{N}(K),\mathcal{N}^2(K))|$ and let~$G'$ be the reduced graph after deleting $K$ and $\mathcal{N}(K)$ from $G$. We show that~$G$ has an STC-labeling $L=(S_L,W_L)$ with $|W_L| \leq k$ if and only if~$G'$ has an STC-labeling $L'=(S_{L'},W_{L'})$ with $|W_{L'}| \leq k - |E_G(\mathcal{N}(K),\mathcal{N}^2(K))|$.

First, let $L'=(S_{L'},W_{L'})$ be an STC-labeling for $G'$ such that $|W_{L'}| \leq k - |E_G(\mathcal{N}(K),\mathcal{N}^2(K))|$. We define a labeling $L=(S_L,W_L)$ with $|W_L| \leq k$ for $G$ by setting
\begin{align*}
S_L := S_{L'} \cup E_G(K \cup \mathcal{N}(K)) \text{ and } W_L := W_{L'} \cup E_G(\mathcal{N}(K),\mathcal{N}^2(K)) \text{.}
\end{align*}
It remains to show that $L$ is an STC-labeling. Since the edges in $S_{L'}$ do not have a common endpoint with the edges in $E_G(K \cup \mathcal{N}(K))$ it suffices to show that there is no induced $P_3$ containing two edges $e_1, e_2 \in S_{L'}$ or~$e_1, e_2 \in E_G(K \cup \mathcal{N}(K))$. If~$e_1, e_2 \in S_{L'}$, then the edges do not form a strong $P_3$ since $L'$ is an STC-labeling. If $e_1, e_2 \in E_G(K \cup \mathcal{N}(K))$, then $e_1$ and $e_2$ are edges between vertices of a clique since $K$ is a closed critical clique. Hence, $e_1$ and~$e_2$ do not form an induced $P_3$. Since there is no strong $P_3$ under $L$, it follows that $L$ is an STC-labeling with~$|W_L| \leq k$.

Conversely, let $L=(S_L,W_L)$ be an optimal STC-labeling for $G$ with~$|W_L| \leq k$. We prove the safeness by using Proposition \ref{proposition: Weak Cut}. To this end, we show that $C = (K \cup \mathcal{N}(K), V \setminus (K \cup \mathcal{N}(K)))$ is a weak cut under~$L$.

Assume towards a contradiction that there is a vertex $v \in \mathcal{N}(K)$ that has a strong neighbor $w \in \mathcal{N}^2(K)$. Then, for each $u \in K$, the edge $\lbrace u, v \rbrace$ is weak under~$L$. Otherwise, $\lbrace u, v \rbrace$ and $\lbrace v, w \rbrace$ would form a strong $P_3$, which contradicts the fact that $L$ is an STC-labeling. Then, we have exactly $|K|$ weak edges in $E_G(\lbrace v \rbrace, K)$ and at most $|E_G(\mathcal{N}(K),\mathcal{N}^2(K))|$ strong edges in $E_G(\lbrace v \rbrace, \mathcal{N}^2(K))$. We define a new labeling $L^+ = (S_{L^+},W_{L^+})$ by
\begin{align*}
S_{L^+} &:= S_L \cup E_G(\lbrace v \rbrace, K) \setminus  E_G(\lbrace v \rbrace, \mathcal{N}^2(K)) \text{,}\\
W_{L^+} &:= W_L \cup E_G(\lbrace v \rbrace, \mathcal{N}^2(K)) \setminus E_G(\lbrace v \rbrace, K) \text{.}
\end{align*}
From $|V(K)|>|E_G(\mathcal{N}(K),\mathcal{N}^2(K))|$ we get that $|W_{L^+}| < |W_L|$. It remains to show that $L^+$ is an STC-labeling, which contradicts the fact that $L$ is an optimal STC-labeling.

Since we add edges $\lbrace u, v \rbrace$ with $u \in K$ to $S_{L^+}$ we need to show that no such edge is part of a strong $P_3$ under $L^+$. Let $(\lbrace u, v \rbrace, e)$ with $u \in K$ be a pair of edges that share exactly one endpoint. Consider the case $e \in E(\mathcal{N}(K),\mathcal{N}^2(K))$. It follows that $e \in W_L^+$ by the construction of $L^+$. Hence, $\lbrace u,v \rbrace$ and $e$ do not form a strong $P_3$ under $L^+$. Otherwise, $e \in E(K \cup \mathcal{N}(K))$. Then~$\{u,v\}$ and~$e$ do not form an induced~$P_3$ since~$K\cup \mathcal{N}(K)$ is a clique by the definition of closed critical cliques.

Since there is no strong $P_3$ under $L^+$, it follows that $L^+$ is an STC-labeling. In combination with the fact that $|W_{L^+}| < |W_L|$, we conclude that $L^+$ is an STC-labeling for $G$ with fewer weak edges than $L$ which contradicts the fact that $L$ is an optimal STC-labeling. This contradiction implies that there is no vertex in~$\mathcal{N}(K)$ that has a strong neighbor in~$\mathcal{N}^2(K)$ under~$L$. Consequently, $C = (K \cup \mathcal{N}(K), V \setminus (K \cup \mathcal{N}(K)))$ is a weak cut under $L$. By using Proposition \ref{proposition: Weak Cut}, we conclude that there exists an STC-labeling $L'=(S_{L'},W_{L'})$ with $|W_{L'}| \leq k - |E_G(\mathcal{N}(K),\mathcal{N}^2(K))|$ in $G'$, proving the safeness of Rule~1. \qed

\end{proof}

\begin{proposition} \label{proposition exhaustive runningtime 4k kernel}
Rule \ref{4k Kernel RedRule} can be applied exhaustively in $\mathcal{O}(n \cdot m)$ time.
\end{proposition}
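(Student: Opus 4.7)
The plan is to bound the running time of exhaustive applications of Rule~\ref{4k Kernel RedRule} by combining a bound on the number of applications with a bound on the per-application cost. First, I would observe that each application of Rule~\ref{4k Kernel RedRule} deletes the vertices of $K \cup \mathcal{N}(K)$, where $|K| \geq 1$, so every application strictly decreases the vertex count. Consequently, the rule is applied at most $n$ times before the instance is reduced, and it suffices to show that one round can be executed in $\mathcal{O}(n+m)$ time.

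To implement a single round, I would (re)compute the critical clique graph $\mathcal{C}$ of the current graph $G$ in $\mathcal{O}(n+m)$ time via the algorithm of Hsu and Ma~\cite{HsuL}, then iterate through the critical cliques and, for each~$K$, check closedness and, if $K$ is closed, evaluate the size condition of Rule~\ref{4k Kernel RedRule}. Closedness of $K$ is equivalent to $N_\mathcal{C}(K)$ inducing a clique in $\mathcal{C}$, which can be verified by inspecting the pairwise adjacencies of the critical cliques in $N_\mathcal{C}(K)$. The edge count $|E_G(\mathcal{N}(K), \mathcal{N}^2(K))|$ is obtained from a single scan of the adjacency lists of the vertices in $\mathcal{N}(K)$, computing for each such vertex $v$ the number of its neighbors that do not lie in $K \cup \mathcal{N}(K)$. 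Once an applicable $K$ is encountered, the rule is applied by deleting $K \cup \mathcal{N}(K)$ and decreasing $k$ by the precomputed edge count; these bookkeeping operations clearly fit into the $\mathcal{O}(n+m)$ budget.

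The main obstacle I anticipate is arguing that the total cost of inspecting all critical cliques within a single round stays at $\mathcal{O}(n+m)$ rather than $\mathcal{O}(n \cdot m)$. To this end, I would exploit the fact that the critical cliques partition $V$ and that, by Proposition~\ref{proposition: Type 2 Cliques not connected}, distinct closed critical cliques are pairwise non-adjacent in $\mathcal{C}$, so that the adjacency-list scans performed when testing closedness and the edge count for different closed critical cliques can be charged to essentially disjoint portions of $G$. Combining the resulting $\mathcal{O}(n+m) = \mathcal{O}(m)$ per-round cost with the at most $n$ rounds then yields the claimed $\mathcal{O}(n \cdot m)$ total running time.
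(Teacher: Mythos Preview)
Your high-level plan (bound the number of rounds by~$n$ and recompute everything each round) is sound, but the crucial claim that a single round costs $\mathcal{O}(n+m)$ has a real gap, and the fix you sketch via Proposition~\ref{proposition: Type 2 Cliques not connected} does not go through.

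The problem is the closedness test. Deciding for every critical clique $K$ whether $N_\mathcal{C}(K)$ induces a clique in $\mathcal{C}$ is exactly the problem of listing the simplicial vertices of $\mathcal{C}$, and no $\mathcal{O}(n+m)$ algorithm for this on general graphs is known. Your charging argument via Proposition~\ref{proposition: Type 2 Cliques not connected} fails for two reasons. First, you must inspect \emph{every} critical clique to decide which ones are closed, not only the closed ones, so a statement about closed cliques alone cannot bound the total scan cost. Second, even restricted to closed cliques the neighbourhoods are not disjoint: Proposition~\ref{proposition: Type 2 Cliques not connected} only says two closed cliques are non-adjacent, but an \emph{open} clique may lie in $\mathcal{N}(K_1)\cap\mathcal{N}(K_2)$ (take a $P_3$ in~$\mathcal{C}$ with two closed endpoints sharing the middle vertex). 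So the ``essentially disjoint portions'' picture is incorrect, and with a naive closedness check each round costs up to $\mathcal{O}(nm)$, giving $\mathcal{O}(n^2 m)$ overall.

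The paper circumvents this by \emph{not} recomputing from scratch. It spends $\mathcal{O}(nm)$ once to compute, for every edge $\{u,v\}$, the deficit $d^{\{u,v\}}_u=|\{w\neq u:\{v,w\}\in E,\ \{u,w\}\notin E\}|$; a critical clique $K$ is closed iff $d^{\{u,v\}}_u=0$ for all $v\in K$, $u\in\mathcal{N}(K)$. These deficits are then maintained incrementally: each deleted edge causes $\mathcal{O}(n)$ update work, and since at most $m$ edges are ever deleted, all updates together cost $\mathcal{O}(nm)$. The amortisation is therefore over deleted \emph{edges}, not over rounds, which is what makes the bound work.
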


\begin{proof} We describe how to apply Rule \ref{4k Kernel RedRule} exhaustively in four steps.

\textbf{Step 1:} As a first step, we compute the critical clique graph $\mathcal{G}$ of $G$ and save the size $|K|$ for each critical clique $K$. This can be done in $\mathcal{O}(n+m)$ time~\cite{HsuL}.

\textbf{Step 2:} Next, we mark the closed critical cliques. To this end, we define \emph{deficit values} $d^{\{u,v\}}_u$ and $d^{\{u,v\}}_v$ for every edge $\{u,v\} \in E$ by 
\begin{align*}
d^{\{u,v\}}_u := |\{w \in V \setminus \{u\} \mid \{v,w\} \in E \text{ and } \{u,w\} \not \in E \}| \text{.}
\end{align*}
The deficit values can be computed in $\mathcal{O}(n \cdot m)$ time. Observe that a critical clique $K \in \mathcal{K}$ is closed if and only if for every $v \in K$ and $u \in \mathcal{N}(K)$ it holds that $d^{\{u,v\}}_u=0$.

%

\textbf{Step 3:} Now, we mark those closed critical cliques $K$ that satisfy $|K| > |E_G(\mathcal{N}(K),\mathcal{N}^2(K))|$. We compute the size of $\mathcal{N}(K)$ by $|\mathcal{N}(K)|=\deg(v)-|K|+1$ for some arbitrary $v \in K$. The value of $|E_G(\mathcal{N}(K),\mathcal{N}^2(K))|$ can be computed by evaluating the sum
\begin{align*}
\sum_{u \in \mathcal{N}(K)} (\deg(u) - |K|-|\mathcal{N}(K)|+1) \text{.}
\end{align*}
This can be done in $\mathcal{O}(n)$ time.

\textbf{Step 4:} Finally, we apply Rule \ref{4k Kernel RedRule} on every critical clique that was identified in the previous steps. Afterwards, we update the deficit values in the following way. Let $e$ be an edge that was deleted by the application of Rule \ref{4k Kernel RedRule}. Then, at most one of the endpoints of $e$ remains in the graph after the application. Let $v$ be this endpoint. We update the deficit values $d^{\{u,v\}}_u$ for every $u \in N(v)$ that was not deleted. For a fixed edge $e$, this can be done in $\mathcal{O}(n)$ time.

Afterwards, we re-identify the closed critical cliques. For each critical clique~$K$ whose vertices are incident with a deleted edge $e$, we need to check whether it satisfies $d^{\{u,v\}}_u=0$ for every $v \in K$ and $u \in \mathcal{N}(K)$ and mark it as closed. If two marked cliques $K_1$ and $K_2$ are adjacent in $\mathcal{G}$, we merge them to a critical clique consisting of the vertices in $K_1 \cup K_2$, since closed critical cliques are not adjacent in $\mathcal{G}$ by Proposition \ref{proposition: Type 2 Cliques not connected}. This can be done in $\mathcal{O}(n)$ time.

We repeat Steps 3 and 4 until no more application of Rule \ref{4k Kernel RedRule} is possible. Since we can delete at most $m$ edges, the overall running time of those steps is~$\mathcal{O}(n \cdot m)$. \qed
\end{proof}

\begin{theorem}
    \label{thm:4k}
    \textsc{STC} admits a~$4k$-vertex-kernel, which can be computed in $\mathcal{O}(n \cdot m)$-time.
\end{theorem}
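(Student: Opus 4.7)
The proof combines the safeness of Rule~\ref{4k Kernel RedRule} with the $\Oh(nm)$-time bound for its exhaustive application (both already established above) and a vertex-count argument for reduced yes-instances. The plan is to show that, after exhaustive application of Rule~\ref{4k Kernel RedRule}, every yes-instance $(G,k)$ satisfies $|V(G)|\le 4k$; the running time then follows directly from Proposition~\ref{proposition exhaustive runningtime 4k kernel}.

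To this end, I would fix any STC-labeling $L=(S_L,W_L)$ of the reduced graph with $|W_L|\le k$ and partition $V(G)$ into its critical cliques $K_1,\dots,K_r$, bounding each $|K_i|$ locally in terms of a set of weak edges. For an \emph{open} $K$, a non-adjacent pair $u,v\in\mathcal{N}(K)$ makes $u$-$w$-$v$ an induced~$P_3$ for every $w\in K$, so at least one of $\{u,w\},\{w,v\}$ is weak and these $|K|$ weak edges are pairwise distinct with an endpoint in $K$. For a \emph{closed} $K$, reducedness gives $|K|\le|E_G(\mathcal{N}(K),\mathcal{N}^2(K))|$; by Proposition~\ref{proposition: Neighbours of critical cliques} either some $v\in\mathcal{N}(K)$ satisfies $E(\{v\},K)\subseteq W_L$ (so $K$ can be charged to $|K|$ weak edges incident to $K$), or every edge of $E(K,\mathcal{N}(K))$ is strong, in which case avoiding a strong~$P_3$ with an endpoint in $K$ forces every edge of $E_G(\mathcal{N}(K),\mathcal{N}^2(K))$ to be weak, giving at least $|K|$ weak ``crossing'' edges around~$K$.

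The main obstacle is the amortization, since a single weak edge could in principle be claimed by several critical cliques across these cases. The plan is to distribute, for each $\{x,y\}\in W_L$, a budget of four units among the critical cliques that use $\{x,y\}$ in one of the bounds above, and then to verify, using the definition of critical cliques together with Proposition~\ref{proposition: Type 2 Cliques not connected} (which ensures that closed critical cliques are pairwise non-adjacent in the critical clique graph), that each weak edge is claimed by at most four critical cliques. Summing over all $K_i$ then yields $|V(G)|=\sum_i|K_i|\le 4|W_L|\le 4k$, completing the proof.
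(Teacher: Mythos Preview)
Your approach is essentially the paper's argument, just reorganized into a single ``four units per weak edge'' budget instead of the paper's split into $2k$ vertices in open critical cliques plus $2k$ vertices in closed ones. The paper handles closed cliques via an explicit map $\Phi:\mathbb{K}\to W_L$ with $|\Phi^{-1}(e)|\le 2$, built exactly along your Case~A/Case~B dichotomy.

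Two points to tighten. First, you must take $L$ to be an \emph{optimal} STC-labeling, not an arbitrary one with $|W_L|\le k$: Proposition~\ref{proposition: Neighbours of critical cliques}, which you invoke for the closed-clique dichotomy, is stated only for optimal labelings. Second, your amortization sketch leans on Proposition~\ref{proposition: Type 2 Cliques not connected}, but that alone does not bound the Case~B charges. The missing observation is that any vertex can lie in $\mathcal{N}(K)$ for at most one Case~B closed clique~$K$: if $u\in\mathcal{N}(K_1)\cap\mathcal{N}(K_2)$ with both $K_i$ in Case~B, then $u$ has strong edges into both, and since $K_1,K_2$ are non-adjacent (Proposition~\ref{proposition: Type 2 Cliques not connected}) this yields a strong $P_3$. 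With this, a weak edge $\{x,y\}$ receives at most two Case~B charges (one through each endpoint) and at most two ``incident'' charges (open or Case~A, one through each endpoint; and if an endpoint lies in a closed clique it cannot contribute a Case~B charge by the same proposition), giving the claimed bound of four. This is precisely the content of the paper's Case~1/Case~2 analysis of $\Phi^{-1}(e)$.
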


\begin{proof} From Proposition \ref{proposition exhaustive runningtime 4k kernel} we know that we can produce a reduced \textsc{STC} instance regarding Rule \ref{4k Kernel RedRule} in $\mathcal{O}(n \cdot m)$-time. It remains to show that the number of vertices in a reduced instance is at most $4k$ or it is a no-instance for \textsc{STC}. Let $(G=(V,E),k)$ be a reduced \textsc{STC} instance. We first show that the
  overall number of vertices in open critical cliques is bounded by $2k$. Let $K$ be an open
  critical clique. Since~$\mathcal{N}(K)$ does not form a clique in $G$ by
  definition, there are two vertices $u, w \in \mathcal{N}(K)$ with~$\lbrace u , w \rbrace \not \in E$. So, for every vertex $v \in K$, the edges~$\lbrace u, v \rbrace$ and $\lbrace v, w \rbrace$ form an induced $P_3$. It follows that each vertex
  in any open critical clique must have at least one weak neighbor. Consequently, if the
  overall number of vertices in open critical cliques is bigger than $2k$, there must be
  more than $k$ weak edges in any STC-labeling.

  Now, let $\mathbb{K}$ denote the set of all vertices in closed critical cliques and let $L=(S_L, W_L)$ be an optimal STC-labeling. We prove that~$|\mathbb{K}| \leq 2k$ if $|W_L| \leq k$. Intuitively, we show that there is a correspondence between the weak edges of~$L$ and all vertices in~$\mathbb{K}$ such that for every weak edge under~$L$ there are at most two distinct vertices in~$\mathbb{K}$. Formally, we give a mapping~$\Phi:~\mathbb{K}~\rightarrow~W_L$ such that for each $e \in W_L$ we have
  $|\Phi ^{-1}(e)| \leq 2$, where
  $\Phi^{-1} (e) := \lbrace v \in \mathbb{K} \mid \Phi(v)=e \rbrace \subseteq
  \mathbb{K}$. If $|W_L| \leq k$, this
  implies 
  \begin{align*}
    |\mathbb{K}| \leq \sum_{e \in W_L} |\Phi^{-1}(e)| \leq k \cdot 2.
  \end{align*}

  First, consider those closed critical cliques whose vertices have at least one weak
  neighbor under $L$. By Proposition \ref{proposition: Neighbours of critical cliques}, every
  weak neighbor of such a clique has weak edges to every vertex of the clique. So each
  vertex $w$ in such a clique is the endpoint of some weak edge $e \in W_L$ and we define~$\Phi(w) :=
  e$. 

  Second, consider those closed critical cliques whose vertices have only strong neighbors
  under~$L$. Since $G$ is a reduced graph, for each such clique $K$ it holds that
  $|E(\mathcal{N}(K), \mathcal{N}^2(K))| \geq |K|$. Thus, for each $v$ in such a clique we can
  define a set $\Lambda_v :=\lbrace \lbrace v, w \rbrace, \lbrace w, u \rbrace \rbrace$ with
  $w \in \mathcal{N}(K)$ and $u \in \mathcal{N}^2 (K)$ such that
  $\Lambda_{v_1} \cap \Lambda_{v_2} = \emptyset$ if $v_1$ and $v_2$ are different vertices of
  the same critical clique $K$.

  Observe that $v_1 \neq v_2$ implies $\Lambda_{v_1} \neq \Lambda_{v_2}$: If $v_1$ and $v_2$
  lie in the same clique $K$ it follows directly from the definition of $\Lambda_{v_1}$ and
  $\Lambda_{v_2}$. So let $v_1 \in K_1$ and $v_2 \in K_2$ with $K_1 \neq K_2$. Assume that
  $\Lambda_{v_1} = \Lambda_{v_2} = \lbrace \lbrace v_1, w \rbrace, \lbrace w, v_2 \rbrace
  \rbrace$.
  By Proposition~\ref{proposition: Type 2 Cliques not connected}, $E(K_1,K_2) =
  \emptyset$.
  Then, the edges $\lbrace v_1, w \rbrace$ and $\lbrace w, v_2 \rbrace$ form a strong $P_3$
  under $L$ which contradicts the fact that $L$ is an STC-labeling.

  Obviously every $\Lambda_v$ forms an induced $P_3$, so at least one $e_v \in \Lambda_v$ is weak
  under~$L$. We define $\Phi(v):=e_v$. Since $\Phi$ is now defined for each
  $v \in \mathbb{K}$, it remains to check that $|\Phi^{-1}(e)| \leq 2$ for every $e \in W_L$.

  \textbf{Case 1:} Let $e=\lbrace w, u \rbrace \in W_L$ such that one endpoint $w$ lies in
  $\mathbb{K}$. By Proposition~\ref{proposition: Type 2 Cliques not connected}, we get that
  $u \not \in \mathbb{K}$. Moreover, there is at most one~$v \in \mathbb{K}$ such that
  $e \in \Lambda_{v}$: Assume there are two distinct vertices $v_1, v_2 \in \mathbb{K}$ such
  that $\{w,u\} \in \Lambda_{v_1} \cap \Lambda_{v_2}$. By~$\Lambda_{v_1} \cap \Lambda_{v_2} \neq \emptyset$ we know from the definition of those sets,
  that $v_1$ and $v_2$ lie in two different closed critical cliques $K_1$ and $K_2$ and have only strong neighbors under $L$. Since $w \in \mathbb{K}$ and vertices
  from different closed critical cliques are not adjacent, we conclude that $v_1$ and $v_2$
  are both adjacent to $u$. Since $\lbrace v_1, v_2 \rbrace \not \in E$, the edges
  $\lbrace v_1, u \rbrace$ and $\lbrace u, v_2 \rbrace$ form a strong $P_3$ under $L$, which
  contradicts the fact that $L$ satisfies STC. Since $e$ has at most one endpoint in
  $\mathbb{K}$ and lies in at most one $\Lambda_{v}$, we conclude
  that~$|\Phi^{-1}(e)| \leq 2$.

  \textbf{Case 2:} Let $e=\lbrace w, u \rbrace \in W_L$ such that $u, w \not \in
  \mathbb{K}$.
  We show that~$e$ lies in at most two sets $\Lambda_{v_1}, \Lambda_{v_2}$. Assume
  $\lbrace w,u \rbrace \in \Lambda_{v_1} \cap \Lambda_{v_2} \cap \Lambda_{v_3}$. From
  $\Lambda_{v_1} \cap \Lambda_{v_2} \cap \Lambda_{v_3} \neq \emptyset$ we know that $v_1$,
  $v_2$, and $v_3$ lie in three different cliques $K_1$, $K_2$, and~$K_3$, which are closed critical
  cliques and have only strong neighbors under $L$. Without loss of generality assume
  that all vertices of $K_1$~and~$K_2$ are adjacent to $u$. From Proposition \ref{proposition: Type 2 Cliques not connected} we know that
  $E(K_1,K_2)= \emptyset$. Then, the edges $\lbrace v_1, u \rbrace$ and
  $\lbrace v_2, u \rbrace$ form a strong $P_3$ under $L$ which contradicts the fact that $L$
  satisfies the STC property. Since $e$ has no endpoint in $\mathbb{K}$ and lies in at most
  two $\Lambda_v$, we conclude that ~$|\Phi^{-1}(e)| \leq 2$.

  Since for all $e \in W_L$ it holds that $|\Phi^{-1}(e)| \leq 2$, it follows that
  $|\mathbb{K}| \leq 2k$ as described above. Hence,~$G$ has at most $2k + 2k = 4k$ vertices
  or there is no STC-labeling with at most~$k$ weak edges. $\qed$
\end{proof}

\subsection{An $\mathcal{O}(\ell \cdot 2^\ell)$-Vertex Kernel for Strong Triadic Closure} \label{l2l Kernel Section}
We show that \textsc{STC} parameterized by $\ell:=|E|-k$ admits a kernel with $\mathcal{O}(\ell \cdot 2^\ell)$ vertices. Let $G=(V,E)$ be a graph and let $M \subseteq E$ be a maximum matching in $G$. Note that, if~$|M| \geq \ell$, then~$G$ has an STC-labeling~$L=(M,E\setminus M)$ with~$|M|\geq \ell$ strong edges. Hence, we may assume that the size of a maximum matching in~$G$ is smaller than~$\ell$.  The intuitive idea behind our kernelization is to delete vertices from the independent set that is formed by the vertices that are not incident with any edge in~$M$. We partition the vertices of $G$ into 
\begin{enumerate}
\item[•] $V_M := \lbrace v \in V \mid \text{ } v \text{ is an endpoint of some } e \in M \rbrace$,
\item[•] $I_2 := \lbrace v \in V \setminus V_M \mid \exists \lbrace u,w \rbrace \in M: u \text{ and } w \text{ are both neighbors of }v  \rbrace$, and
\item[•] $I_1 := V \setminus (I_2 \cup V_M)$.
\end{enumerate}
Note that since $M$ is maximal, $I_1 \cup I_2$ is an independent set. We will see that the number of vertices in $I_2$ is upper-bounded by $\ell$ in every STC instance. The main step of the kernelization is to delete superfluous vertices form $I_1$.

We will say that two vertices $v_1, v_2 \in I_1$ \emph{are members of the same family} $F$ if $N(v_1) = N(v_2)$. Given a family $F$, we will refer to the neighborhood of the vertices in $F$ as $N(F) := N(v)$ for some $v \in F$.
\begin{Rule} \label{l2l Kernel RedRule}
For every family $F$ of vertices in $I_1$: If $|F|>|N(F)|$ then delete $|F|-|N(F)|$ of the vertices in $F$ and decrease $k$ by $(|F|-|N(F)|) \cdot |N(F)|$.
\end{Rule}

Note that Rule~\ref{l2l Kernel RedRule} decreases the value of~$k$ by the number of deleted edges. Hence, the value of the parameter $\ell$ does not change. 

\begin{proposition} \label{Poposition l2l Rule safe}
Rule~\ref{l2l Kernel RedRule} is safe.
\end{proposition}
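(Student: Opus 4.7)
}

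The plan is to prove both directions of equivalence between the instance $(G,k)$ and the reduced instance $(G',k')$ with $k':=k-(|F|-|N(F)|)\cdot|N(F)|$, where $G'$ arises from $G$ by deleting a chosen subset $D\subseteq F$ of size $|F|-|N(F)|$. Note that the edges removed from $G$ are exactly the $|D|\cdot|N(F)|=(|F|-|N(F)|)\cdot|N(F)|$ edges between $D$ and $N(F)$, since $F\subseteq I_1$ lies in an independent set.

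For the easy direction, suppose $L'=(S_{L'},W_{L'})$ is an STC-labeling of $G'$ with $|W_{L'}|\le k'$. I would extend $L'$ to a labeling $L$ of $G$ by labeling every edge in $E_G(D,N(F))$ as weak. Then $|W_L|=|W_{L'}|+(|F|-|N(F)|)\cdot|N(F)|\le k$. Since every newly introduced edge is weak, no new strong $P_3$ can be created, so $L$ still satisfies STC.

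For the reverse direction, let $L=(S_L,W_L)$ be an STC-labeling of $G$ with $|W_L|\le k$. The heart of the argument is the following counting claim: for every $u\in N(F)$, at most one edge of $E_G(\{u\},F)$ is strong under $L$. Indeed, if $u$ had two strong neighbors $v_1,v_2\in F$, then STC would force $\{v_1,v_2\}\in E$, contradicting the fact that $F\subseteq I_1$ is part of the independent set $I_1\cup I_2$. Summing over $N(F)$, the total number of strong edges in $E_G(F,N(F))$ is at most $|N(F)|$. Consequently, at least $|F|-|N(F)|$ vertices $v\in F$ have \emph{all} their incident edges weak under $L$; choose $D$ to be any such subset of size exactly $|F|-|N(F)|$.

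Now I would simply restrict $L$ to $G'$: set $L':=(S_L, W_L\setminus E_G(D,N(F)))$. All edges incident with $D$ were weak, so $|W_{L'}|=|W_L|-(|F|-|N(F)|)\cdot|N(F)|\le k'$. Because $G'$ is an induced subgraph of $G$, every induced $P_3$ of $G'$ is an induced $P_3$ of $G$, so $L'$ inherits the STC property from $L$. This establishes that $(G,k)$ and $(G',k')$ are equivalent.

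The only delicate step is the counting claim used in the reverse direction; the rest is straightforward bookkeeping. I expect no further obstacle, since the key point that $F$ is independent is already built into the definition of $I_1$. \qed
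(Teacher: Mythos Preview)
Your proof is correct and follows essentially the same approach as the paper: the paper also shows that each vertex of $N(F)$ has at most one strong neighbour in $F$, deduces that at least $|F|-|N(F)|$ members of $F$ have only weak incident edges, and then removes them (phrasing this via the weak-cut proposition, Proposition~\ref{proposition: Weak Cut}, rather than by direct restriction to the induced subgraph as you do).

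One small wrinkle: in the reverse direction you write ``choose $D$ to be any such subset of size exactly $|F|-|N(F)|$,'' but $D$ was already fixed when the rule was applied---you are not free to pick it depending on $L$. The fix is immediate: the vertices of $F$ are pairwise nonadjacent and share the same open neighbourhood, so any permutation of $F$ extends to an automorphism of $G$; hence you may assume without loss of generality that the vertices deleted by the rule are exactly your all-weak ones. The paper's proof leaves the same point implicit.
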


\begin{proof}
Let $G'$ be the reduced graph after applying Rule \ref{l2l Kernel RedRule}. We  prove that there is an STC-labeling~$L'=(S_{L'},W_{L'})$ for $G'$ with $|S_{L'}| \geq \ell$ if and only if there is an STC-labeling $L=(S_L,W_L)$ with~$|S_L| \geq \ell$ for $G$.

Let $L'=(S_{L'},W_{L'})$ be an STC-labeling for $G'$ with $|S_{L'}| \geq \ell$. It is easy to see that we can define an STC-labeling $L$ for $G$ from $L'$ by adding the edges that were deleted during the reduction to $W_{L'}$.

Now, let $L=(S_L,W_L)$ be an STC-labeling for $G$ with $|S_L| \geq \ell$. Let $F = \lbrace u_1,\ldots,u_{|F|} \rbrace$ be a family of vertices in $I_1$ such that $|F|>|N(F)|$ and $N(F) := \lbrace v_1 , v_2 , \ldots , v_{|N(F)|} \rbrace \subseteq V_M$. Then, since $I_1$ is an independent set, for every $j=1, \ldots ,|N(F)|$ there is at most one $i=1,\ldots,|F|$ such that $\lbrace u_i , v_j \rbrace \in S_L$. Otherwise, $\lbrace u_{i_1}, v_{j} \rbrace$ and $\lbrace  u_{i_2}, v_{j} \rbrace$ form a strong $P_3$. It follows that there are at least $|F|-|N(F)|$ nodes in $F$ that have only weak neighbors. Consequently, each of these nodes $u$ forms a weak cut $(\lbrace u \rbrace, V\setminus \lbrace u \rbrace )$ under $L$. By applying Proposition~\ref{proposition: Weak Cut} on each of those $|F|-|N(F)|$ weak cuts, we conclude that there is an STC-labeling~$L'$ with $|S_{L'}| \geq \ell$ for $G'$. \qed
\end{proof}
 
\begin{theorem}
\textsc{STC} admits a kernel with $\mathcal{O}(\ell \cdot 2^{\ell})$ vertices that can be computed in time~$\Oh(\sqrt{n}m)$.
\end{theorem}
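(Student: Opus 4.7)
The plan is to first compute a maximum matching~$M$ of~$G$ using, e.g., the Micali--Vazirani algorithm in~$\Oh(\sqrt{n}m)$ time. If~$|M| \geq \ell$, then labeling the edges of~$M$ as strong and all other edges as weak already yields an STC-labeling with at least~$\ell$ strong edges, since matching edges are pairwise vertex-disjoint and hence no induced~$P_3$ can contain two of them; in that case, I would output a trivial yes-instance on, say, two vertices and one edge. Otherwise~$|M| < \ell$, and consequently~$|V_M| = 2|M| < 2\ell$ is already small.

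Next, I would show that~$|I_2| \leq |M| < \ell$ by an augmenting-path argument. For each~$v \in I_2$, fix some matching edge~$e_v \in M$ whose two endpoints both lie in~$N(v)$, which exists by definition of~$I_2$. If two distinct vertices~$v_1, v_2 \in I_2$ had~$e_{v_1} = e_{v_2} = \lbrace u, w \rbrace$, then~$v_1, u, w, v_2$ would be an~$M$-augmenting path of length three---its endpoints are unmatched and its edges alternate between~$E\setminus M$ and~$M$---contradicting the maximality of~$M$. Hence~$v \mapsto e_v$ is an injection into~$M$.

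Finally, I would apply Rule~\ref{l2l Kernel RedRule} exhaustively to~$I_1$ and bound the resulting size of~$I_1$. By maximality of~$M$, the set~$I_1 \cup I_2$ is independent, so any family~$F \subseteq I_1$ satisfies~$N(F) \subseteq V_M$. Moreover, since~$v \in I_1$ means~$v \notin I_2$, the vertex~$v$ is adjacent to at most one endpoint of each matching edge of~$M$. Thus the neighborhood of any~$v \in I_1$ is described by a choice among three options (empty, left endpoint, or right endpoint) for each of the~$|M| < \ell$ matching edges, so the number of distinct families is at most~$3^{|M|} = 2^{\Oh(\ell)}$. After Rule~\ref{l2l Kernel RedRule} every family~$F$ satisfies~$|F| \leq |N(F)| \leq |V_M| < 2\ell$, yielding~$|I_1| \leq 2\ell \cdot 3^{|M|}$. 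Combining the three bounds, the reduced graph has~$|V_M| + |I_2| + |I_1| = \Oh(\ell \cdot 2^\ell)$ vertices in the single-exponential sense.

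For the running time, the matching computation dominates: grouping the vertices of~$I_1$ into families by their neighborhood in~$V_M$ can be done in polynomial time by sorting restricted adjacency lists of length at most~$|V_M| < 2\ell$, and applying Rule~\ref{l2l Kernel RedRule} to each family is then a single pass since deletions inside one family do not alter the neighborhoods of vertices in other families. The main obstacle is to argue carefully that~$|I_2| \leq |M|$ via augmenting paths (which requires~$M$ to be maximum, not merely maximal) and to justify that the three-way classification of neighborhoods inside each matching pair genuinely bounds the number of families by a single-exponential function of~$\ell$; the remaining counting is routine.
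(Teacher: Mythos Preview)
Your overall strategy matches the paper's: compute a maximum matching, dispose of the case~$|M|\ge\ell$, bound~$|V_M|$ and~$|I_2|$ directly, and then bound~$|I_1|$ by counting families after applying Rule~\ref{l2l Kernel RedRule}. Your augmenting-path argument for~$|I_2|\le|M|$ is exactly the paper's argument in different words.

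However, your family count only establishes~$\Oh(\ell\cdot 3^{\ell})$ vertices, not the claimed~$\Oh(\ell\cdot 2^{\ell})$; writing ``$=\Oh(\ell\cdot 2^\ell)$ in the single-exponential sense'' does not rescue this, since~$3^{\ell}$ is not~$\Oh(2^{\ell})$. The paper closes this gap with one additional observation you are missing: every matching edge~$\{u,w\}\in M$ has \emph{at most one} endpoint with neighbors in~$I_1$. Indeed, if~$v_1\in I_1$ were adjacent to~$u$ and~$v_2\in I_1$ to~$w$, then~$v_1\neq v_2$ (since~$v_1\notin I_2$) and replacing~$\{u,w\}$ by~$\{u,v_1\},\{w,v_2\}$ would enlarge~$M$. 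Hence fewer than~$\ell$ vertices of~$V_M$ have any neighbor in~$I_1$ at all, so there are at most~$2^{\ell}$ families and~$|N(F)|<\ell$ for each family~$F$, giving~$|I_1|\le \ell\cdot 2^{\ell}$ rather than~$2\ell\cdot 3^{\ell}$. Your three-options-per-edge argument is correct as far as it goes, but it is too coarse to hit the stated bound; plugging in this extra augmenting-path observation (which is of the same flavor as your~$I_2$ argument) is what is needed.
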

\begin{proof}
 By Proposition \ref{Poposition l2l Rule safe}, Rule \ref{l2l Kernel
    RedRule} is safe. It remains to show that the number of vertices in a
  reduced STC instance is $\mathcal{O}(\ell \cdot 2^{\ell})$.

  Let $(G=(V,E),k)$ be a reduced instance. Recall that $V_M$ is the set of all vertices that are
  endpoints of some $e \in M$. As argued above, any instance with~$|M|\ge \ell$ is a yes-instance, and hence we assume~$|M|<\ell$ in the following.  Therefore,~$V_M$ has less than~$2\ell$ vertices.

  Recall that $I_2$ is the set of vertices that are adjacent to both endpoints of some
  edge $\lbrace u , w \rbrace~\in~M$.  For each edge $\lbrace u, w \rbrace \in M$ we will find
  at most one vertex in~$I_2$ that is adjacent to $u$ and $w$. Otherwise, if there were two
  such vertices $v$ and~$v'$, we could define a bigger matching by
  $M^+ := M \cup \lbrace \lbrace u,v \rbrace, \lbrace w,v' \rbrace \rbrace \setminus \lbrace
  \lbrace u,w \rbrace \rbrace$
  which is a contradiction to the property that $M$ is a maximum matching. Thus, from~$|M| < \ell$ we conclude~$|I_2| \leq \ell$.

  It remains to show that the size of $I_1$ is upper-bounded after applying Rule~\ref{l2l
    Kernel RedRule}. We start with the following observation:
  \begin{Observation}
    Every edge $\lbrace u, w \rbrace \in M$ has at most one endpoint with neighbors in~$I_1$.
  \end{Observation}
  \begin{proof}
  Assume that there are edges $\lbrace u, v_1 \rbrace$ and $\lbrace w, v_2 \rbrace$ for some
  $v_1, v_2 \in I_1$. By the fact that $v_1 \in I_1$ and therefore $v_1 \not \in I_2$ it holds that $v_1~\neq~v_2$. We define a matching
  $M^+ := M \cup \lbrace \lbrace u,v_1 \rbrace, \lbrace w,v_2 \rbrace \rbrace \setminus \lbrace
  \lbrace u,w \rbrace \rbrace$
  that is bigger than $M$, which contradicts the fact that $M$ is a maximum
  matching. $\hfill \Diamond$
  \end{proof}

  Also, there is no edge from some vertex in $I_1$ to some vertex in $I_2$, since
  $I_1 \cup I_2$ is an independent set. Since $|M|< \ell$, there are less than
  $\ell$ vertices with neighbors in $I_1$. Thus, there are less than $2^\ell$ different
  families $F$ of vertices in $I_1$. Since $G$ is a reduced graph with respect to Rule \ref{l2l
    Kernel RedRule}, the size of each family is at most $\ell$. Hence,
  $|I_1| \leq \ell \cdot 2^{\ell}$, which delivers a problem kernel with
  $\mathcal{O}(\ell \cdot 2^\ell)$ vertices.

  Finally, the running time can be seen as follows. Computing a maximum matching can be done
  in~$\Oh(\sqrt{n}m)$ time~\cite{MV80L} and all other steps including the exhaustive application
  of Rule~\ref{l2l Kernel RedRule} can be performed in linear time.  $\qed$ \end{proof}

\iflong If we do not distinguish between~$I_1$ and $I_2$, we can compute a problem kernel of size $\mathcal{O}(\ell \cdot 4^{\ell})$ in linear time: In this case, we only need~to compute a maximal matching instead of a maximum matching which leads to $2^{2\cdot \ell}$ different families of vertices in $I_1 \cup I_2$, increasing the kernel size to $\mathcal{O}(\ell \cdot 4^{\ell})$.
\fi

\subsection{A Kernel Lower Bound for  the Parameter $\ell$} \label{No Poly Kernel Section}

Above, we gave an exponential-size problem kernel for \textsc{STC} parameterized by the number of strong edges~$\ell$. Now we prove that \textsc{STC} does not admit a polynomial kernel for the parameter $\ell$ unless \badstuffhappens{} by reducing from~\textsc{Clique}.
\begin{center}
	\begin{minipage}[c]{.9\linewidth}
		\textsc{Clique}
		
		\textbf{Input}: $G=(V,E)$, $t \in \mathbb{N}$
		\\
		\textbf{Question}: Is there a clique on $t$ vertices in $G$? 
	\end{minipage}
\end{center}

%

\textsc{Clique} parameterized by the size $s$ of a vertex cover does not admit a polynomial kernel unless~\badstuffhappens~\cite{BodlaenderJK14}. Our proof gives a polynomial parameter transformation~\cite{BodlaenderTY11} from \textsc{Clique} parameterized by $s$ to \textsc{STC} parameterized by $\ell$ in two steps. The first step is a reduction to the following~problem.
\begin{center}
	\begin{minipage}[c]{.9\linewidth}
		\textsc{Restricted Multicolored Clique}
		
		\textbf{Input}: A properly $t$-colored graph $G=(V,E)$ with color classes $C_1,\ldots,C_t \subseteq V$ such that~$|C_1|=|C_2|= \ldots =|C_{t-1}|$. 
		\\
		\textbf{Question}: Is there a clique containing one vertex from each color in $G$? 
	\end{minipage}
\end{center}

\begin{proposition}
  \label{Clique_by_VC reduced to MulitClique_by_VC}
\textsc{Restricted Multicolored Clique} parameterized by $|C_1 \cup \ldots \cup C_{t-1}|$ does not admit a polynomial kernel unless \badstuffhappens.
\end{proposition}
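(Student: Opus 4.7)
The plan is to exhibit a polynomial parameter transformation from \textsc{Clique} parameterized by the vertex cover number~$s$---which by~\cite{BodlaenderJK14} admits no polynomial kernel unless~\badstuffhappens---into \textsc{Restricted Multicolored Clique} parameterized by~$|C_1 \cup \ldots \cup C_{t-1}|$. Given $(G=(V,E), t)$ together with a vertex cover $S \subseteq V$ of size~$s$, I would first observe that the independent set $I := V \setminus S$ contributes at most one vertex to any clique of $G$, so if $t > s+1$ we output a trivial no-instance. Thus we may assume $t \leq s+1$ throughout, giving us a polynomial bound on $t$ in terms of $s$.

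For the construction, I would take $t-1$ identical copies of $S$ as color classes $C_i := \{v^{(i)} : v \in S\}$ for $i \in \{1,\ldots,t-1\}$, and let $C_t := V$. I would add an edge between $v^{(i)} \in C_i$ and $w^{(j)} \in C_j$ with $i \neq j$ (treating $w^{(t)}$ as $w \in V$ when $j = t$) exactly when $v \neq w$ and $\{v,w\} \in E(G)$; no edges are added within a single color class. By construction $G'$ is properly $t$-colored, $|C_1| = \ldots = |C_{t-1}| = s$, and the new parameter is $(t-1)s \leq s^2$, which is polynomial in~$s$ as demanded.

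For correctness, a $t$-clique $K$ of $G$ has at most one vertex in $I$; place that vertex (or, if $K \subseteq S$, any chosen vertex of $K$) into $C_t$, and distribute the remaining $t-1$ vertices of $K$ arbitrarily across $C_1,\ldots,C_{t-1}$ via their copies. This yields a multicolored clique in $G'$. Conversely, any multicolored clique of~$G'$ projects to $t$ vertices of~$V$ that are pairwise adjacent in~$G$; the restriction $v \neq w$ on each edge of~$G'$ forces these projections to be pairwise distinct, hence they form a $t$-clique in~$G$.

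The main subtlety I anticipate is simultaneously satisfying the three structural demands of \textsc{Restricted Multicolored Clique}, namely proper $t$-coloring, the equality $|C_1| = \ldots = |C_{t-1}|$, and the requirement that a multicolored clique project to distinct vertices of the original graph. Using identical copies of~$S$ for the first $t-1$ classes and all of~$V$ for the last class handles the first two automatically, and the uniform $v \neq w$ constraint on every edge handles the third. The cap $t \leq s+1$ ensures that a single ``free'' slot in $C_t$ is enough to accommodate the at most one $I$-vertex of any clique, so no $t$-clique of~$G$ is missed and the transformation is correct.
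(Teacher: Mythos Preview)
Your proposal is correct and follows essentially the same approach as the paper: both take $t-1$ copies of the vertex cover~$S$ as the equal-size color classes $C_1,\ldots,C_{t-1}$, place (copies of) all of~$V$ into~$C_t$, and connect copies across distinct classes exactly when the originals are distinct and adjacent in~$G$, yielding the parameter bound $(t-1)s \le s^2$. Your edge description is slightly more uniform than the paper's case distinction between $S$--$S$ and $S$--$I$ edges, but the construction and correctness argument are the same.
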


\begin{proof} We give a polynomial parameter transformation from \textsc{Clique} parameterized
  by the size $s$ of a vertex cover to \textsc{Restricted Multicolored Clique} parameterized by
  $|C_1 \cup \ldots \cup C_{t-1}|$.

  Let $(G,t)$ be a \textsc{Clique} instance with a size-$s$ vertex cover
  $S=\lbrace v_1,\ldots,v_{s} \rbrace$ and let $I := V \setminus S$ be the remaining
  independent set. Since $I$ is an independent set, the maximal value for the clique
  size $t$ is $s+1$. Otherwise $(G,t)$ is a trivial no-instance.  We construct an instance
  $G'$ for \textsc{Restricted-Multicolored-Clique} as follows.

  First, we define $t$ color classes $C_1,\ldots,C_t$. We replace every vertex $v_i \in S$ with
  $t$ copies $v_{i,1},\ldots, v_{i,t}$ such that
  $v_{i,1} \in C_1, v_{i,2} \in C_2,\ldots, v_{i,t} \in C_t$. We also add all vertices of $I$
  to $C_t$. Now, the classes $C_1,\ldots,C_{t-1}$ contain exactly $s$ elements and the class
  $C_t$ contains $s+|I|$ elements. If two vertices $v_i$ and $v_j$ are adjacent in $S$, we
  connect all copies of $v_i$ with all copies of $v_j$, except those that are in the same color class. For the first $(t-1)$ classes $C_1,\ldots, C_{t-1}$ we do the following: For
  every edge $\lbrace v_i,w \rbrace$ with $v_i \in S$ and $w \in I$, we add edges
  $\lbrace v_{i,j}, w \rbrace$ for each copy $v_{i,j}$ of $v_i$. Since $C_1,\ldots,C_t$ are all
  independent sets, $G'$ is a properly $t$-colored graph with $t-1$ color classes of the same
  size $s$. Hence, $G'$ is a feasible instance for
  \textsc{Restricted-Multicolored-Clique}. Note that $t \leq s+1$ implies
  $|C_1 \cup \ldots \cup C_{t-1}| = (t-1) \cdot s \leq s^2$. To prove that the transformation
  from $(G,t)$ into $(G',C_1,C_2,\ldots,C_t)$ is a polynomial parameter transformation, it
  remains to show that $G'$ has a multicolored clique if and only if $G$ has a clique of size
  $t$.

  Let $K$ be a clique of size $t$ in $G$. Since $I$ is an independent set we can assume that
  $t-1$ vertices of $K$ lie in $S$ and one vertex $u$ of $K$ lies in $S \cup I$. Without loss of
  generality, we assume that $K=\lbrace v_1,\ldots, v_{t-1},u \rbrace$. If $u \in S$, it holds
  that $u=v_i$ for some $i \geq t$. Then there is a copy $v_{i,t} \in C_t$ of $v_i$ and the
  vertices $v_{1,1},\ldots,v_{t-1,t-1}, v_{i,t}$ form a multicolored clique in $G'$. If
  $u\in I$, it follows that $u \in C_t$ and the vertices $v_{1,1},\ldots,v_{t-1,t-1},u$ form a
  multicolored clique in $G'$.

  Now let $G'$ have a multicolored clique
  $\lbrace v_{i_1,1}, v_{i_2,2}, \ldots , v_{i_{(t-1)},t-1}, u \rbrace$ with $u \in C_t$. Note
  that the indices $i_1,\ldots,i_{(t-1)}$ are pairwise distinct by construction. If $u \in I$,
  the vertices $v_{i_1}, v_{i_2}, \ldots , v_{i_{(t-1)}}, u$ form a clique of size $t$ in
  $G$. Otherwise, if $u \not \in I$, we can assume that $u=v_{i_t,t}$ for some $i_t$ that is
  different from $i_1,\ldots,i_{t-1}$. Then, the vertices
  $v_{i_1}, v_{i_2}, \ldots , v_{i_{(t-1)}}, v_{i_t}$ form a clique of size $t$ in $G$. \qed
\end{proof}



The next step to prove the kernel lower bound is to give a polynomial parameter transformation from \textsc{Restricted-Multicolored-Clique} to \textsc{STC}.
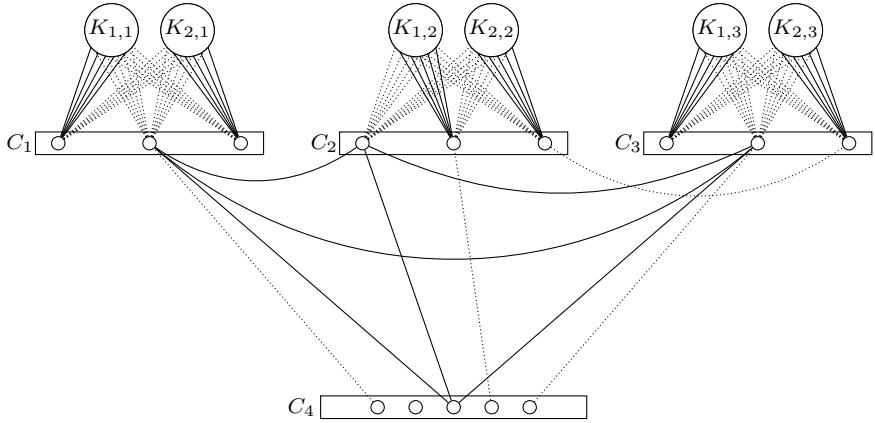
\begin{figure}[t]
\begin{center}
\begin{tikzpicture}
\tikzstyle{knoten}=[circle,fill=white,draw=black,minimum size=5pt,inner sep=0pt]
\tikzstyle{clique}=[circle,fill=white,draw=black,minimum size=20pt,inner sep=0pt]

\node[clique] (K11) at (0,1) {$K_{1,1}$};
\node[clique] (K21) at (1,1) {$K_{2,1}$};

\draw (0.5,-0.5) node[minimum height=0.3cm,minimum width=3cm,draw] {};
\node at (-1.2,-0.5) {$C_1$};
\node[knoten] (v11) at (-0.7,-0.5) {};
\node[knoten] (v21) at (0.5,-0.5) {};
\node[knoten] (v31) at (1.7,-0.5) {};

\foreach \x in {-0.2,-0.1,...,0.2}{
          \draw[-] (v11) to ($(K11)+(1.3*\x,-0.2)$);
          \draw[densely dotted] (v11) to ($(K21)+(1.3*\x,-0.2)$);
          \draw[densely dotted] (v21) to ($(K11)+(1.3*\x,-0.2)$);
          \draw[densely dotted] (v21) to ($(K21)+(1.3*\x,-0.2)$);
          \draw[densely dotted] (v31) to ($(K11)+(1.3*\x,-0.2)$);
          \draw[-] (v31) to ($(K21)+(1.3*\x,-0.2)$);
      }
      
\node[clique] (K12) at (4,1) {$K_{1,2}$};
\node[clique] (K22) at (5,1) {$K_{2,2}$};

\draw (4.5,-0.5) node[minimum height=0.3cm,minimum width=3cm,draw] {};
\node at (2.8,-0.5) {$C_2$};
\node[knoten] (v12) at (3.3,-0.5) {};
\node[knoten] (v22) at (4.5,-0.5) {};
\node[knoten] (v32) at (5.7,-0.5) {};

\foreach \x in {-0.2,-0.1,...,0.2}{
          \draw[densely dotted] (v12) to ($(K12)+(1.3*\x,-0.2)$);
          \draw[densely dotted] (v12) to ($(K22)+(1.3*\x,-0.2)$);
          \draw[-] (v22) to ($(K12)+(1.3*\x,-0.2)$);
          \draw[densely dotted] (v22) to ($(K22)+(1.3*\x,-0.2)$);
          \draw[densely dotted] (v32) to ($(K12)+(1.3*\x,-0.2)$);
          \draw[-] (v32) to ($(K22)+(1.3*\x,-0.2)$);
      }
      
\node[clique] (K13) at (8,1) {$K_{1,3}$};
\node[clique] (K23) at (9,1) {$K_{2,3}$};

\draw (8.5,-0.5) node[minimum height=0.3cm,minimum width=3cm,draw] {};
\node at (6.8,-0.5) {$C_3$};
\node[knoten] (v13) at (7.3,-0.5) {};
\node[knoten] (v23) at (8.5,-0.5) {};
\node[knoten] (v33) at (9.7,-0.5) {};

\foreach \x in {-0.2,-0.1,...,0.2}{
          \draw[-] (v13) to ($(K13)+(1.3*\x,-0.2)$);
          \draw[densely dotted] (v13) to ($(K23)+(1.3*\x,-0.2)$);
          \draw[densely dotted] (v23) to ($(K13)+(1.3*\x,-0.2)$);
          \draw[densely dotted] (v23) to ($(K23)+(1.3*\x,-0.2)$);
          \draw[densely dotted] (v33) to ($(K13)+(1.3*\x,-0.2)$);
          \draw[-] (v33) to ($(K23)+(1.3*\x,-0.2)$);
      }

\draw (4.5,-4) node[minimum height=0.3cm,minimum width=3.5cm,draw] {};
\node at (2.5,-4) {$C_4$};
\node[knoten] (v14) at (3.5,-4) {};
\node[knoten] (v24) at (4,-4) {};
\node[knoten] (v34) at (4.5,-4) {};
\node[knoten] (v44) at (5,-4) {};
\node[knoten] (v64) at (5.5,-4) {};

\draw[-] (v21) to (v34);
\draw[-] (v23) to (v34);
\draw[-] (v12) to (v34);
\draw[-] (v21) edge[-,bend right=35] (v12);
\draw[-] (v21) edge[-,bend right=40] (v23);
\draw[-] (v12) edge[-,bend right=25] (v23);
\draw[densely dotted] (v44) to (v22);
\draw[densely dotted] (v14) to (v21);
\draw[densely dotted] (v64) to (v23);
\draw[densely dotted] (v32) edge[-,bend right=35] (v33);


\node[clique] (K11) at (0,1) {$K_{1,1}$};
\node[clique] (K21) at (1,1) {$K_{2,1}$};

\node[clique] (K12) at (4,1) {$K_{1,2}$};
\node[clique] (K22) at (5,1) {$K_{2,2}$};

\node[clique] (K13) at (8,1) {$K_{1,3}$};
\node[clique] (K23) at (9,1) {$K_{2,3}$};

\end{tikzpicture}
\end{center}
\caption{An example for the construction of $G'$ described in the proof of Theorem \ref{thm:no-kernel-stc} with $t=4$ color classes. The top of the picture shows color classes $C_1$, $C_2$, and~$C_3$ of size $z=3$ and their attached cliques. The bottom shows color class $C_4$. The edges between the color classes are the edges from $G$. The dotted edges correspond to the weak edges of an optimal STC-labeling for $G'$.}
\label{Figure Construction NoPolyKernel}
\end{figure} 
\begin{theorem}
  \label{thm:no-kernel-stc}
  \textsc{STC} parameterized by the number of strong edges~$\ell$ does not admit a polynomial
  kernel unless \badstuffhappens.
\end{theorem}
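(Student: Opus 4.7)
I plan to prove this via a polynomial parameter transformation from \textsc{Restricted Multicolored Clique} parameterized by $|C_1 \cup \dots \cup C_{t-1}|$ to \textsc{STC} parameterized by $\ell$. Combined with Proposition~\ref{Clique_by_VC reduced to MulitClique_by_VC}, this immediately yields the desired kernel lower bound. The construction, sketched in Figure~\ref{Figure Construction NoPolyKernel}, keeps the input graph $G$ intact and, for each color class $C_i$ with $i \in \{1,\dots,t-1\}$, attaches $z-1$ pairwise disjoint cliques $K_{1,i},\dots,K_{z-1,i}$ of a uniform size $s$ polynomial in $t$ (taking $s=t$ suffices), joining every vertex of $C_i$ to every vertex of each $K_{j,i}$; here $z := |C_1|=\dots=|C_{t-1}|$. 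Intuitively, the $z-1$ attached cliques of $C_i$ serve as \emph{absorption slots} for the $z-1$ vertices of $C_i$ that are not part of a sought multicolored clique; $C_t$ gets no attached cliques because its size is not bounded by the parameter.

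For the forward direction, given a multicolored clique $\{v_1^*,\dots,v_t^*\}$ in $G$, I would construct the STC-labeling of $G'$ as follows: label strong every internal edge of every $K_{j,i}$; pair each of the $z-1$ non-chosen vertices of $C_i$ (with $i<t$) with a distinct attached clique $K_{j,i}$ and label strong every edge between the two; label strong the $\binom{t}{2}$ edges among the $v_i^*$; and label every remaining edge weak. Using that each $C_i$ is independent and each attached clique is adjacent only to its own color class, one verifies that no strong $P_3$ arises. Counting strong edges then yields the natural threshold
\[
\ell := (t-1)(z-1)\binom{s}{2} + (t-1)(z-1)s + \binom{t}{2}.
\]

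For the backward direction, the plan is to exploit three structural properties enforced by the STC constraint: (a) since $C_i$ is independent and every vertex of $K_{j,i}$ is a common neighbor of $C_i$, at most one vertex of $C_i$ can be incident with strong edges to $K_{j,i}$; (b) a vertex $v \in C_i$ incident with a strong edge into some $K_{j,i}$ cannot also be incident with a strong edge leaving $K_{j,i}$, because $K_{j,i}$-vertices have no neighbors outside $C_i \cup K_{j,i}$ and $C_i$ itself is independent; and (c) the strong neighbors of any vertex induce a clique in $G'$. Calling the vertices of $C_1 \cup \dots \cup C_{t-1}$ \emph{distinguished} (strong edges into attached cliques), \emph{chosen} (strong edges inside $G$), or \emph{neutral}, properties (a)--(b) cap the first two summands of $\ell$ at $(t-1)(z-1)\binom{s}{2}$ and $(t-1)(z-1)s$ respectively, with equality requiring each $C_i$ to contain exactly $z-1$ distinguished vertices bijectively occupying the attached cliques, hence at most one chosen vertex per class. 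Property~(c) together with a short convexity argument on the component sizes of the strong subgraph inside $G$ then shows that the third summand $\binom{t}{2}$ is achievable only via a single strong clique on $t$ vertices with one vertex per color class---precisely a multicolored clique in $G$.

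Since the construction of Proposition~\ref{Clique_by_VC reduced to MulitClique_by_VC} guarantees $t \le z+1$, picking $s$ polynomial in $t$ keeps $\ell$ polynomial in $|C_1 \cup \dots \cup C_{t-1}|$, so the mapping is a valid polynomial parameter transformation. The main technical obstacle I anticipate is the tight counting in the backward direction, where I must rule out alternative labelings that place two or more chosen vertices in a single color class. The key observation is that each chosen vertex beyond the first in a class costs a full distinguished slot of $s$ strong edges while yielding at most $\binom{t}{2}$ extra strong edges inside $G$; choosing $s > \binom{t}{2}/(t-1) = t/2$ makes this trade unfavourable, forcing the optimum to place exactly one chosen vertex per class, which by (c) must form a multicolored clique.
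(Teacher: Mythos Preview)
Your construction and forward direction coincide with the paper's (with $s=t$; your $\ell$ matches the paper's since $\binom{t}{2}+t=\binom{t+1}{2}$), and your backward direction follows the same counting philosophy. However, property~(a) as you state it does not follow from the STC constraint: two distinct vertices $v_1,v_2\in C_i$ can simultaneously have strong edges into the same attached clique $K_{j,i}$, provided they go to \emph{different} vertices $w_1,w_2\in K_{j,i}$---no strong $P_3$ arises because $v_1$ is adjacent to $w_2$ and $v_2$ to $w_1$. The paper circumvents this by first normalising an optimal labeling (its Assumption~1 and Observations~4--5) so that each attached clique has a unique, fully-attached partner in $C_r$; only then does the exact count force one leftover vertex per class.

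Your direct counting can be rescued without~(a): use instead that each vertex of $K_{j,i}$ has at most one strong neighbour in $C_i$ (this \emph{is} forced by STC since $C_i$ is independent), which already caps the second summand at $(t-1)(z-1)s$. At equality, every attached clique then receives at least one distinguished partner, and by~(b) these partners are disjoint across attached cliques, so each $C_i$ has at least $z-1$ distinguished vertices and hence at most one non-distinguished vertex---recovering your conclusion. Your final trade-off is also slightly misstated: one extra chosen vertex, lying in a strong clique of size at most $t$, contributes at most $t-1$ new strong edges inside $G$ (not $\binom{t}{2}$), so the threshold you actually need is $s>t-1$; taking $s=t$ is exactly right, but the bound ``$\binom{t}{2}$ extra edges'' does not match the condition $s>t/2$ you derive from it.
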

\begin{proof}
We give a polynomial parameter transformation from \textsc{Restricted Multicolored Clique} to \textsc{STC}. Let $G=(V,E)$ be a properly $t$-colored graph with color classes $C_1=\lbrace v_{1,1}, v_{2,1},\ldots, v_{z,1} \rbrace$, $C_2=\lbrace v_{1,2}, v_{2,2},\ldots, v_{z,2} \rbrace$, \ldots, $C_{t-1}=\lbrace v_{1,t-1}, v_{2,t-1},\ldots, v_{z,t-1} \rbrace$, each of size $z$, and $C_t$.
We now describe how to construct an STC instance $(G'=(V',E'),k)$ from $G$ such that there is an STC-labeling $L=(S_L,W_L)$ with $|W_L| \leq k$ for $G'$ if and only if $G$ has a multicolored~clique.

For each of the first $(t-1)$ classes $C_r$, $r=1,\ldots,t-1$, we define a family $\mathcal{K}_r$ of $z-1$ vertex sets $K_{1,r}, K_{2,r},\ldots,K_{z-1,r}$, each of size $t$, and we add edges such that each $K \in \mathcal{K}_r$ becomes a clique. Throughout this proof, those vertex sets will be called \emph{attached cliques}. For every fixed $i=1,\ldots,z-1$ we also add edges $\lbrace u, v \rbrace$ from all $u \in K_{i,r}$ to all $v \in C_r$. Fig.~\ref{Figure Construction NoPolyKernel} shows an example of this construction.

Setting
 $k := |E| - \left(\binom{t}{2}+(t-1)(z-1)\binom{t+1}{2}\right)$ 
 gives us $\ell =\binom{t}{2}+(t-~1)\cdot(z-~1)\binom{t+1}{2}$. Obviously, $\ell$ is polynomially bounded in $|C_1 \cup \ldots \cup C_{t-1}| = (t-1) \cdot z$.
 
 We now prove that the construction of $(G',k)$ from $(G,C_1,C_2,\dots,C_t)$ is a correct polynomial parameter
  transformation, which means that there is an STC-labeling $L=(S_L,W_L)$ with $|W_L| \leq k$
  (or equivalently $|S_L| \geq \ell$) for $G'$ if and only if $G$ has a multicolored clique.

  $(\Rightarrow)$ Let $M$ be a multicolored clique in $G'$. Without loss of generality we can assume that
  $M = \lbrace v_{z,1},\ldots, v_{z,t-1}, u \rbrace$ with $u \in C_t$. Consider the following
  labeling~$L=(S_L,W_L)$ for $G'$. We set $S_L := E_{M} \cup E_{\mathcal{K}} \cup E_{C}$ to be
  the disjoint union of the following edge sets:

\begin{align*}
&E_{M} := E(M) \text{,}\\
&E_{\mathcal{K}} := \bigcup_{\substack{i=1,\ldots,z-1\\r=1,\ldots,t-1}} E(K_{i,r}) \text{, and}\\
&E_{C} := \bigcup_{\substack{i=1,\ldots,z-1\\r=1,\ldots,t-1}} E(\lbrace v_{i,r} \rbrace, K_{i,r}) \text{.}
\end{align*}
The set $E_M$ is the set of all edges between vertices of $M$, $E_{\mathcal{K}}$ contains all edges between the vertices of the attached cliques, and $E_C$ contains all edges between vertices $v_{i,r}$, $i<z$ and the vertices in the corresponding attached clique $K_{i,r}$.

It remains to show that $L := (S_L, E' \setminus S_L)$ is an STC-labeling with $|S_L| \geq \binom{t}{2}+(t-1)(z-1)\binom{t+1}{2}$.
The size of $S_L$ is easy to check:
\begin{align*}
|S_L| &= |E_{M}| + |E_{\mathcal{K}}| +|E_{C}| \\
	  &= \binom{t}{2}+(t-1)(z-1)\binom{t}{2} + (t-1)(z-1)t \\
	  &= \binom{t}{2}+(t-1)(z-1)\binom{t+1}{2} \text{.}
\end{align*}
Hence, we need to check that there is no strong $P_3$ under $L$. Let $e_1, e_2 \in S_L$. 

\textbf{Case 1:} $e_1, e_2 \in E_{M}$ or $e_1, e_2 \in E_{\mathcal{K}}$:\\
In this case, all endpoints of $e_1$ and $e_2$ lie in the same clique in $G'$, so $e_1$ and $e_2$ do not~form~an induced~$P_3$.

\textbf{Case 2:} $e_1, e_2 \in E_{C}$:\\
If $e_1$ and $e_2$ share exactly one endpoint, they have the form $e_1= \lbrace v_{i,r} , w_1 \rbrace$, $e_2 = \lbrace v_{i,r}, w_2 \rbrace$ with $w_1, w_2 \in K_{i,r}$ for some $i=1,\ldots,z-1$ and $r=1,\ldots,t-1$ by the definition of $E_{C}$. Then, there exists an edge $\lbrace w_1, w_2 \rbrace$ by the construction of $G'$, so $e_1$ and $e_2$ do not form an induced $P_3$.

\textbf{Case 3:} ($e_1 \in E_{M}$ and $e_2 \in E_{\mathcal{K}}$) or ($e_1 \in E_{M}$ and $e_2 \in E_{C}$):\\
In this case, $e_1$ and $e_2$ do not share an endpoint by the construction of $G'$, so they do not form an induced $P_3$.

\textbf{Case 4:} $e_1 \in E_{\mathcal{K}}$ and $e_2 \in E_{C}$:\\
If $e_1$ and $e_2$ share exactly one endpoint, they have the form $e_1= \lbrace v_{i,r} , w_1 \rbrace$, $e_2 = \lbrace w_1, w_2 \rbrace$ with $w_1, w_2 \in K_{i,r}$ for some $i=1,\ldots,z-1$ and $r=1,\ldots,t-1$. Then, there exists an edge $\lbrace v_{i,r}, w_2 \rbrace$ by the construction of $G'$, so $e_1$ and $e_2$ do not form an induced $P_3$.

We have thus shown that there is no strong $P_3$ under $L$. Hence, the labeling $L$ is an STC-labeling with $|S_L| = \binom{t}{2}+(t-1)(z-1)\binom{t+1}{2}$.

$(\Leftarrow)$ Conversely, assume that there is an STC-labeling $L=(S_L,W_L)$ with a maximal number of strong edges for $G'$ such that $|S_L| \geq  \binom{t}{2}+(t-1)(z-1)\binom{t+1}{2}$. Consider a color class~$C_r$ for fixed $r=1,\ldots,t-1$. Let $v \in C_r$ be some vertex in~$C_r$. We can make the following important observations that follow directly from the construction of~$G'$.

\begin{Observation} \label{Observation1}
Every pair $(\lbrace v,u \rbrace, \lbrace v, w \rbrace)$ of edges with $u \in K$ for some $K \in \mathcal{K}_r$ and $w~\in~C_{r'},~r'~\neq~r$, forms a $P_3$. Hence,~$\{u,v\}$ or~$\{v,w\}$ is weak under~$L$.\end{Observation}

\begin{Observation} \label{Observation2}
For fixed $v \in C_r$, there are at most $t-1$  strong edges under $L$ of the form $\lbrace v, w \rbrace$ with $w~\in~C_{r'},$ $r'~\neq~r$. If~$v$ has exactly~$t-1$ strong neighbors, $v$ has exactly one strong neighbor in each other color class.\end{Observation}

\begin{Observation} \label{Observation3}
For fixed $v \in C_r$, there are up to~$t$ strong edges under $L$ of the form~$\lbrace v, w \rbrace$, $w \in \bigcup_{K \in \mathcal{K}_r} K$. If there are exactly~$t$ strong edges of such form, $v$ forms a strong $(t+1)$-clique with one of the attached cliques $K \in \mathcal{K}_r$.\end{Observation}


Now, consider an attached clique $K \in \mathcal{K}_r$ for some $r=1,\ldots,t-1$. Note that $K$ is a critical clique as defined in Section \ref{4k Kernel Section}. Hence, all edges between vertices in $K$ are strong under $L$. Without loss of generality, we can make the following assumption for $K$.

\begin{Assumption} \label{Assumption1}
For fixed $K \in \mathcal{K}_r$, there is at most one $j=1,\ldots,z$ such that there are strong edges $\lbrace w, v_{j,r} \rbrace \in E(K,C_r)$ under $L$.
\end{Assumption}

\begin{proof} [of Assumption \ref{Assumption1}]
If there was another strong edge $\lbrace w', v_{j',r} \rbrace \in E(K_{i,r},C_r)$ with $j' \neq j$, then $w' \neq w$ as otherwise $\lbrace w', v_{j',r} \rbrace$ and $\lbrace w, v_{j,r} \rbrace$ form a strong $P_3$ under $L$. We can thus define an STC-labeling $L^+ :=(S_{L^+},W_{L^+})$ with $|S_{L^+}|=|S_L|$ by 
\begin{align*}
S_{L^+} := S_L \cup \lbrace \lbrace w', v_{j,r} \rbrace \rbrace \setminus \lbrace \lbrace w', v_{j',r} \rbrace \rbrace \text{.}
\end{align*}
It is easy to check that $L^+$ still satisfies STC. $\hfill \Diamond$
\end{proof}

Assumption \ref{Assumption1} leads to the following two observations.

\begin{Observation} \label{Observation4}
If some $v \in C_r$ has a strong neighbor under $L$ in one clique $K \in \mathcal{K}_r$ it holds that $E(\lbrace v \rbrace, K) \subseteq S_L$.
\end{Observation}

\begin{proof} [of Observation \ref{Observation4}]
If there were weak and strong edges under $L$ in $E(\lbrace v \rbrace, K)$, we could define a new STC-labeling $L^+$ by adding all of $E(\lbrace v \rbrace, K)$ to $S_L$. Then, $L^+$ has more strong edges than $L$. Because of Assumption \ref{Assumption1}, $v$ is the only strong neighbor of the vertices in $K$ under $L^+$, and because of Observation \ref{Observation1}, the vertices of $K$ are the only strong neighbors of $v$ under $L^+$. Hence, $L^+$ satisfies STC. This contradicts the fact that $L$ is an STC-labeling with a maximal number of strong edges.$\hfill \Diamond$
\end{proof}

\begin{Observation} \label{Observation5}
There are $z-1$ vertices in $C_r$ that form a strong clique under $L$ of size $t+1$ with one of the attached cliques $K \in \mathcal{K}_r$.
\end{Observation}

\begin{proof} [of Observation \ref{Observation5}]
Assume that there are at least two vertices $v, w \in C_r$ that do not form a strong clique with some $K \in \mathcal{K}_r$. From Observation \ref{Observation4}, we conclude that $v$ and $w$ do not have any strong neighbor in $\bigcup_{K \in \mathcal{K}_r} K$. From Assumption \ref{Assumption1} we conclude that there is at least one $K \in \mathcal{K}_r$ such that the vertices in $K$ have no strong neighbors in $C_r$. Then, we can define a new labeling $L^+=(S_{L^+}, W_{L^+})$ by
\begin{align*}
S_{L^+} &:= S_L \cup E( \lbrace v \rbrace, K) \setminus E( \lbrace v \rbrace, \bigcup_{\substack{j=1,\ldots, t\\j \neq r}} C_j) \text{.}
\end{align*}
From Observation \ref{Observation2} and Observation \ref{Observation3} we conclude that $|S_{L^+}|>|S_L|$. Since every vertex in $K$ has only weak neighbors in $C_r \setminus \lbrace v \rbrace$ and $v$ has only weak neighbors in~$V' \setminus K$ under~$L^+$, the labeling $L^+$ satisfies STC, which contradicts the fact that~$L$ is an STC-labeling with a maximal number of strong edges. $\hfill \Diamond$
\end{proof}

From Observation \ref{Observation5} we know that there are $(t-1) \cdot (z-1) \cdot \binom{t+1}{2}$ strong edges under $L$ in those strong cliques of size $t+1$. Since $|S_L| \geq  \binom{t}{2}+(t-1)(z-1)\binom{t+1}{2}$, there are at least $\binom{t}{2}$ further edges that are strong under $L$. In the following, we describe how we can find a multicolored clique in $G'$ using these $\binom{t}{2}$ strong edges.

Let $R := \lbrace v_{1,1} ,v_{1,2} ,\ldots, v_{1,t-1} \rbrace \subseteq C_1 \cup \ldots \cup C_{t-1}$ be the set of vertices that do not form a strong clique with any attached $K$. Since $C_t$ is an independent set, each $v \in R$ has at most one strong neighbor in $C_t$. Hence, there are at most $t-1$ strong edges in $E(R, C_t)$. Since $\binom{t}{2} - (t-1) = \binom{t-1}{2}$, there must be $\binom{t-1}{2}$ strong edges between the vertices in $R$. Hence, $G'[R]$ is a complete subgraph, and $R$ is a strong clique under $L$.

Now let $U \subseteq C_t$ be the subset of vertices in $C_t$ that have a strong neighbor in $R$. The set $U$ is not empty since $|S_L| \geq  \binom{t}{2}+(t-1)(z-1)\binom{t+1}{2}$. Each $u \in U$ must have edges to each $v \in R$. Otherwise, if $\lbrace u ,v \rbrace \in E'$ and $\lbrace u ,w \rbrace \not \in E'$ for some $v,w \in R$, the edges $\lbrace u ,v \rbrace$ and $\lbrace w ,v \rbrace$ form a strong $P_3$ under $L$. Hence $R \cup \lbrace u \rbrace$ is a multicolored clique in $G'$ which proves the correctness of the reduction. \qed
\end{proof}

The proof of Theorem~\ref{thm:no-kernel-stc}  also implies that \textsc{CD} has no kernel with respect to the parameter~$\ell:=|E|-k$: The strong
edges in the STC-labeling obtained in the forward direction of the proof form a disjoint
union of cliques and the converse direction follows from the fact that a cluster subgraph
with at least~$\ell$ cluster edges implies an STC-labeling with at least~$\ell$ strong
edges which then implies that the \textsc{Multicolored Clique} instance is a yes-instance.
\begin{corollary}\label{cor:no-kernel-cd}
  \textsc{CD} parameterized by the number of cluster edges $\ell:=|E|-k$
  does not admit a polynomial kernel unless \badstuffhappens.
\end{corollary}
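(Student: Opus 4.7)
The plan is to reuse verbatim the polynomial parameter transformation from \textsc{Restricted Multicolored Clique} constructed in the proof of Theorem~\ref{thm:no-kernel-stc} and simply reinterpret it as a transformation to \textsc{CD} with parameter $\ell := |E'| - k$. The target instance $(G',k)$ is unchanged, so $\ell = \binom{t}{2}+(t-1)(z-1)\binom{t+1}{2}$ stays polynomial in~$|C_1\cup\cdots\cup C_{t-1}|$ and the transformation stays polynomial-time. It then remains to check the two directions of the equivalence for \textsc{CD} instead of \textsc{STC}.

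For the forward direction, I would take the multicolored clique $M = \{v_{z,1},\ldots,v_{z,t-1},u\}$ in $G$ and keep exactly the edge set $D := E_M \cup E_{\mathcal{K}} \cup E_C$ that serves as the strong edges in the proof of Theorem~\ref{thm:no-kernel-stc}. The key observation, which is essentially the only thing that has to be argued beyond what is already in that proof, is that $D$ is the edge set of a vertex-disjoint union of cliques: namely the clique on $M$ together with the clique on $\{v_{i,r}\} \cup K_{i,r}$ for each $r \in \{1,\ldots,t-1\}$ and each $i \in \{1,\ldots,z-1\}$. Vertex-disjointness is immediate from the construction, since the attached cliques $K_{i,r}$ are pairwise disjoint and disjoint from the colour classes, the vertices $v_{i,r}$ with $i<z$ are pairwise distinct and distinct from the $v_{z,r'}$ used in $M$, and $u \in C_t$ is disjoint from all of them. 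Hence deleting the $|E'|-|D|=k$ complementary edges turns $G'$ into a cluster graph.

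For the reverse direction, any cluster subgraph of $G'$ with at least $\ell$ edges is in particular $P_3$-free, so declaring its edges strong and all remaining edges weak yields an STC-labeling of $G'$ with at least $\ell$ strong edges. The backward direction of Theorem~\ref{thm:no-kernel-stc} then produces a multicolored clique in $G$. Composing this transformation with Proposition~\ref{Clique_by_VC reduced to MulitClique_by_VC} and the fact that \textsc{Clique} parameterized by vertex cover admits no polynomial kernel unless \badstuffhappens{}~\cite{BodlaenderJK14} gives the claimed lower bound. The only genuinely new piece of work is the disjointness check above, so I expect no real obstacle beyond verifying that routine fact.
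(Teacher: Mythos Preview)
Your proposal is correct and follows essentially the same argument as the paper: the paper observes exactly that the strong edge set $E_M\cup E_{\mathcal{K}}\cup E_C$ constructed in the forward direction of Theorem~\ref{thm:no-kernel-stc} is a disjoint union of cliques (hence a cluster subgraph), and that any cluster subgraph with at least~$\ell$ edges yields an STC-labeling with at least~$\ell$ strong edges, so the backward direction of Theorem~\ref{thm:no-kernel-stc} applies. Your explicit verification of vertex-disjointness spells out the one routine step the paper leaves implicit.
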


\section{Fixed-Parameter Algorithms for the Parameterization by the Number of Strong Edges or Cluster Edges}\label{sec:l-fpt}
For \textsc{CD}, we obtain a fixed-parameter algorithm by a simple dynamic programming algorithm.
\begin{theorem}
  \label{thm:l-cd}
\textsc{CD} can be solved in~$\Oh(9^{\ell}\cdot \ell n)$ time.
\end{theorem}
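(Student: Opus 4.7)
The plan is to combine a structural bound on the solution's size with color-coding and a subset dynamic program.

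First, I would establish a structural reduction: if $G$ admits a cluster subgraph with at least $\ell$ edges, then $G$ admits one with exactly $\ell$ edges whose non-isolated vertex set $V'$ satisfies $|V'| \le 2\ell$. Starting from any such witness, greedily remove a non-isolated vertex whenever doing so keeps the edge count at least $\ell$. When no further removal is possible, the edge count equals exactly $\ell$ and every clique has size at least $2$; combining with $\binom{s}{2} \ge s/2$ for $s \ge 2$ and summing over cliques yields $|V'| = \sum_i s_i \le 2 \sum_i \binom{s_i}{2} = 2\ell$.

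Given $|V'| \le 2\ell$, I would apply standard color-coding: color the vertices of $G$ with colors from $[2\ell]$ via an $(n, 2\ell)$-perfect hash family of size $2^{\Oh(\ell)} \log n$, ensuring that for at least one coloring the vertices of $V'$ receive pairwise distinct colors (a ``colorful'' solution). For each fixed coloring $\chi \colon V \to [2\ell]$, I would compute
\[
f(S) := \max\left\{\sum_{i=1}^{r} \binom{|K_i|}{2} : K_1,\ldots,K_r \text{ are disjoint cliques in } G,\ \chi \text{ bijective from } \bigcup_i K_i \text{ onto } S\right\},
\]
with $f(\emptyset) = 0$. For $S \neq \emptyset$, let $c^* := \min S$; the recurrence branches on the cluster containing $c^*$'s vertex (or skips that color):
\[
f(S) = \max\left(f(S \setminus \{c^*\}),\ \max_{\substack{c^* \in K \subseteq S,\, |K| \ge 2 \\ q_\chi(K)\ \text{holds}}} f(S \setminus K) + \binom{|K|}{2}\right),
\]
where $q_\chi(K)$ is the predicate ``there exists a clique in $G$ containing one vertex of each color in $K$''. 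The instance is a yes-instance iff $f([2\ell]) \ge \ell$ for some coloring in the hash family.

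For the running time, the number of pairs $(S, K)$ with $c^* \in K \subseteq S \subseteq [2\ell]$ equals $\sum_{S \subseteq [2\ell]} 2^{|S|-1} = \frac{1}{2}(3^{2\ell} - 1) \in \Oh(9^\ell)$. The clique-existence check $q_\chi(K)$ would be interleaved with the outer recursion: by maintaining the common neighborhood of the partial cluster during the recursion, each pair contributes only $\Oh(\ell n)$ amortized time. Combined with the derandomized hash family (whose multiplicative factor is absorbed in the stated bound), the total runtime is $\Oh(9^\ell \cdot \ell n)$.

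The main obstacle I anticipate is the efficient evaluation of the colorful-clique predicate $q_\chi(K)$ within this per-pair budget; a naive enumeration would require up to $n^{|K|}$ time per pair. Embedding this check into the color-subset DP via common-neighborhood propagation, so that committed partial clusters carry along the set of vertices still eligible to extend them, is the crucial technical ingredient.
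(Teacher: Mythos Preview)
Your color-coding approach has a genuine gap at exactly the step you yourself flag: evaluating the predicate $q_\chi(K)$. Deciding whether $G$ contains a clique with one vertex of each color in $K$ is precisely \textsc{Multicolored Clique} parameterized by $|K|$, which is $W[1]$-hard; no $f(|K|)\cdot n^{\mathcal{O}(1)}$ algorithm is expected. Your proposed fix of propagating the common neighborhood of a partially built cluster does not circumvent this: the common neighborhood depends on which concrete vertices have been chosen, not merely on their colors, and each color class may contain up to $n$ vertices. Carrying that information forces you either to branch over the up to $n$ candidates at each of the $|K|$ levels (time $n^{|K|}$), or to index the table by the actual vertex set of the partial cluster. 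Color-coding yields FPT algorithms for finding pattern subgraphs of bounded treewidth, but a clique on $|K|$ vertices has treewidth $|K|-1$; this is the standard reason the technique fails here.

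The paper takes a different and much simpler route that avoids this obstacle entirely. It first computes a maximal matching $M$; if $|M|\ge\ell$ the answer is yes. Otherwise the matching endpoints form a vertex cover $C$ with $|C|<2\ell$, and $I:=V\setminus C$ is independent. The key structural fact is that every cluster of any solution contains at most one vertex of $I$, so the vertices of $I$ can be processed one at a time. A DP table $T[i,C']$ over $i\in I$ and $C'\subseteq C$ suffices: the recurrence guesses the subset $C''\subseteq C'$ forming a cluster with $i$, and because $C''$ is an explicit set of vertices (not colors), testing whether $C''\cup\{i\}$ is a clique in $G$ is direct. The $3^{|C|}\le 9^{\ell}$ ordered triples $(C\setminus C',\,C'\setminus C'',\,C'')$ per vertex of $I$ give the stated bound.

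As a minor aside, your greedy-removal argument does not in general terminate with edge count exactly $\ell$: with $\ell=4$ and a single $K_4$ no vertex can be removed, yet six edges remain. The conclusion $|V'|\le 2\ell$ is nevertheless correct (at termination every clique has size at least $E-\ell+2$, whence $V\le 2E/(E-\ell+1)\le 2\ell$), so this is not the essential problem.
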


\begin{proof}
  The first step of the algorithm is to compute a maximal matching~$M$ in~$G$.
  If~$|M|\ge \ell$, then answer yes. Otherwise, since~$M$ is maximal, the
  endpoints of~$M$ are a vertex cover of size less than~$2\ell$. Let~$C$ denote this vertex cover and
  let~$I:= V\setminus C$ denote the independent set consisting of the vertices that are not an
  endpoint of~$M$. We now decide if there is a cluster subgraph with at least~$\ell$ cluster edges using dynamic
  programming over subsets of~$C$. Assume in the following that~$I:=\{1,\ldots, n-|C|\}$. The
  dynamic programming table~$T$ has entries of the type~$T[i,C']$ for
  all~$i\in \{0,1,\ldots, n-|C|\}$ and all~$C'\subseteq C$. Each entry stores the maximum
  number of cluster edges in a clustering of~$G[C'\cup \{1,\ldots ,i\}]$. After filling this
  table completely, we have a yes-instance if~$T[n-|C|,C]\ge \ell$ and a no-instance
  otherwise. The entries are computed for increasing values of~$i$ and subsets~$C'$ of
  increasing size. Note that the entry for~$i=0$ corresponds to the clusterings that contain no
  vertices of~$I$. The recurrence to compute an entry for~$i=0$ is

  $$T[0,C'] = \max_{C''\subseteq C': C'' \text{is a clique}} T[0,C'\setminus C''] + \binom{|C''|}{2}.$$

  The recurrence to compute an entry for~$i\ge 1$ is
 
  $$T[i,C'] = \max_{C''\subseteq C': C''\cup \{i\} \text{is a clique}} T[i-1,C'\setminus C''] + \binom{|C''|+1}{2}.$$

  The correctness follows from the observation that we consider all cases for the clique
  containing~$i$ since~$i$ is not adjacent to any vertex~$j<i$.

  The running time of the algorithm can be seen as follows. A maximal matching can be computed
  in linear time. If the matching has size less than~$\ell$, we fill the dynamic programming
  table as defined above. For each~$i$, the number of terms that are evaluated in the
  recurrences is~$3^{|C|}$ as each term corresponds to one partition of~$C$
  into~$C\setminus C'$,~$C'\setminus C''$, and~$C''$. For each term one needs to determine
  in~$\Oh(\ell^2)$ time whether~$C''\cup \{i\}$ is a clique. Hence, the overall time needed to
  fill~$T$ is~$\Oh(3^{2\ell}\cdot \ell n)=\Oh(9^{\ell}\cdot \ell n)$. $\qed$
\end{proof}

For \textsc{STC}, we combine a branching on the graph that is induced by a maximal
matching with a dynamic programming over the vertex sets of this graph.
\begin{theorem}\label{thm:l-stc}\textsc{STC} can be solved in~$\ell^{\Oh(\ell)}\cdot n$ time.
\end{theorem}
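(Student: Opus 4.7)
The plan is to reduce to the case where~$G$ has a small vertex cover and then combine enumeration with dynamic programming, in analogy with the proof of Theorem~\ref{thm:l-cd}. First, I would compute a maximal matching~$M$ of~$G$ in linear time. If~$|M|\ge \ell$, I answer yes immediately, since~$(M,E\setminus M)$ is an STC-labeling with at least~$\ell$ strong edges: no two matching edges share an endpoint, so no induced~$P_3$ has two strong edges. Otherwise~$C:=V(M)$ is a vertex cover of~$G$ with~$|C|<2\ell$, and~$I:=V\setminus C$ is an independent set.

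Next, I enumerate all subsets~$H\subseteq E(G[C])$ with~$|H|\le \ell$ that satisfy the STC condition within~$G[C]$, that is, no induced~$P_3$ of~$G[C]$ has both of its edges in~$H$. Since~$|E(G[C])|<2\ell^2$, there are at most~$\binom{2\ell^2}{\le\ell}=\ell^{\Oh(\ell)}$ candidates, and local STC can be checked in~$\Oh(\ell^3)$ time per candidate. Restricting to~$|H|\le\ell$ is safe: given any STC-labeling of~$G$ with at least~$\ell$ strong edges, if its restriction to~$G[C]$ has more than~$\ell$ edges then one may weaken arbitrary such edges (this preserves STC) until exactly~$\ell$ remain, yielding a yes-witness whose~$G[C]$-part is among our candidates. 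Hence it suffices to check whether some enumerated~$H$ extends to a global STC-labeling with at least~$\ell$ strong edges in total.

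For each valid~$H$, I compute the maximum number of strong~$(C,I)$-edges that extend~$H$ via a dynamic program over subsets of~$C$. Fix an ordering~$i_1,\dots,i_{|I|}$ of~$I$ and set~$T(i):=N_G(i)\cap C$. Let~$T[j,U]$ denote the maximum number of strong edges between~$\{i_1,\dots,i_j\}$ and~$C$ subject to~$U\subseteq C$ being exactly the set of~$C$-vertices already assigned as a strong neighbor of some~$i_r$ with~$r\le j$. The recurrence is
$$T[j,U]=\max_{S\subseteq U\colon S\text{ valid for }i_j}T[j-1,U\setminus S]+|S|,$$
where~$S$ plays the role of the strong~$C$-neighborhood of~$i_j$ and is \emph{valid} if (i)~$S\subseteq T(i_j)$, (ii)~$S$ is a clique in~$G$, and (iii)~for every~$v\in S$ the set~$N_H(v)$ of strong~$C$-neighbors of~$v$ under~$H$ is contained in~$T(i_j)$. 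The constraint~$S\subseteq U$ forces the strong~$C$-neighborhoods of distinct~$I$-vertices to be disjoint. The answer is yes iff some enumerated~$H$ satisfies~$|H|+\max_U T[|I|,U]\ge\ell$.

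The main obstacle is the case analysis for the DP correctness. One must verify for each kind of induced~$P_3$ of~$G$ that the constraints rule out two strong edges: condition (ii) handles strong~$P_3$s with middle vertex in~$I$, condition (iii) handles those with middle vertex in~$C$ and exactly one endpoint in~$I$, disjointness of the~$S$-sets together with the independence of~$I$ handles those with middle vertex in~$C$ and both endpoints in~$I$, and~$P_3$s entirely inside~$C$ are handled by the STC-validity of~$H$ itself. For the running time, the DP has~$|I|\cdot 2^{|C|}=n\cdot 2^{\Oh(\ell)}$ states with total transition work bounded by~$n\cdot 3^{|C|}=n\cdot 2^{\Oh(\ell)}$, so each~$H$ is processed in~$n\cdot 2^{\Oh(\ell)}$ time, giving an overall running time of~$\ell^{\Oh(\ell)}\cdot n$.
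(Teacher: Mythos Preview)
Your proposal is correct and follows essentially the same approach as the paper: compute a maximal matching to obtain a small vertex cover~$C$, enumerate the at-most-$\ell$-edge STC-labelings of~$G[C]$, and for each one run a subset-of-$C$ dynamic program over the independent-set vertices to find the best extension. Your validity conditions (i)--(iii) on the strong neighborhood of each~$i\in I$ coincide with the paper's, your case analysis for the four types of induced~$P_3$s matches the paper's correctness argument, and taking $\max_U T[|I|,U]$ at the end is a harmless (arguably cleaner) variant of the paper's check of~$T[n-|C|,C]$.
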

\begin{proof}
  The initial step of the algorithm is to compute a maximal matching~$M$ in~$G$.
  If~$|M|\ge \ell$, then answer yes. Otherwise, the
  endpoints of~$M$ are a vertex cover of size less than~$2\ell$ since~$M$ is maximal. Let~$C$ denote this vertex cover and
  let~$I:= V\setminus C$ denote the independent set consisting of the vertices that are
  not an endpoint of~$M$. The algorithm now has two further main steps. First, try all
  STC-labelings of~$G[C]$ with at most~$\ell$ strong edges. If there is one STC-labeling
  with~$\ell$ strong edges, then answer yes. Otherwise, compute for each STC-labeling
  of~$G[C]$ with fewer than~$\ell$ edges, whether it can be extended to an STC-labeling
  of~$G$ with~$\ell$ strong edges by labeling sufficiently many edges
  of~$E(C,I)$ as strong.

  Observe that~$G[C]$ has~$\ell^{\Oh({\ell})}$ STC-labelings with at most~$\ell$ strong
  edges and that they can be enumerated in~$\ell^{\Oh({\ell})}$ time: The graph~$G[C]$ has less than~$\binom{2\ell}{2}=\Oh(\ell^2)$ edges and we enumerate all subsets
  of size at most~$\ell$ of this set. Now consider one such set~$S_C$. In~$\Oh({\ell^2})$
  time, we can check whether~$(S_C,E(C)\setminus S_C)$ is a
  valid~STC-labeling. If this is not the case, then
  discard the current set. Otherwise, compute whether this labeling can be extended into a
  labeling of~$G$ with at least~$\ell$ strong edges by using dynamic programming over
  subsets of~$C$. Assume in the following that~$I:=\{1,\ldots, n-|C|\}$.  The dynamic
  programming table~$T$ has entries of the type~$T[i,C']$ for
  all~$i\in \{1,\ldots, n-|C|\}$ and all~$C'\subseteq C$. Each entry stores the maximum
  number of strong edges in an STC-labeling of~$G[C\cup \{1,\ldots ,i\}]$ in which the strong
  edges of~$E(C)$ are exactly those of~$S_C$ and in which the strong neighbors of the
  vertices in~$\{1,\ldots ,i\}$ are exactly from~$C'$. Observe that the set of strong
  neighbors~$N_S(i)$ of each vertex~$i$ has to fulfill three properties:
  \begin{itemize}
  \item $N_S(i)$ is a clique.
  \item No vertex of~$N_S(i)$ has a strong neighbor in~$C\setminus N(i)$.
  \item No vertex of~$N_S(i)$ has a strong neighbor in~$I \setminus \{i\}$.
  \end{itemize}
  We call a set that fulfills the first two properties \emph{valid} for~$i$. We ensure the
  third property by the recurrence in the dynamic programming.

 After filling this table completely, we have a yes-instance if~$T[n-|C|,C]\ge \ell$. Otherwise, the current STC-labeling for~$G[C]$ cannot be extended to an STC-labeling for~$G$ with at least~$\ell$ strong edges. If~$T[n-|C|,C]< \ell$ for every choice of an STC-labeling for~$G[C]$ we have a no-instance. The entries are computed for increasing values of~$i$ and
  subsets~$C'$ of increasing size. The basic entry is~$T[0,\emptyset]$ which is set
  to~$|S_C|$. The recurrence to compute an entry for~$i\ge 1$ is
   $$T[i,C'] = \max_{C''\subseteq C': C''  \text{is valid for}  \{i\}} T[i-1,C'\setminus C''] + {|C''|}.$$
   The correctness follows from the observation that we consider all valid sets for strong
   neighbors and that in the optimal solution for~$G[i-1,C'\setminus C'']$ no vertex
   from~$\{1, \ldots, i-1\}$ has strong neighbors in~$C''$.

   The running time of the algorithm can be seen as follows. A maximal matching can be
   computed greedily in linear time. If the matching has size less than~$\ell$, we fill the dynamic
   programming table as defined above. The number of partial labelings~$S_C$
   is~$\ell^{\Oh(\ell)}$. For each of them, in~$\Oh(2^{2\ell}\cdot \ell n)$ time, we can
   compute for each~$i$ the subsets of~$C$ which are valid for~$i$. The number of terms
   that are subsequently evaluated in the recurrences is~$3^{|C|}$ as each term
   corresponds to one partition of~$C$ into~$C\setminus C'$,~$C'\setminus C''$,
   and~$C''$. For each term, one needs to evaluate the equation in~$\Oh(1)$ time. Hence,
   the overall time needed to fill~$T$ for one partial labeling~$S_C$
   is~$\Oh(3^{2\ell}\cdot n)=\Oh(9^{\ell}\cdot n)$; the overall running time follows. $\qed$
\end{proof}

\section{Strong Triadic Closure and Cluster Deletion on $H$-free graphs}\label{sec:stc-cd}

Recall that every solution for \textsc{CD} provides an STC-labeling $L=(S_L,W_L)$ by defining $S_L$ as the set of edges inside the cliques in the resulting graph. We call such~$L$ a \textit{cluster labeling}. However, this labeling is not necessarily optimal~\cite{konstantinidis_et_alL}.

In this section we discuss the complexity and the solution structure if the input for \textsc{STC} and \textsc{CD} is limited to $H$-free graphs, that is, graphs that do not have an induced subgraph~$H$. We give a dichotomy for all classes of $H$-free graphs, where~$H$~is a graph on three or four vertices.
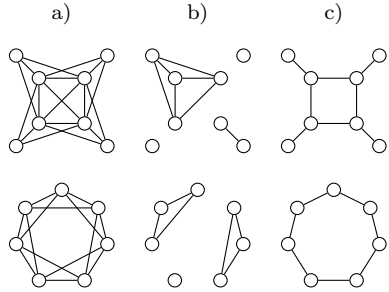
\begin{figure}[t]
\begin{center}
 \setlength{\tabcolsep}{2mm}
  \renewcommand{\arraystretch}{4}
\begin{tabular}{ c c c }
a)&b)&c)\\
	\begin{tikzpicture}[scale=0.6]
		\tikzstyle{knoten}=[circle,fill=white,draw=black,minimum size=5pt,inner sep=0pt]
		\node[knoten] (1) at (0,0) {};
		\node[knoten] (2) at (0,1) {};
		\node[knoten] (3) at (1,1) {};
		\node[knoten] (4) at (1,0) {};
		
		\node[knoten] (a) at (-0.5,1.5) {};
		\node[knoten] (b) at (1.5,1.5) {};
		\node[knoten] (c) at (-0.5,-0.5) {};
		\node[knoten] (d) at (1.5,-0.5) {};

		\draw[-] (1) edge (2);
		\draw[-] (1) edge (3);
		\draw[-] (1) edge (4);
		\draw[-] (2) edge (3);
		\draw[-] (2) edge (4);
		\draw[-] (3) edge (4);
		
		\draw[-] (a) edge (1);
		\draw[-] (a) edge (2);
		\draw[-] (a) edge (3);
		
		\draw[-] (b) edge (2);
		\draw[-] (b) edge (3);
		\draw[-] (b) edge (4);
		
		\draw[-] (c) edge (1);
		\draw[-] (c) edge (4);
		\draw[-] (c) edge (2);
		
		\draw[-] (d) edge (1);
		\draw[-] (d) edge (4);
		\draw[-] (d) edge (3);

	\end{tikzpicture}
	&
	\begin{tikzpicture}[scale=0.6]
		\tikzstyle{knoten}=[circle,fill=white,draw=black,minimum size=5pt,inner sep=0pt]
		\node[knoten] (1) at (0,0) {};
		\node[knoten] (2) at (0,1) {};
		\node[knoten] (3) at (1,1) {};
		\node[knoten] (4) at (1,0) {};
		
		\node[knoten] (a) at (-0.5,1.5) {};
		\node[knoten] (b) at (1.5,1.5) {};
		\node[knoten] (c) at (-0.5,-0.5) {};
		\node[knoten] (d) at (1.5,-0.5) {};

		\draw[-] (1) edge (2);
		\draw[-] (1) edge (3);
		\draw[-] (2) edge (3);
		
		\draw[-] (a) edge (1);
		\draw[-] (a) edge (2);
		\draw[-] (a) edge (3);
		
		\draw[-] (d) edge (4);
	\end{tikzpicture}
	&
	\begin{tikzpicture}[scale=0.6]
		\tikzstyle{knoten}=[circle,fill=white,draw=black,minimum size=5pt,inner sep=0pt]
		\node[knoten] (1) at (0,0) {};
		\node[knoten] (2) at (0,1) {};
		\node[knoten] (3) at (1,1) {};
		\node[knoten] (4) at (1,0) {};
		
		\node[knoten] (a) at (-0.5,1.5) {};
		\node[knoten] (b) at (1.5,1.5) {};
		\node[knoten] (c) at (-0.5,-0.5) {};
		\node[knoten] (d) at (1.5,-0.5) {};

		\draw[-] (1) edge (2);
		\draw[-] (2) edge (3);
		\draw[-] (3) edge (4);
		\draw[-] (4) edge (1);
		
		\draw[-] (a) edge (2);
		
		\draw[-] (b) edge (3);
		
		\draw[-] (c) edge (1);
		
		\draw[-] (d) edge (4);
	\end{tikzpicture}
	\\

	\begin{tikzpicture}[scale=0.6]
		\tikzstyle{knoten}=[circle,fill=white,draw=black,minimum size=5pt,inner sep=0pt]
		\node[knoten] (1) at (-1,0) {};
		\node[knoten] (2) at (-0.8,0.8) {};
		\node[knoten] (3) at (0,1.2) {};
		\node[knoten] (4) at (0.8,0.8) {};
		\node[knoten] (5) at (1,0) {};
		\node[knoten] (6) at (0.5,-0.8) {};
		\node[knoten] (7) at (-0.5,-0.8) {};
		
		\draw[-] (1) edge (2);
		\draw[-] (2) edge (3);
		\draw[-] (3) edge (4);
		\draw[-] (4) edge (5);
		\draw[-] (5) edge (6);
		\draw[-] (6) edge (7);
		\draw[-] (7) edge (1);
		
		\draw[-] (1) edge (3);
		\draw[-] (2) edge (4);
		\draw[-] (3) edge (5);
		\draw[-] (4) edge (6);
		\draw[-] (5) edge (7);
		\draw[-] (6) edge (1);
		\draw[-] (7) edge (2);
	\end{tikzpicture}
	&
	\begin{tikzpicture}[scale=0.6]
		\tikzstyle{knoten}=[circle,fill=white,draw=black,minimum size=5pt,inner sep=0pt]
		\node[knoten] (1) at (-1,0) {};
		\node[knoten] (2) at (-0.8,0.8) {};
		\node[knoten] (3) at (0,1.2) {};
		\node[knoten] (4) at (0.8,0.8) {};
		\node[knoten] (5) at (1,0) {};
		\node[knoten] (6) at (0.5,-0.8) {};
		\node[knoten] (7) at (-0.5,-0.8) {};
		
		\draw[-] (1) edge (2);
		\draw[-] (2) edge (3);
		\draw[-] (4) edge (5);
		\draw[-] (5) edge (6);
		
		\draw[-] (1) edge (3);
		\draw[-] (4) edge (6);
	\end{tikzpicture}
	&
	\begin{tikzpicture}[scale=0.6]
		\tikzstyle{knoten}=[circle,fill=white,draw=black,minimum size=5pt,inner sep=0pt]
		\node[knoten] (1) at (-1,0) {};
		\node[knoten] (2) at (-0.8,0.8) {};
		\node[knoten] (3) at (0,1.2) {};
		\node[knoten] (4) at (0.8,0.8) {};
		\node[knoten] (5) at (1,0) {};
		\node[knoten] (6) at (0.5,-0.8) {};
		\node[knoten] (7) at (-0.5,-0.8) {};
		
		\draw[-] (1) edge (2);
		\draw[-] (2) edge (3);
		\draw[-] (3) edge (4);
		\draw[-] (4) edge (5);
		\draw[-] (5) edge (6);
		\draw[-] (6) edge (7);
		\draw[-] (7) edge (1);
		
	\end{tikzpicture}
\\
\end{tabular}
\end{center}
\caption{Two graphs where no cluster labeling is an optimal STC-labeling. Column a) shows the input graph, column~b) shows an optimal cluster labeling, and column c) shows the strong edges in an optimal STC-labeling.}
\label{Figure Examples ClusterLabel}
\end{figure} 
\subsection{The Correspondence between Strong Triadic Closure and Cluster Deletion on~$H$-Free Graphs}
We say that the two problems \emph{correspond} on a graph class $\Pi$ if for every graph in~$\Pi$ we can find a cluster labeling that is an optimal STC-labeling. In this case we call the labeling an \emph{optimal cluster labeling}.

Fig.~\ref{Figure Examples ClusterLabel} shows two examples, where a cluster labeling is not an optimal solution for \textsc{STC}. The upper example, provided by Konstantinidis and Papadopoulos~\cite{konstantinidis_et_alL}, is $C_4$-, $2K_2$-, co-paw-, and co-diamond-free; an optimal STC-labeling has eight strong edges, while the best cluster labeling has only seven cluster edges. The second example is the complement of a $C_7$. It is $3K_1$-, $K_4$-, $4K_1$-, claw-, and co-claw free; the optimal STC-labeling has seven strong edges, while the best cluster labeling has six cluster edges. The examples give the cases where STC and \textsc{CD} do not correspond.

\begin{theorem}
  \label{STC=CD}
The problems \textsc{CD} and \textsc{STC}
\begin{enumerate}
\item[•] do not correspond on the class of $H$-free graphs, for $H \in \lbrace 3K_1, C_4, 2K_2, \text{co-paw},$\\
$\text{co-diamond},  K_4, 4K_1, \text{claw}, \text{co-claw} \rbrace$, and
\item[•] correspond on the class of $H$-free graphs, for $H \in \lbrace K_3, P_3, K_2 + K_1, P_4, \text{diamond},$ $\text{paw} \rbrace$. 
\end{enumerate}
\end{theorem}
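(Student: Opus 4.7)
The plan splits along the two halves of the statement. For the non-correspondence part, it suffices to produce, for each of the nine listed $H$, a single $H$-free graph on which the maximum number of strong edges in an STC-labeling strictly exceeds the maximum number of edges in a cluster subgraph. I would reuse the two witness graphs already shown in Figure~\ref{Figure Examples ClusterLabel}: the Konstantinidis--Papadopoulos graph handles $H \in \{C_4, 2K_2, \text{co-paw}, \text{co-diamond}\}$ and the complement of $C_7$ handles $H \in \{3K_1, K_4, 4K_1, \text{claw}, \text{co-claw}\}$. The $H$-freeness of each witness is verified by inspection of its induced three- and four-vertex subgraphs, and the gap between the two optima is verified by exhibiting the explicit STC-labeling shown in column~(c) of the figure and ruling out any larger cluster subgraph by a brute-force case distinction on the small graph.

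For the correspondence direction, the plan is a case-by-case structural argument. If $G$ is $P_3$-free, the graph already is a cluster graph and the all-strong labeling is simultaneously optimal for STC and CD. If $G$ is $K_3$-free, two strong edges sharing a vertex would close to a triangle by the STC property, which is impossible; hence the strong edges form a matching and thus a cluster subgraph, and a maximum matching equalizes both optima. The $P_4$-free case is covered by prior cograph results~\cite{KNP17L}. For the $K_2+K_1$-free case, I would use that $K_2+K_1$ is the complement of $P_3$, so these graphs decompose into isolated vertices together with one complete multipartite component; on such a component, a vertex-by-vertex exchange argument on the closed strong neighborhood $N_S(v) \cup \{v\}$ shows that every optimal STC-labeling can be rearranged into a partition of $V$ into strong cliques without decreasing the strong-edge count. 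The paw-free case then follows via Olariu's theorem, which splits every connected paw-free graph into either a triangle-free or a complete multipartite component, reducing to the $K_3$-free and $K_2+K_1$-free cases.

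The hardest remaining case is diamond-free graphs. The key structural fact is that two distinct maximal cliques of a diamond-free graph share at most one vertex, so every edge lies in a unique maximal clique; equivalently, for any maximal clique $K$ and any vertex $v \in K$, every neighbor of $v$ outside $K$ is non-adjacent to all of $K \setminus \{v\}$. Given an optimal STC-labeling $L$, this structural fact forces the strong neighbors of any vertex $v$ to lie inside a single maximal clique $K_v$, since otherwise a $P_3$ with two strong edges would leave $K_v$ via a non-edge inherited from the diamond-free assumption. A further local optimality argument then shows that every maximal clique that appears as an active $K_v$ can be assumed fully strong, so the strong edges form a cluster subgraph of the desired value.

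The main obstacle I expect is executing the exchange arguments in the $K_2+K_1$-free and diamond-free cases: one must simultaneously assign each vertex to a unique \emph{active} maximal clique compatible with its strong neighborhood and upgrade all remaining weak edges inside each active clique to strong without creating a forbidden induced $P_3$. For complete multipartite graphs, the rigid part-structure makes this direct; for diamond-free graphs, the interactions between different maximal cliques through shared vertices require the most careful bookkeeping and drive the difficulty of the proof.
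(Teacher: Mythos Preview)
Your proposal is correct and follows the paper's proof almost exactly: the same two witness graphs from Figure~\ref{Figure Examples ClusterLabel} dispatch the non-correspondence half, and the correspondence half is the same case analysis, including the strictly-clique-irreducible argument for diamond-free graphs.

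The only divergence is in how you thread the $K_2+K_1$-free and paw-free cases. The paper takes the shorter path: it notes that every $K_2+K_1$-free graph is paw-free, and that the complete multipartite components arising from Olariu's theorem are $P_4$-free, so both cases fall back on the already-cited cograph result~\cite{KNP17L} with no new exchange argument needed. You instead handle complete multipartite graphs directly via a strong-neighborhood exchange and then reduce paw-free to that. Your route is valid, but the paper's is more economical and sidesteps exactly the bookkeeping you flag as an obstacle. (Minor slip: a $K_2+K_1$-free graph with an edge cannot have isolated vertices, so it is simply a single complete multipartite graph, not ``isolated vertices together with one''---but this does not affect your argument.)
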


\begin{proof} The examples in Fig.~\ref{Figure Examples ClusterLabel} show that \textsc{CD} and \textsc{STC} do not correspond on $H$-free graphs for $H \in \lbrace 3K_1, C_4, 2K_2, \text{co-paw}, \text{co-diamond},  K_4,$ $4K_1, \text{claw}, \text{co-claw} \rbrace$. It remains to show the correspondence for $H \in \lbrace K_3, P_3, K_2 + K_1, P_4, \text{diamond},$ $\text{paw} \rbrace$.

\textbf{Case 1:} $H=K_3$. On triangle-free graphs, the strong edges of an optimal STC-labeling correspond to the edges of a maximum matching, which is obviously an optimal cluster labeling.

\textbf{Case 2:} $H=P_3$. On $P_3$-free graphs, every edge is labeled as strong in an optimal STC-labeling \cite{sintosL}. Since $P_3$-free graphs are cluster graphs, this labeling clearly is a cluster labeling.

\textbf{Case 3:} $H =P_4$. There is an optimal cluster labeling on $P_4$-free~graphs \cite{KNP17L}.

\textbf{Case 4:} $H = \text{paw}$. It is known that every component of a paw-free graph is triangle-free or complete multipartite \cite{olariu1988paw}. Complete multipartite graphs are $P_4$-free, so it follows by the Cases~1 and~3 that there is an optimal cluster labeling on paw-free graphs.

\textbf{Case 5:} $H = K_2 + K_1$. Every $K_2 + K_1$-free graph is paw-free, so it follows by Case~4 that there is an optimal cluster labeling on $K_2 + K_1$-free graphs.

\textbf{Case 6:} $H = \text{diamond}$. 
Let $G=(V,E)$ be a diamond-free graph. We prove that there is an optimal cluster labeling for $G$. It is known that the class of diamond-free graphs can be characterized as \textit{strictly clique irreducible} graphs \cite{prisner1993}. A graph is called strictly clique irreducible if every edge in the graph lies in a unique maximal clique. 

To show that there is an optimal cluster labeling, it is sufficient to prove that there is an optimal STC-labeling $L=(S_L,W_L)$ such that there is no triangle~$u_1, u_2, u_3 \in V$ with~$\lbrace u_1, u_2 \rbrace, \lbrace u_2, u_3 \rbrace \in S_L$ and~$\lbrace u_1, u_3 \rbrace \in W_L$.

Let $v \in V$ be some vertex of $G$. Since $G$ is strictly clique irreducible, we can partition $N[v]$ into maximal cliques $K_1,K_2,\ldots,K_t$ such that $K_i \cap K_j = \lbrace v \rbrace$ for $i \neq j$. Let $L=(S_L,W_L)$ be an optimal STC-labeling for $G$ such that $v$ has a strong neighbor $w_1$ in $K_1$ under $L$. We prove that $v$ does not have a strong neighbor in any of the other maximal cliques, which means $E(\lbrace v \rbrace, N(v) \setminus K_1) \subseteq W_L$. Assuming $v$ has a strong neighbor $w_j \in K_j$ for some $j \neq 1$, there must be an edge $\lbrace w_1, w_j \rbrace \in E$ since $L$ satisfies STC. Then, there is a clique $K_+ \subseteq N[v]$ containing $v$, $w_1$, and $w_j$, which contradicts the fact that $G$ is strictly clique irreducible.

Now assume that there is a triangle $u_1, u_2, u_3 \in V$ such that $\lbrace u_1, u_2 \rbrace, \lbrace u_2, u_3 \rbrace \in S_L$ and $\lbrace u_1, u_3 \rbrace \in W_L$. Since every vertex can only have strong neighbors in one maximal clique, $u_1$, $u_2$, and $u_3$ are elements of the same maximal clique $K$. Since $u_1$ and $u_3$ do not have any strong neighbors in $V \setminus K$, we do not produce a strong~$P_3$ by adding $\lbrace u_1, u_3 \rbrace$ to $S_L$, which contradicts the fact that $L$ is an optimal STC-labeling. \qed
\end{proof}

\subsection{The Complexity of Strong Triadic Closure and Cluster Deletion on~$H$-Free~Graphs}

We first identify the cases where both problems are solvable in polynomial time.

\begin{lemma} \label{poly-time solvable}
If $H \in \lbrace K_3, P_3, K_2 + K_1, P_4, \text{paw} \rbrace$, \textsc{STC} and \textsc{CD} are solvable in polynomial time on $H$-free graphs.
\end{lemma}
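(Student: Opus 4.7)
The plan is to exploit the correspondence established in Theorem~\ref{STC=CD}: for each~$H$ in the current list, an optimal \textsc{CD} solution directly induces an optimal STC-labeling by declaring the surviving edges strong and the deleted edges weak. Consequently, it suffices to give a polynomial-time algorithm for \textsc{CD} on each of the listed classes; an algorithm for \textsc{STC} then follows immediately by running the \textsc{CD} algorithm and reading off the labeling.

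I would then handle the five cases one by one. For $H=P_3$, any $P_3$-free graph is already a disjoint union of cliques, so the optimum uses zero deletions (and STC labels every edge strong). For $H=K_3$, the only cluster subgraphs of a triangle-free graph are those whose components are $K_1$ or $K_2$; hence maximizing the number of cluster edges reduces to computing a maximum matching, which is polynomial. For $H=P_4$ (cographs), the polynomial-time algorithms of~\cite{GHN13} for \textsc{CD} and~\cite{KNP17L} for \textsc{STC} apply directly.

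For $H=\text{paw}$, the plan invokes Olariu's structural theorem~\cite{olariu1988paw}: every connected paw-free graph is either triangle-free or complete multipartite. Since \textsc{CD} (and, via correspondence, \textsc{STC}) decomposes over connected components, it suffices to solve each component separately; triangle-free components fall under the $K_3$ case, while complete multipartite components are $P_4$-free and are thus handled by the cograph algorithm. Finally, for $H=K_2+K_1$, I would observe that the paw contains an induced $K_2+K_1$ (take the pendant vertex together with the edge of the triangle that is not incident with its neighbor), so every $(K_2+K_1)$-free graph is paw-free, and the previous case applies.

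The main obstacle, in so far as there is one, is the paw-free case: without Theorem~\ref{STC=CD} it would not be automatic that componentwise-optimal cluster labelings assemble into a \emph{globally} optimal STC-labeling, since in principle an STC-labeling could exploit structure across components. The correspondence result removes this concern, and combined with Olariu's decomposition the paw-free case reduces cleanly to the already-solved $K_3$-free and $P_4$-free subcases; the remaining cases are immediate.
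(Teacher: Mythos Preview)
Your proposal is correct and follows essentially the same approach as the paper: both rely on maximum matchings for the $K_3$-free case, the trivial all-strong labeling for the $P_3$-free case, the cited cograph algorithms for the $P_4$-free case, Olariu's decomposition~\cite{olariu1988paw} for the paw-free case, and the observation that $(K_2+K_1)$-free graphs are paw-free. The only cosmetic difference is that you explicitly invoke Theorem~\ref{STC=CD} up front to reduce \textsc{STC} to \textsc{CD}, whereas the paper argues both problems in parallel without appealing to the correspondence; your remark that an STC-labeling might ``exploit structure across components'' is unnecessary (there are no edges between components), but this does not affect the argument.
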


\begin{proof}
\textsc{STC} and \textsc{CD} are solvable in polynomial time on $P_4$-free graphs \cite{KNP17L}. On triangle-free graphs, we can solve both problems by computing a maximal matching, which can be done in polynomial time~\cite{MV80L}. On $P_3$-free graphs we can find a trivial solution by labeling every edge strong. It is known that every component of a paw-free graph is triangle-free or complete multipartite and thus~$P_4$-free~\cite{olariu1988paw}. Hence, we can use the polynomial-time algorithm on the $P_4$-free components and a polynomial-time algorithm to find a  maximum matching on the triangle-free components. Since $K_2 + K_1$-free graphs are paw-free, we can use the same polynomial-time algorithm on these graphs. \qed
\end{proof}
In all other possible cases for $H$, both problems remain NP-hard on $H$-free graphs. For the case~$H \in \{3K_1, 4K_1, 2K_2, \text{claw},\text{co-diamond}, \text{co-paw}\}$ we give a reduction from \textsc{Clique} that has been previously used to show certain hardness results for~\textsc{CD}~\cite{N99}. For the case~$H = \text{co-claw}$ we provide a slightly more complicated reduction. The remaining cases~$H \in \{C_4, \text{diamond}, K_4\}$ follow from previous results~\cite{KU12}. Consider the following construction.

\begin{definition}
Let $G=(V,E)$ be a graph. The \emph{expanded graph} $\tilde{G}$ of $G$ is the graph obtained by 
  adding a clique $\tilde{K} = \lbrace v_1, \dots , v_{|V|^3} \rbrace$ and
  edges such that every $v \in V$ is adjacent to all vertices in $\tilde{K}$.
\end{definition}

Obviously, we can construct $\tilde{G}$ from $G$ in polynomial time. We use this construction to give a 
reduction from \textsc{Clique} to  \textsc{STC} and \textsc{CD}. The construction also transfers certain $H$-freeness properties from $G$ to $\tilde{G}$.
\begin{lemma} 
  \label{Clique Reduction STC and CD} Let $(G=(V,E),t)$ be a
    \textsc{Clique} instance.
    \begin{enumerate}
    \item[$(a)$] There is a clique of size at least $t$ in $G$ if and only if there is an
      STC-labeling $L=(S_L,W_L)$ for $\tilde{G}$ such that
      $|S_L| \geq \binom{n^3}{2} + t \cdot n^3$.
    \item[$(b)$] There is a clique of size at least $t$ in $G$ if and only if $\tilde{G}$ has a solution
      for \textsc{CD} with at least
      $\binom{n^3}{2} + t \cdot n^3$ cluster edges.
    \end{enumerate}
  \end{lemma}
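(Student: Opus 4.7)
My plan is to exploit the fact that $\tilde K$ is a clique of size $n^3$ whose vertices are all universal in $\tilde G$ (adjacent to every other vertex), so that any large STC-labeling or clustering of $\tilde G$ must essentially contain $\tilde K$ and attach a $G$-clique to it. Both directions then reduce to extracting this attached clique.

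For the forward direction of both parts, any clique $K' \subseteq V$ of size $t$ in $G$ extends to a clique $K' \cup \tilde K$ of size $t + n^3$ in $\tilde G$. Labeling all internal edges strong and everything else weak yields an STC-labeling (no induced $P_3$ can carry two strong edges since they all lie in a single clique); taking $K' \cup \tilde K$ as one cluster and $V \setminus K'$ as singletons yields a \textsc{CD}-solution. In both cases the identity $\binom{t+n^3}{2} = \binom{n^3}{2} + t n^3 + \binom{t}{2}$ delivers at least $\binom{n^3}{2} + t n^3$ strong or cluster edges.

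For the backward direction of (a), I would first normalize $L$ so that every edge inside $\tilde K$ is strong: any two vertices $w, w' \in \tilde K$ share the same closed neighborhood, so $\{w, w'\}$ lies in no induced $P_3$ and flipping it to strong cannot violate STC. With $\binom{n^3}{2}$ strong edges now accounted for, at least $t n^3$ additional strong edges must lie in $E(V, \tilde K) \cup E(V)$; since $|E(V)| \leq \binom{n}{2} \ll n^3$, an averaging argument over $\tilde K$ produces some $w \in \tilde K$ whose set $B(w)$ of strong neighbors in $V$ satisfies $|B(w)| \geq t$. For any $v, v' \in B(w)$ the strong edges $\{v, w\}$ and $\{w, v'\}$ force $\{v, v'\} \in E$ by STC, so $B(w)$ is a clique of size at least $t$ in $G$.

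For the backward direction of (b), the crucial step is a swap argument showing we may assume $\tilde K$ is contained in a single cluster. Let $C_1, \ldots, C_s$ denote the clusters meeting $\tilde K$, with $a_i := |C_i \cap \tilde K|$ and $b_i := |C_i \cap V|$, indexed so that $b_1$ is maximum. Moving every $\tilde K$-vertex into $C_1$ keeps it a clique in $\tilde G$ (the $\tilde K$-vertices are universal), and a short direct count shows that the number of cluster edges changes by the non-negative quantity $\bigl(\binom{n^3}{2} - \sum_i \binom{a_i}{2}\bigr) + \sum_{i \geq 2} a_i(b_1 - b_i)$. After this normalization the unique $\tilde K$-containing cluster has the form $\tilde K \cup K'$ with $K' \subseteq V$, which is automatically a clique in $G$, and the total cluster-edge count is bounded above by $\binom{n^3 + |K'|}{2} + \binom{n - |K'|}{2}$. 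Assuming $|K'| \leq t - 1$ then forces $(t-1) n^3 + \binom{t-1}{2} + \binom{n}{2} \geq t n^3$, which fails since the $O(n^2)$ slack is dominated by $n^3$; hence $|K'| \geq t$. The main technical obstacle is carrying out the swap argument cleanly and verifying that the cubic blow-up of $\tilde K$ leaves enough slack to beat the $O(n^2)$ error terms in both parts — precisely the reason the construction uses the power $|V|^3$ rather than a smaller one.
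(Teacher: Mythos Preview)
Your proof is correct. For part~(a) the approach is essentially the paper's: both arguments extract a vertex $w\in\tilde K$ with at least $t$ strong $V$-neighbours and observe that this neighbour set must be a clique by STC. You reach this via a normalization step (making all $\tilde K$-internal edges strong) followed by averaging; the paper skips the normalization, simply uses $|S_L\cap E_{\tilde G}(\tilde K)|\le\binom{n^3}{2}$ to get $\ge tn^3$ strong edges outside $\tilde K$, and then groups $\tilde K$ into ``families'' with identical strong $V$-neighbourhoods to bound $\max_i|K_{F_i}|$. The two presentations differ cosmetically but the core counting is the same.

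For the backward direction of~(b), however, you take a genuinely different and more laborious route. The paper dispatches it in one line: any cluster subgraph with at least $\binom{n^3}{2}+tn^3$ cluster edges \emph{is} an STC-labeling with that many strong edges, so~(b) follows immediately from~(a). Your swap argument (merging all $\tilde K$-pieces into the cluster with largest $V$-part) and the subsequent edge count are correct and self-contained, but they reprove from scratch what the trivial implication ``\textsc{CD} solution $\Rightarrow$ STC-labeling'' gives for free. The paper's shortcut is cleaner and, fittingly, exploits precisely the relationship between the two problems that the surrounding section is about.
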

  \begin{proof}
$(a)$ Let $V' \subseteq V$ be a clique on $t$ vertices in $G$. Then we obtain an STC-labeling~$L=(S_L,W_L)$ for~$\tilde{G}$ with at least~$\binom{n^3}{2} + t \cdot n^3$ strong edges by defining~$S_L := E(V' \cup \tilde{K})$. Note that $L$ is a cluster labeling, so it obviously satisfies~STC. From~$|\tilde{K}|=n^3$ we also get that $|S_L| = \binom{t}{2} + \binom{n^3}{2} + t \cdot n^3 \geq \binom{n^3}{2} + t \cdot n^3$.

Conversely, let there be an STC-labeling $L=(S_L,W_L)$ for $\tilde{G}$ with at least~$\binom{n^3}{2} + t \cdot n^3$ strong edges. We show that there is a clique $V'$ of size at least $t$ in $G$. Observe that, since $|E_{\tilde{G}}(\tilde{K})| = \binom{n^3}{2}$, there are at least $t \cdot n^3$ edges in~$S_L \setminus E_{\tilde{G}}(\tilde{K})$.

We will call two vertices $v_1, v_2 \in \tilde{K}$ \textit{member of the same family} $F$, if $v_1$ and $v_2$ have the exact same strong neighbors in $V$. For each family $F$, the set of strong neighbors of $F$ forms a clique. Otherwise, if there were two non-adjacent~strong neighbors $u$ and $w$ of any $v \in F$, the edges $\lbrace u, v \rbrace$ and $\lbrace v, w \rbrace$ form a strong $P_3$ under~$L$. Let $K_F$ denote the set of strong neighbors of a family~$F$.

Let $F_1, \dots , F_p \neq \emptyset$ be the families of vertices in $\tilde{K}$. It holds that 
\begin{align*}
|S_L \setminus (E_{\tilde{G}}(\tilde{K}) \cup E) | = \sum_{i=1}^{p} |F_i| \cdot |K_{F_i}| \leq \max_{i} |K_{F_i}| \cdot \sum_{i=1}^{p} |F_i| = \max_{i} |K_{F_i}| \cdot n^3 \text{.}
\end{align*}
Since $|E| \leq \binom{n}{2}$ it holds that $(t-1) \cdot n^3 < t \cdot n^3 - \binom{n}{2} \leq |S_L \setminus (E_{\tilde{G}}(\tilde{K}) \cup E) | $. We conclude that $(t-1) < \max_{i} |K_{F_i}|$. Hence, there is a clique of size at least $t$ in~$G$.

$(b)$ Let $V' \subseteq V$ be a clique on $t$ vertices in $G$. Since the labeling $L$ from the proof of Claim $(a)$ is a cluster labeling, there must be a solution for \textsc{CD} on $\tilde{G}$ with at least $\binom{n^3}{2} + t \cdot n^3$ cluster edges.

Now, let there be a solution for \textsc{CD} on $\tilde{G}$ such that there are at least $\binom{n^3}{2} + t \cdot n^3$ cluster edges. We define an STC-labeling $L=(S_L,W_L)$ for~$\tilde{G}$ by defining $S_L$ as the set of cluster edges in the solution. Then, there is an STC-labeling with at least $\binom{n^3}{2} + t \cdot n^3$ strong edges. It then follows by $(a)$, that $G$ has a clique of size at least $t$, which completes the proof. \qed \end{proof}

\begin{lemma} \label{Proposition Gtilde erhaelt freiheit}
Let $H \in \lbrace 3K_1, 2K_2, \text{co-diamond}, \text{co-paw}, 4 K_1 \rbrace$. If a graph $G$ is $H$-free, then the expanded graph $\tilde{G}$ is $H$-free as well.
\end{lemma}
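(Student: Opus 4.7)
The plan is to exploit a single structural feature shared by all the graphs in the list: none of them contains a \emph{universal vertex}, that is, a vertex adjacent to all other vertices of the graph. Combined with the fact that every vertex of the attached clique $\tilde{K}$ \emph{is} universal in $\tilde{G}$, this immediately rules out any induced copy of $H$ that meets $\tilde{K}$, reducing everything to the $H$-freeness of $G$.

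First I would record the key structural observation: every $v \in \tilde{K}$ is adjacent to every other vertex of $\tilde{G}$, since $\tilde{K}$ is a clique and, by construction of $\tilde{G}$, each vertex of $\tilde{K}$ is joined to every vertex of $V$. Thus $v$ is a universal vertex of $\tilde{G}$. Next I would check, by direct inspection of Fig.~\ref{fig:small-graphs}, that none of the graphs $H \in \{3K_1, 4K_1, 2K_2, \text{co-diamond}, \text{co-paw}\}$ contains a vertex of degree $|V(H)|-1$: the degree sequences are $(0,0,0)$ for $3K_1$, $(0,0,0,0)$ for $4K_1$, $(1,1,1,1)$ for $2K_2$, $(1,1,0,0)$ for co-diamond, and $(2,1,1,0)$ for co-paw, so in each case the maximum degree is strictly less than $|V(H)|-1$.

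Now suppose for contradiction that $\tilde{G}$ contains an induced subgraph isomorphic to $H$ on a vertex set $X$. If $X \cap \tilde{K} \neq \emptyset$, pick any $v \in X \cap \tilde{K}$; since $v$ is universal in $\tilde{G}$, it is adjacent to every other vertex of $X$, so $v$ has degree $|X|-1 = |V(H)|-1$ in $\tilde{G}[X]$. This contradicts the degree check above. Hence $X \subseteq V$, and then $\tilde{G}[X] = G[X]$ is an induced copy of $H$ in $G$, contradicting the $H$-freeness of $G$.

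There is no real obstacle here; the argument is essentially a one-line observation about universal vertices together with a routine case inspection of the five forbidden subgraphs, so the write-up amounts to making this observation explicit and listing the degree sequences.
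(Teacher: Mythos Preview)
Your argument is correct and follows essentially the same route as the paper's proof: both hinge on the fact that every vertex of $\tilde{K}$ is universal in $\tilde{G}$, forcing any induced copy of $H$ to lie entirely in $V$. The only cosmetic difference is that the paper observes once that all five graphs $H$ are disconnected (and hence have no universal vertex), whereas you verify the absence of a universal vertex by listing degree sequences; your formulation is marginally more general but amounts to the same one-line observation here.
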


\begin{proof}
Note that each $H \in \lbrace 3K_1, 2K_2, \text{co-diamond}, \text{co-paw}, 4 K_1 \rbrace$ is disconnected. Assume $\tilde{G}$ has $H$ as an induced subgraph. Since $G$ is $H$-free and we do not add any edges between vertices of $G$ during the construction of $\tilde{G}$, at least one of the vertices of this induced subgraph lies in $\tilde{K}$. By construction, each vertex in $\tilde{K}$ is adjacent to every other vertex in $\tilde{G}$, which contradicts the fact that the induced subgraph is disconnected. \qed
\end{proof}

We next use Lemmas \ref{Clique Reduction STC and CD} and \ref{Proposition Gtilde erhaelt freiheit} to obtain NP-hardness results for~\textsc{STC} and~\textsc{CD}. Note that in case of~$H \in \{3 K_1, 2 K_2\}$ the NP-hardness for \textsc{CD} is already known~\cite{GHN13}. Moreover, in case of~$H \in \{2K_2,C_4\}$, the NP-hardness of \textsc{STC} on~$H$-free graphs is implied by the NP-hardness of~\textsc{STC} on split graphs~\cite{konstantinidis_et_alL}.

\begin{lemma} \label{NP-hardness}
\textsc{STC} and \textsc{CD} remain NP-hard on $H$-free graphs if 
$$H \in \lbrace 3K_1, 2K_2, \text{co-diamond}, \text{co-paw}, 4 K_1, \text{claw}, C_4, \text{diamond}, K_4 \rbrace\text{.}$$
\end{lemma}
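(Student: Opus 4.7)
The plan is to partition the nine values of $H$ into three groups and address each group with a dedicated reduction. \textbf{Group~1} consists of $H \in \{3K_1, 4K_1, \text{claw}, \text{co-diamond}, \text{co-paw}\}$. The key observation is that each such $H$ contains an induced $3K_1$: trivially for $3K_1$ and $4K_1$; the three leaves of the claw are pairwise non-adjacent; in the co-diamond ($K_2 + 2K_1$) the two isolated vertices together with either endpoint of the edge form an independent triple; and in the co-paw ($P_3 + K_1$) the two endpoints of the path together with the isolated vertex do. Hence every $3K_1$-free graph is automatically $H$-free for each $H$ in this group. Since \textsc{Independent Set} is known to be NP-hard on triangle-free graphs, its complement \textsc{Clique} is NP-hard on $3K_1$-free graphs. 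Given such a \textsc{Clique} instance $(G,t)$, I form the expanded graph~$\tilde{G}$; by Lemma~\ref{Proposition Gtilde erhaelt freiheit} the graph $\tilde{G}$ is $3K_1$-free and hence $H$-free, and by Lemma~\ref{Clique Reduction STC and CD} the instance $(\tilde{G}, \binom{n^3}{2}+t\cdot n^3)$ is a yes-instance for STC and for CD iff $(G,t)$ is a yes-instance of \textsc{Clique}. This yields the desired polynomial-time reduction for every $H$ in Group~1.

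\textbf{Group~2} contains only $H = 2K_2$. The same expansion works, but I now start from \textsc{Clique} on $2K_2$-free graphs, whose NP-hardness follows from that of \textsc{Independent Set} on $C_4$-free graphs via complementation, using $\overline{2K_2} = C_4$. Lemma~\ref{Proposition Gtilde erhaelt freiheit} now guarantees that $\tilde{G}$ is $2K_2$-free, and Lemma~\ref{Clique Reduction STC and CD} again supplies correctness.

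\textbf{Group~3} consists of $H \in \{C_4, \text{diamond}, K_4\}$. Here the expansion cannot be used because the attached clique of size $n^3$ together with its universal neighbours inside $V$ immediately creates copies of $K_4$, of the diamond, and of $C_4$. Instead I appeal to the results of \cite{KU12}: both STC and CD are NP-hard on graphs of maximum degree four, and the gadgets used in those reductions already produce graphs that are simultaneously $C_4$-free, diamond-free, and $K_4$-free. The main technical obstacle of the whole lemma is therefore the verification that the constructions of \cite{KU12} possess these additional induced-subgraph-freeness properties, which has to be carried out by inspection of the gadgets. The small-graph containment checks for Group~1 and the complementation argument for Group~2 are otherwise short and routine.
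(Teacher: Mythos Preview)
Your handling of Groups~1 and~2 is correct and in fact slightly cleaner than the paper's: the paper treats each $H\in\{3K_1,2K_2,\text{co-diamond},\text{co-paw},4K_1\}$ separately, invoking NP-hardness of \textsc{Independent Set} on $K_3$-, $C_4$-, diamond-, paw-, and $K_4$-free graphs respectively, and then applies Lemma~\ref{Proposition Gtilde erhaelt freiheit} for each $H$ individually. Your observation that all graphs in Group~1 contain an induced $3K_1$ collapses five separate appeals to Poljak into one, at the cost of nothing. The claw is handled the same way in both arguments.

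There is, however, a genuine gap in Group~3 for \textsc{STC}. The reference~\cite{KU12} concerns \textsc{Cluster Editing}/\textsc{Cluster Deletion} only; it does not prove anything about \textsc{STC}. So while your argument for \textsc{CD} on $\{C_4,\text{diamond},K_4\}$-free graphs is fine (modulo the gadget inspection you flag), the sentence ``both STC and CD are NP-hard on graphs of maximum degree four'' cannot be attributed to~\cite{KU12}, and the separate \textsc{STC} hardness result for degree-four graphs comes from a different construction~\cite{KNP17L} whose $C_4$-, diamond-, and $K_4$-freeness you have not checked. The paper closes this gap differently: it first gets \textsc{CD} hardness on a graph class that is simultaneously $C_4$-, diamond-, and $K_4$-free via~\cite{KU12}, and then invokes Theorem~\ref{STC=CD}, which says that on diamond-free graphs every optimal STC-labeling is an optimal cluster labeling. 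Hence on the instances produced by~\cite{KU12} the optimal values for \textsc{STC} and \textsc{CD} coincide, and the very same reduction establishes NP-hardness of \textsc{STC}. You should add this step (or an equivalent bridge) to make Group~3 go through for \textsc{STC}.
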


\begin{proof}
\textbf{Case 1:} $H \in\lbrace 3K_1, 2K_2, \text{co-diamond}, \text{co-paw}, 4 K_1 \rbrace$. \textsc{Clique} remains NP-hard on $3K_1$-,$2K_2$-, co-diamond-, co-paw- and $4 K_1$-free graphs since \textsc{Independent Set} is NP-hard on the complement graphs: $K_3$-, $C_4$-, diamond-, paw-, and $K_4$-free graphs~\cite{poljak1974}. By Lemma \ref{Clique Reduction STC and CD}, $(G,k) \mapsto (\tilde{G}, m - ( \binom{n^3}{2} + k \cdot n^3 ))$ is a polynomial-time reduction from \textsc{Clique} to \textsc{STC} and \textsc{CD}. From Lemma \ref{Proposition Gtilde erhaelt freiheit} we know that if $G$ is $ 3K_1$-, $2K_2$-, $\text{co-diamond}$-, $\text{co-paw}$- or $4 K_1$-free, so is $\tilde{G}$. Thus, \textsc{STC} and \textsc{CD} remain NP-hard on $H$-free graphs.

\textbf{Case 2:} $H = \text{claw}$. Since both problems are NP-hard on $3K_1$-free graphs due to Case 1, it follows, that they are NP-hard on claw-free graphs.


\textbf{Case 3:} $H \in\lbrace C_4, \text{diamond}, K_4 \rbrace$. There is a reduction from \textsc{3Sat} to \textsc{CD}  producing a $C_4$-, $K_4$-, and diamond-free  \textsc{CD} instance~\cite{KU12}. By Theorem \ref{STC=CD}, there is an optimal cluster labeling for \textsc{STC} on diamond-free graphs, so the reduction works also for~\textsc{STC}. Thus, \textsc{STC} and \textsc{CD} remain NP-hard on $C_4$-, $K_4$-, and diamond-free graphs. \qed
\end{proof}
It remains to show NP-hardness on co-claw-free graphs.
\iflong Since \textsc{Independent Set} can be solved in polynomial time on claw-free graphs \cite{bihi80}, we can solve \textsc{Clique} on co-claw-free graphs in polynomial time. Hence, \else Since \textsc{Clique} can be solved  in polynomial time on co-claw-free graphs~\cite{bihi80}
\fi we cannot reduce from \textsc{Clique} \iflong as in the proof of Lemma~\ref{NP-hardness}\else in this case\fi.
Instead, we reduce from the following~problem.

\begin{center}
	\begin{minipage}[c]{.9\linewidth}
		\textsc{3-Clique Cover}\\
		\textbf{Input}: A graph $G=(V,E)$
		\\
		\textbf{Question}: Can $V$ be partitioned into three cliques $K_1, K_2$, and $K_3$?
	\end{minipage}
\end{center}

\textsc{3-Clique Cover} is NP-hard on co-claw-free graphs\iflong, since \textsc{3-Colorability} is NP-hard on claw-free graphs~\cite{Holyer}.\fi

\begin{lemma}
  \label{co-claw-free hardness}
  \textsc{STC} and \textsc{CD} remain NP-hard on co-claw-free graphs.
\end{lemma}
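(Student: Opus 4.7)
The plan is to give a polynomial-time reduction from \textsc{3-Clique Cover} on co-claw-free graphs (which is NP-hard as argued just before the lemma) to both \textsc{STC} and \textsc{CD} on co-claw-free graphs. Given a co-claw-free instance $G=(V,E)$ of \textsc{3-Clique Cover}, I will construct a co-claw-free graph $G'$ together with a threshold $\tau$ such that $G$ has a 3-clique cover if and only if $G'$ admits an STC-labeling with at least $\tau$ strong edges, and also if and only if $G'$ admits a clustering with at least $\tau$ cluster edges. Both problems can be handled in one stroke because every clustering is in particular an STC-labeling with the same number of ``good'' edges: the forward direction for STC follows from that for CD, and a single upper bound in the STC setting implies the matching upper bound for CD.

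The construction augments $G$ by three ``anchor'' gadgets $A_1,A_2,A_3$, each fully adjacent to every vertex of $V$ and of polynomial size in $n=|V|$. A 3-clique cover $K_1,K_2,K_3$ of $V$ immediately yields the clustering $\{A_i\cup K_i\}_{i=1,2,3}$ of $G'$, since each $A_i\cup K_i$ is a clique in $G'$; counting the edges of these clusters gives exactly $\tau$ cluster edges and hence $\tau$ strong edges. For the backward direction, I will classify the strong edges of an arbitrary STC-labeling of $G'$ according to whether they lie inside some $A_i$, between some $A_i$ and $V$, or inside $V$. The STC property forces each vertex's strong neighborhood to be a clique in $G'$, so every $v\in V$ can have strong edges to at most one $A_i$, and its strong neighbors in $V$ must form a clique in $G$. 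Summing the contributions type by type will yield an upper bound of $\tau$, and equality will force, for each $i$, the attached set of vertices of $V$ to be a clique in $G$ and these three sets to partition $V$, giving the desired 3-clique cover.

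The main obstacle is designing the anchors $A_i$ so that $G'$ remains co-claw-free while still enforcing the 3-partition structure. The direct analogue of the expansion $\tilde G$ used in Lemma~\ref{Clique Reduction STC and CD}, namely three large disjoint cliques attached to $V$, fails co-claw-freeness: three vertices of $A_1$ together with any vertex of $A_2$ form an induced $K_3+K_1$. The resolution is to keep each $A_i$ small enough to contain no triangle (for instance $|A_i|\le 2$) and compensate by taking sufficiently many parallel copies of the anchor gadget, or alternatively to add selected adjacencies between the $A_i$ that destroy the forbidden co-claws without collapsing the three anchors into a single clique. Once such a gadget is fixed, verifying co-claw-freeness of $G'$ is a routine case analysis that relies on the co-claw-freeness of $G$ and the (near-)universality of anchor vertices, whereas the tight upper bound and its equality conditions are the delicate technical part of the proof.
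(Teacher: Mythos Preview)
Your overall plan---reduce from \textsc{3-Clique Cover} on co-claw-free graphs, attach three anchor gadgets that are universal to $V$, and bound the three types of strong edges separately---is exactly the paper's strategy, and you correctly identify the central obstacle: three large pairwise non-adjacent cliques attached to $V$ produce an induced $K_3+K_1$.

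However, neither of your two proposed fixes is carried through, and each has a genuine problem. Your first option (take $|A_i|\le 2$ and use many parallel copies) does preserve co-claw-freeness, but it destroys the reduction: with many anchor pieces, a vertex $v\in V$ can attach strongly to any one of them, so in the backward direction you only recover a partition of $V$ into an \emph{unbounded} number of cliques, not three. Nothing in your counting forces the number of parts to be at most three, so you cannot conclude a $3$-clique cover. Your second option (add selected adjacencies between the $A_i$) is in fact what the paper does, but you leave it unspecified, and your sketched counting step ``every $v\in V$ can have strong edges to at most one $A_i$'' \emph{fails} once the $A_i$ are no longer pairwise non-adjacent---so your upper-bound argument does not survive the very modification you propose.

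The paper's resolution is to take $|K_i|=n^3$ and make $v_{c,i}$ adjacent to $v_{d,j}$ whenever $c\ne d$; then every anchor vertex is non-adjacent to only two others, which kills all co-claws, while the $n^3$ pairwise non-adjacent triples $\{v_{c,1},v_{c,2},v_{c,3}\}$ are what drive the $3$-partition. The upper bound is then argued per triple rather than per $v\in V$: the strong $V$-neighbourhoods of $v_{c,1},v_{c,2},v_{c,3}$ are three disjoint cliques in $G$, and if $G$ has no $3$-clique cover at least one vertex of $V$ is missed by each triple, costing at least one strong edge per triple. Over $n^3$ triples this loss dominates the at most $\binom{n}{2}$ strong edges inside $V$, giving the contradiction. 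This is the missing technical idea in your proposal.
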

\begin{proof}
We give a reduction from \textsc{3-Clique Cover} on co-claw-free graphs to \textsc{STC} and \textsc{CD} on co-claw free graphs.
Let $G=(V,E)$ be a co-claw-free instance for \textsc{3-Clique Cover}. We construct a co-claw-free \textsc{STC} instance $(G'=(V',E'),k)$, which is an equivalent \textsc{CD} instance, as follows.

We define three vertex sets $K_1$,$K_2$, and~$K_3$. Every $K_i$ consists of exactly $n^3$ vertices $v_{1,i}, \dots, v_{n^3,i}$. We set $V' := V \cup K_1 \cup K_2 \cup K_3$. Moreover, we define edges from every vertex in $K_1 \cup K_2 \cup K_3$ to every vertex in $V$ and edges of the form $\lbrace v_{c,i}, v_{d,j} \rbrace$, where $c \neq d$. The set $E'$ is the union of those edges and $E$. Note that this makes each $K_i$ a clique of size $n^3$. We set $k:= |E'| - (3 \cdot \binom{n^3}{2} + n^4)$\iflong, and let~$\mathcal{K} := K_1 \cup K_2 \cup K_3$ denote the union of these cliques\fi. It remains to prove the following three claims.
\begin{enumerate}[$(a)$]
\item $G'$ is co-claw-free.
\item $G \mapsto (G',k)$ is a correct reduction from \textsc{3-Clique Cover} to~\textsc{STC}.
\item $G \mapsto (G',k)$ is a correct reduction from \textsc{3-Clique Cover} to~\textsc{CD}.
\end{enumerate}

 $(a)$ A co-claw consists of a triangle and an isolated vertex. Assume that $G'$ has a co-claw as an induced subgraph. Then there is a triangle on some vertices $u_1, u_2, u_3 \in V'$ and a vertex $w$ such that $w$ has no edge to one of the $u_i$.

\textbf{Case 1:} $w \in \mathcal{K}$. Without loss of generality we assume $w= v_{p,1}$ for some $p=1, \dots, n^3$. By the construction of $G'$, the vertex $w$ has edges to every other vertex of $G'$ except $v_{p,2}$ and $v_{p,3}$. Hence, there cannot be three vertices in $G'$, which are not adjacent to $w$, which contradicts the fact that $w$ is the isolated vertex of an induced co-claw.

\textbf{Case 2:} $w \in V$. Since $G$ has no induced co-claw, we assume without loss of generality that~$u_1$ lies in $\mathcal{K}$. Then, by construction of $G'$ there is an edge $\lbrace w, u_1 \rbrace \in E'$, which contradicts the fact that $w$ is the isolated vertex of an induced~co-claw.

$(b)$ To show that $G \mapsto (G',k)$ is a correct reduction from \textsc{3-Clique Cover} to \textsc{STC}, we prove that $G$ has a clique cover of size three if and only if there is an STC-labeling $L=(S_L,W_L)$ for $G'$ with $|S_L| \geq 3 \cdot \binom{n^3}{2} + n^4$.

Let $G$ have a clique cover of size three. Then, there are three disjoint cliques $V_1, V_2, V_3$ in $G$ such that $V_1 \cup V_2 \cup V_3 = V$.  We define an STC-labeling $L=(S_L,W_L)$ on $G'$ with at least $(3 \cdot \binom{n^3}{2}+n^4)$ strong edges by setting $S_L := \bigcup_{i=1}^3 E_{G'}(K_i \cup V_i)$. Since all $K_i \cup V_i$ are disjoint cliques, $L$ is an STC-labeling. Moreover, there are at least $(3 \cdot \binom{n^3}{2}+n^4)$ edges in $S_L$ since
\begin{align*}
\sum_{i=1}^3|E_{G'}(V_i \cup K_i)| &= \sum_{i=1}^3 \left( \binom{|V_i|}{2} + \binom{n^3}{2} + |V_i| \cdot n^3 \right)\\
&\geq 3 \cdot \binom{n^3}{2} + n^3 \sum_{i=1}^3 |V_i|\\
&= 3 \cdot \binom{n^3}{2} + n^4.
\end{align*}

Conversely, let $L=(W_L, S_L)$ be an STC-labeling for $G'$ with $|S_L| \geq 3 \cdot \binom{n^3}{2} + n^4$. Assume towards a contradiction that $G$ does not have a clique cover of size three or less. We start with the following observation.

\begin{Observation} \label{Observation about Union of Ki}
There are at most $3 \cdot \binom{n^3}{2}$ strong edges in $E_{G'}(\mathcal{K})$.
\end{Observation}

\begin{proof}[of Observation \ref{Observation about Union of Ki}]
We first show that each vertex $v \in \mathcal{K}$ has at most $n^3-1$ strong neighbors in~$\mathcal{K}$. Without loss of generality assume that $v=v_{1,1} \in K_1$. By the construction of $G'$, the vertex $v_{1,1}$ has exactly $3 \cdot (n^3-1)$ neighbors in $\mathcal{K}$ since it is adjacent to all of $\mathcal{K}$ except $v_{1,2}$, $v_{1,3}$ and itself. Assuming $v_{1,1}$ has more than $n^3-1$ strong neighbors, it follows by the pigeonhole principle that there is a number $d = 2, \dots ,n^3$ such that two vertices $v_{d,i}$ and $v_{d,j}$ with $i \neq j$ are strong neighbors of $v_{1,1}$. Since $\lbrace v_{d,i}, v_{d,j} \rbrace \not \in E'$, the edges $\lbrace v_{d,i}, v_{1,1} \rbrace$ and $\lbrace v_{1,1}, v_{d,j} \rbrace$ form a strong $P_3$ under $L$, which contradicts the fact that $L$ satisfies~STC.

Since $|\mathcal{K}|= 3 \cdot n^3$ and each vertex of $\mathcal{K}$ has at most $n^3-1$ strong neighbors in~$\mathcal{K}$, there are at most $\frac{3 n^3 \cdot (n^3-1)}{2} = 3 \cdot \binom{n^3}{2}$ strong edges between vertices of $\mathcal{K}$. $\hfill \Diamond$
\end{proof}


\begin{Observation} \label{Observation about the edges between}
There are at most $n^4 - n^3$ strong edges in $E_{G'}(\mathcal{K},V)$.
\end{Observation}

\begin{proof} [of Observation \ref{Observation about the edges between}]
For $v_{c,i} \in \mathcal{K}$, let $\mathcal{N}(v_{c,i})= \lbrace w \in V \mid \lbrace v_{c,i}, w \rbrace \in S_L \rbrace$ be the set of strong neighbors of $v_{c,i}$ that lie in $V$. Obviously, each $\mathcal{N}(v_{c,i})$ forms a clique, since otherwise there would be a strong $P_3$ under $L$.

Consider a triple of vertices $v_{c,1},v_{c,2}, v_{c,3} \in V$ for some fixed $c = 1, \dots , n^3$. By the construction of $G'$, those three vertices are pairwise non-adjacent. It follows that $\mathcal{N}(v_{c,1})$, $\mathcal{N}(v_{c,2})$ and $\mathcal{N}(v_{c,3})$ are pairwise disjoint. Otherwise, if there is a vertex $w \in \mathcal{N}(v_{c,1}) \cap \mathcal{N}(v_{c,2})$, the edges $\lbrace v_{c,1}, w \rbrace$ and $\lbrace w, v_{c,2} \rbrace$ form a strong $P_3$ under~$L$.

Since $\mathcal{N}(v_{c,1})$, $\mathcal{N}(v_{c,2})$, and $\mathcal{N}(v_{c,3})$ are disjoint cliques in $V$, the assumption that there is no clique cover of size at most three leads to the fact that for each triple $v_{c,1}, v_{c,2}, v_{c,3}$ we can find a vertex $w \in V$ such that $w \not \in \mathcal{N}(v_{c,1}) \cup \mathcal{N}(v_{c,2}) \cup \mathcal{N}(v_{c,3})$. It follows, that there are at most $n-1$ strong edges in $E_{G'}(V, \lbrace v_{c,1}, v_{c,2}, v_{c,3} \rbrace)$. Since $\mathcal{K}$ consists of exactly $n^3$ such triples, there are at most $n^3 \cdot (n-1)= n^4 - n^3$ strong edges in $E_{G'}(\mathcal{K}, V)$. $\hfill \Diamond$
\end{proof}
Observations~\ref{Observation about Union of Ki} and~\ref{Observation about the edges between} together with the fact that $|E_{G'}(V)|=|E|=\binom{n}{2}< n^3$ gives us the following inequality:
\begin{align*}
|S_L| &= |S_L \cap E_{G'}(\mathcal{K})| + |S_L \cap E_{G'}(\mathcal{K}, V)| + |S_L \cap E_{G'}(V)|\\
&\leq \, 3 \cdot \binom{n^3}{2} + (n^4 - n^3) + \binom{n}{2}\\
&< \, 3 \cdot \binom{n^3}{2} + n^4 \text{.}
\end{align*}
This inequality contradicts the fact that $|S_L| \geq 3 \cdot \binom{n^3}{2} + n^4$. Hence, $G$ has a clique cover of size three or less, which proves the correctness of the reduction.

$(c)$ To show that $G \mapsto (G',k)$ is a correct reduction from \textsc{3-Clique Cover} to \textsc{CD} we need to prove that $G$ has a clique cover of size three if and only if there is a cluster-subgraph $\mathcal{G}'=(V',\mathcal{E}')$ of $G'$ such that $|\mathcal{E}'| \geq 3 \cdot \binom{n^3}{2} + n^4$.

Let $G$ have a clique cover of size three. Since the labeling $L$ from the proof of~$(b)$ is a cluster labeling, there must be a solution for \textsc{CD} with at least $3 \cdot \binom{n^3}{2} + n^4$ cluster edges.

Now, let $\mathcal{G}'=(V',\mathcal{E}')$ be a cluster subgraph of $G'$ such that $|\mathcal{E}'| \geq 3 \cdot \binom{n^3}{2} + n^4$. We define an STC-labeling $L=(W_L,S_L)$ for $G'$ by $S_L:=\mathcal{E}'$. Then, there is an STC-labeling for $G'$ with at least $3 \cdot \binom{n^3}{2} + n^4$ strong edges. By $(b)$, $G$ has a clique cover of size three or less, which proves the correctness of the reduction.

From $(a)$, $(b)$, and $(c)$ we conclude that \textsc{STC} and \textsc{CD} remain NP-hard on co-claw free graphs. \qed
\end{proof}

From Lemmas \ref{poly-time solvable}, \ref{NP-hardness}, and \ref{co-claw-free hardness} we obtain the following theorem.

\begin{theorem} 
The problems \textsc{CD} and \textsc{STC} are
\begin{enumerate}
\item[•] solvable in polynomial time on $H$-free graphs, if $H \in \lbrace K_3, P_3, K_2 + K_1, P_4, \text{paw} \rbrace$, and
\item[•]  NP-hard on $H$-free graphs, if $H \in \lbrace 3K_1, K_4, 4K_1, C_4, 2K_2, \text{diamond}, \text{co-diamond},$\\
$\text{claw},\text{co-claw}, \text{co-paw} \rbrace$.
\end{enumerate}
\end{theorem}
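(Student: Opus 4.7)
The plan is to observe that this final theorem is a direct corollary of the three lemmas just established in this subsection, and simply to assemble them into a single statement. No new argument is required; the task is purely organizational.

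First, I would dispatch the polynomial-time cases. Lemma~\ref{poly-time solvable} already lists exactly the five graphs $H \in \lbrace K_3, P_3, K_2 + K_1, P_4, \text{paw} \rbrace$ and asserts polynomial-time solvability of both \textsc{STC} and \textsc{CD} on $H$-free graphs, so the first bullet of the theorem follows verbatim.

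Second, I would dispatch the NP-hard cases. Lemma~\ref{NP-hardness} handles the nine cases $H \in \lbrace 3K_1, 2K_2, \text{co-diamond}, \text{co-paw}, 4 K_1, \text{claw}, C_4, \text{diamond}, K_4 \rbrace$, leaving only $H = \text{co-claw}$, which is precisely covered by Lemma~\ref{co-claw-free hardness}. Together these account for all ten graphs in the NP-hard bullet.

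There is no real obstacle here: the substantive work (the reductions and the case analysis that transfer hardness or tractability across $H$-free classes) was done in the preceding lemmas, and the final step is only to observe that the ten graphs in the two bullets form a complete list of all graphs $H$ on three or four vertices (up to isomorphism), so that the two bullets together yield a genuine dichotomy. Thus the proof is a single sentence of the form: \emph{The statement follows by combining Lemmas~\ref{poly-time solvable}, \ref{NP-hardness}, and \ref{co-claw-free hardness}, which together cover all graphs $H$ of order at most four.} \qed
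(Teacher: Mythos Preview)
Your proposal is correct and matches the paper's approach exactly: the paper itself gives no proof for this theorem and simply states that it follows from Lemmas~\ref{poly-time solvable}, \ref{NP-hardness}, and~\ref{co-claw-free hardness}. One small slip: you write ``the ten graphs in the two bullets,'' but there are fifteen graphs in total (five polynomial cases and ten NP-hard cases), which is indeed the full list of graphs on three or four vertices up to isomorphism.
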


\section{Outlook}
\label{sec:outlook}
Many open questions remain. For example, it is open whether \textsc{Strong Triadic Closure} can
be solved in~$2^{\Oh(\ell)}\cdot \poly(n)$ time. Even an algorithm with running
time~$2^{\Oh(n)}$ is not known at the moment.  For a generalization of~\textsc{STC}, where each vertex has a list of possible incident strong colors, there is no algorithm that solves the problem in~$2^{o(n^2)}$ time if the exponential time hypothesis (ETH) is true~\cite{BGKS19}. It is open if this lower bound can be transferred to~\textsc{STC}.

Furthermore, it is open whether we can
solve~\textsc{Strong Triadic Closure} faster than in~$\mathcal{O}(1.28^k+nm)$, the running time
that is implied by the parameter-preserving reduction to \textsc{Vertex Cover}. It seems that
any faster algorithm would need to use new insights into \textsc{Strong Triadic Closure}.
Moreover, a complete characterization of the graphs in which no optimal \textsc{Cluster
  Deletion} solution is an optimal solution for \textsc{Strong Triadic Closure} is open. Such a
characterization would be also interesting from the application point of view as it would
describe when triadic closure gives a different model than clustering. Concerning
approximability, a factor-2 approximation for minimizing the number of weak edges in \textsc{Strong Triadic Closure} is implied by
the reduction to \textsc{Vertex Cover}. It is open whether there is a polynomial-time
approximation algorithm with a factor smaller than~$2$. It is also open whether optimal solutions for
\textsc{Cluster Deletion} give a constant-factor approximation for the minimization variant of
\textsc{Strong Triadic Closure}. Finally, we restate the following open question of Golovach
et al.~\cite{Pinar}: is \textsc{Strong Triadic Closure} fixed-parameter tractable when
parameterized by~$\ell-|M|$, where~$\ell$ is the number of strong edges and~$M$ is a maximum
matching in the input graph?



\bibliographystyle{plain} 

\end{document}